\documentclass[a4paper,fleqn]{cas-dc}

\usepackage[numbers]{natbib}
\usepackage{amsmath}
\usepackage{braket}
\usepackage{bm}
\usepackage{tikz}
\usetikzlibrary{arrows.meta}
\usepackage{csquotes}
\usepackage{cuted}
\usepackage{amsthm}

\def\tsc#1{\csdef{#1}{\textsc{\lowercase{#1}}\xspace}}
\tsc{WGM}
\tsc{QE}
\tsc{EP}
\tsc{PMS}
\tsc{BEC}
\tsc{DE}
\newcommand{\dd}{\mathrm{d}}
\newtheorem{proposition}{Proposition}
\newtheorem{corollary}{Corollary}
\begin{document}
\let\WriteBookmarks\relax
\def\floatpagepagefraction{1}
\def\textpagefraction{.001}
\shorttitle{Schr\"odinger's real-valued equation revisited}
\shortauthors{O. Passon and B. Rosenow}

\title[mode=title]{Schr\"odinger's real-valued equation revisited}

\author[1]{Oliver Passon}
\author[2]{Bernd Rosenow}

\affiliation[1]{
  organization={Bergische Universit\"at Wuppertal,
  Fakult\"at f\"ur Mathematik und Naturwissenschaften},
  addressline={Gau{\ss}str. 20},
  postcode={42117},
  city={Wuppertal},
  country={Germany}
}

\affiliation[2]{
  organization={Universit\"at Leipzig,
  Institut f\"ur Theoretische Physik},
  addressline={Br\"uderstra{\ss}e 16},
  postcode={04103},
  city={Leipzig},
  country={Germany}
}

 \begin{abstract}
The Schrödinger equation can be rewritten as a second-order equation for a single real-valued scalar field. Taken at face value, however, this one-field reduction obscures several local structures of quantum mechanics: the Born density and current, the role of momentum as a generator, and the usual derivation of the uncertainty relation. We argue that these difficulties do not show that real variables fail. They show that the reduced equation is not, by itself, a complete local representation of the theory. Its coherent real form is obtained by restoring the underlying Hamiltonian phase-space structure, where the missing component reappears as a canonical partner. In this real phase-space formulation, the standard structures reappear locally, and minimal magnetic coupling reveals an internal \(SO(2)\) gauge structure. Thus the symbol \(i\) can be removed, but the symplectic and complex structure it encodes cannot.
\end{abstract}
\begin{keywords}
real-valued quantum mechanics \sep Schr\"odinger equation \sep
complex structure \sep symplectic structure \sep Born rule
\end{keywords}
\maketitle
\section{Introduction}
Famously, the Schrödinger equation 
 \begin{equation}
 i\hbar\,\partial_t \psi(r,t) = \hat H\psi(r,t)
   \label{equ:se}
\end{equation}
contains the imaginary unit, and its solutions are, in general, complex-valued functions. The question of whether this is unavoidable or whether a real-valued formulation is also possible has been asked repeatedly since the early days. As is little known, Erwin Schrödinger had already proposed a real-valued Schrödinger equation in 1926 \cite{schroedinger26}, but he soon lost interest in this approach.

In 1932, Paul Ehrenfest published a charming paper titled {\em Einige die Quantenmechanik betreffende Erkundigungsfragen} (Some queries related to quantum mechanics) \cite{ehrenfest32} in which he asked, among other things, about the origin of complex numbers in quantum mechanics. A published response was given by Wolfgang Pauli \cite{pauli33}, and we will comment on his reply shortly.

Erwin Schrödinger's early approach to a real-valued formulation was revisited and further developed by Robert Chen in the late 1980s and early 1990s \cite{chen89,chen91}. Chen viewed his work as proof that complex and real-valued formulations are equivalent, and recently, his work has attracted renewed interest \cite{callender2023,makris2025}. 

We should point out that the terminology in this area is somewhat treacherous. The phrase ``real quantum mechanics'' can mean different things. In one sense, it denotes quantum theory formulated directly on real Hilbert spaces, with real state spaces, real observables, and the usual tensor-product rule for composite systems. This is the sense relevant to the recent no-go results of Renou et al. and Weilenmann \emph{et al.} ~\cite{renou2021,Weilenmann2025Partial}.

This is not the sense in which the real-valued equations discussed below are real. Schrödinger's early real-valued equation and Chen's later extension of the same idea do not replace the Hilbert-space postulates of quantum mechanics with a new real-Hilbert-space theory. Rather, they attempt to rewrite ordinary Schrödinger dynamics in terms of real functions.

Within real-variable approaches, one must also distinguish whether the role normally played by the imaginary unit is retained explicitly. Stueckelberg's formulation is a real-Hilbert-space formulation equipped with a distinguished operator $J$ satisfying $J^2=-\mathbb 1$ \cite{stueckelberg1960real}. This operator plays the role of multiplication by $i$ and keeps the analog of the complex phase available from the outset. By contrast, the single-component strategy tries to go further: it eliminates the imaginary part and describes the dynamics with a single real scalar field obeying a second-order equation. Schrödinger's early, soon-abandoned real-valued equation of 1926 is the historical origin of this second strategy, while Chen's work gives its later systematic development.

The present paper is concerned primarily with this last possibility, which we will call the \emph{reduced single-field formulation}: the second-order formulation expressed through a real-valued field and its time derivative, once the imaginary component has been removed. We first take this formulation seriously on its own terms and examine where it succeeds and where it becomes obscure. The problems encountered in the reduced description will then motivate the real phase-space reconstruction developed in the second part of the paper, where the missing component reappears as a canonical momentum. 

The phase-space reconstruction developed below interprets the role of \(i\) geometrically and connects with a well-known tradition of geometric formulations of quantum mechanics, in which the complex Hilbert space, or more precisely its projective space of rays, is regarded as a real manifold equipped with compatible symplectic, metric, and complex structures \cite{AshtekarSchilling1999,Heslot1985,Kibble1979,Strocchi1966}. That tradition was mainly concerned with the relation between classical and quantum mechanics and with possible generalizations of quantum theory, especially in connection with gravity. Here, the direction is different: we do not start from the full complex Hilbert space and geometrize it, but from the reduced Schrödinger-Chen single-field equation.

The aim is therefore not to propose a rival to ordinary complex quantum mechanics, nor to rediscover the real symplectic formulation already known from geometric quantum mechanics. We  use the Schrödinger-Chen reduction as a diagnostic tool: by first removing the imaginary part, we can see which structures the complex notation had been carrying and why the phase-plane structure must be restored.

To delimit the contribution explicitly: we claim no novelty for the lifted phase-space formulation itself, which is a special case of the geometric structure cited above. What we believe to be new are (i) the systematic diagnosis of the reduced single-field formulation, including the precise operator-theoretic sense in which density, current, and momentum become non-local in the Cauchy data $(u,\dot u)$ (Secs.~\ref{sec:prob} and \ref{sec:momentum}), together with its sharpening into the state-level Proposition~\ref{prop:statelevel}; (ii) the magnetic generalization of Chen's equation and its gyroscopic form (Sec.~\ref{sec:mag}); (iii) the correction of the Green-function argument of Ref.~\cite{chen91} (Sec.~\ref{sec:alt}); (iv) an explicitly real construction of Kramers degeneracy (Sec.~\ref{subsec:time_reversal_kramers_real}); and (v) the observation that the reconstruction map fails to factorize for composite systems, together with its bearing on the recent no-go debate (Sec.~\ref{sec:nogo_rqt_comparison}).

Only after this reconstruction is in place will we return, in Sec.~\ref{sec:nogo_rqt_comparison}, to the relation between the present real-variable reformulation, real Hilbert-space quantum theory, and the recent debate over network no-go results.

\section{The real-valued Schrödinger equation\label{sec:chen}}
The first time-dependent wave equation, which Schrödinger published in his fourth communication from 1926, was actually real-valued. In slightly modified notation and with $\psi=u+iv$, Eq.~(4) in \cite{schroedinger26} reads:
\begin{equation}
-\hbar^2 \,\partial^2_t u(r,t) =  \left(-\frac{\hbar^2}{2m}\nabla^2 +V(r)\right)^2u(r,t).\label{equ:schr_real}
\end{equation}
Note that this equation is second order in time and fourth order in the spatial coordinates (i.e., the Laplace operator $\nabla^2$ is squared). For some time, Schrödinger even regarded this equation as fundamental \cite{karam2020b}. Apparently, he abandoned this approach fairly quickly, in part because he was unable to extend it to non-conservative systems.

However, Robert Chen \cite{chen89,chen91} could demonstrate that the generalization for time-dependent potentials is possible. His starting point was to split the wave function into its real and imaginary parts $\psi=u+i v$. The Schrödinger equation \eqref{equ:se}  then implies two real-valued equations:
\begin{subequations}\label{grp}
\begin{align}
   \hbar\,\dot u &= \hat H v \label{equ:coup1}\\
  -\hbar\,\dot v &= \hat Hu. \label{equ:coup2}
\end{align}
\label{equ:coup12}
\end{subequations} 
Given the coupling, the imaginary part in Eq.~\eqref{equ:coup1} can be expressed in terms of the real part by applying the inverse operator $\hat{H}^{-1}$:
\begin{equation}
     v =\hbar\hat{H}^{-1} \,\dot u.\label{equ:idurchr}
\end{equation}
Now, Chen could show  that this specifies a unique relation if $u$ is the real part of a normalized solution of the complex Schrödinger equation. Hence, he claimed that for any system $\hat H$, the real part alone carries  the  full information on the state of a quantum system already \cite{chen89}. This claim requires a precision that matters throughout: it is the trajectory $t\mapsto u(\cdot,t)$, and equivalently, the Cauchy pair $(u,\dot u)$ at a single instant, that determines the state; a snapshot $u(\cdot,t_0)$ alone does not, since Eq.~\eqref{equ:idurchr} involves the time derivative.

The operator $\hat{H}^{-1}$ is to be understood on the subspace where it exists, and zero-modes need to be dealt with separately (the functional-analytic assumptions used throughout are collected in Appendix~\ref{app:hinv}). In general, $\hat{H}^{-1}$ is an integral operator. Its non-locality turns out to be the defining feature of the reduced single-field formulation and will be discussed in more detail in several places below (starting with Sec.~\ref{sec:alt}). 

It is now straightforward to derive the differential equation for $u$. If one takes $\partial_t$ on both sides of Eq.~\eqref{equ:coup1}, one gets:
\begin{eqnarray}
    \hbar\,\ddot u &=& \partial_t(\hat H\,v)\\
    &=& \hat{H} \,\dot v   + \dot V v.
\end{eqnarray}
Here $\partial_t(\hat Hv)=\hat H\dot v+\dot Vv$ because the kinetic part of $\hat H$ carries no explicit time dependence, i.e., $\partial_t\hat H=\dot V$. With Eqs.~\eqref{equ:coup2} and \eqref{equ:idurchr}, one arrives at:
\begin{eqnarray}
 -\hbar^2\,\ddot u   = \hat{H}^2 u - \dot V\hbar^2 \hat{H}^{-1} \dot u. \label{equ:chenfull}
\end{eqnarray}
This is Chen's generalization of the real-valued Schrödinger equation to the non-conservative case. For $\partial_t{\hat H}=0$, we recover the earlier real-valued version of the Schrödinger equation \eqref{equ:schr_real}:
\begin{equation}
    -\hbar^2\,\ddot u = \hat{H}^2u,\label{equ:rse}
\end{equation}
and its time-independent form is simply: 
\begin{equation}
    \hat{H}^2u=E^2u.\label{equ:rse-ti}
\end{equation}

The other key ingredient of quantum mechanics is the probability density $\rho=u^2+v^2$, which transforms into the conserved probability $\mathcal P =\int \dd^3r\,\rho$. With Eq.~\eqref{equ:idurchr}, it can be expressed as a function of $u$ and $\dot u$ only:
\begin{equation}
\mathcal P[u,\dot u]=\int_{\mathbb R^3} \dd^3r \left (u^2+\left(\hbar \hat{H}^{-1}\dot u\right)^2\right ).\label{equ:rhonaiv}
\end{equation}

Clearly, this reduced single-field formulation is mathematically much more demanding than the standard theory. In expanded form, Eq.~\eqref{equ:schr_real} turns into: 
\begin{equation}
\begin{aligned}
-\hbar^2 \frac{\partial^2 u}{\partial t^2}&=\Bigg[ \frac{\hbar^4}{4m^2}\nabla^4-\frac{\hbar^2}{m}V\nabla^2-\frac{\hbar^2}{m}\nabla V\cdot\nabla\\
&\quad-\frac{\hbar^2}{2m}\nabla^2 V+V^2\Bigg]u .
\end{aligned}
\label{eq:expanded_real_schrodinger}
\end{equation}
Terms like $\nabla V(r)$ or $\nabla u$ are introduced by the product rule for the Laplacian ($\nabla^2(fg)=f\nabla^2g+2\nabla f\nabla g+g\nabla^2f$). Notably, this equation also involves derivatives of the potential. 
 
Makris and Dargush~\cite{makris2025} have applied advanced numerical methods to solve this equation for several textbook systems, including the harmonic oscillator, the finite potential well, and the hydrogen atom. Their results agree with the standard complex theory, apart from the sign ambiguity in the eigenvalues that will be discussed in Sec.~\ref{ho}. Within this class of examples, the single-field equation, therefore, reproduces the usual spectral information.

This success should not be overinterpreted, however. The derivation of Eqs.~\eqref{equ:chenfull}--\eqref{equ:rse-ti} assumes that the Hamiltonian appearing in the Schrödinger equation is a real symmetric operator. This is the case for the standard Hamiltonian with a real scalar potential, but it is not true for all physically important systems. The simplest warning comes from magnetic coupling, where the vector potential enters through $-i\hbar\nabla-q\mathbf A$. The following subsection shows this limitation and explains why the naive single-field equation has to be modified before magnetic fields can be included.
 
\subsection{A first warning: magnetic fields\label{ssec:mag}}
The derivation above contains an important restriction. It assumes that the Hamiltonian $\hat H$ is a real symmetric operator. This is true for the usual Schrödinger Hamiltonian
\begin{equation}
\hat H=-\frac{\hbar^2}{2m}\nabla^2+V(\mathbf r)
\label{eq:real_scalar_hamiltonian}
\end{equation}
with a real scalar potential, but it is no longer true in the presence of a magnetic vector potential. For a charged particle, one has
\begin{equation}
\hat H_A=\frac{1}{2m}\bigl(-i\hbar\nabla-q\mathbf A\bigr)^2+V ,
\label{eq:magnetic_hamiltonian_intro}
\end{equation}
which contains terms that are explicitly imaginary in the position representation. For this reason, the direct analog of Eq.~\eqref{equ:rse},
\begin{equation}
-\hbar^2\ddot u=\hat H_A^2u ,
\label{eq:naive_real_second_order_magnetic}
\end{equation}
does not define a real second-order equation for $u$ alone.

This does not mean that magnetic fields cannot be treated in real variables, but only that the naive Chen form is too restrictive. In the magnetic case, one has to decompose the Hamiltonian as
\begin{equation}
\hat H_A=\hat H_R+i\hat H_I
\label{eq:magnetic_H_split_intro}
\end{equation}
and keep track of the way in which $\hat H_I$ mixes the real and imaginary parts of the wave function. The resulting equations are still real, but their structure is different from Eq.~\eqref{equ:rse}. Since this issue is best understood after the Hamiltonian reconstruction has been developed, we postpone the detailed discussion to Sec.~\ref{sec:mag}.

\subsection{The appearance of additional energies\label{ho}}
Since Eq.~\eqref{equ:rse-ti} is an eigenvalue equation for $E^2$, the resulting energies have a sign ambiguity $\pm |E|$. If this were the actual prediction of additional energy levels, the whole approach would be compromised. However, this concern is unfounded. Instead, the additional solutions represent the {\em time-reversed} evolution of the {\em same} state.

To recognize this, we first notice that Eq.~\eqref{equ:rse} depends only on $\hat{H}^2$ and is therefore invariant under the   reflection $t \to -t$. This equation does not distinguish a time direction, and for each solution $u(t)$, the reflected function $u(-t)$ is again a solution without any further transformation. Consequently, the reduced single-field formulation contains two frequency branches. However, the physical state is specified not by \(u\) alone but by \((u,\dot u)\). Eq.~\eqref{equ:idurchr} restores the first-order Schrödinger dynamics and {\em selects} a definite orientation of motion in this two-dimensional real description. In particular, changing the sign of the frequency, which corresponds to \(t\mapsto -t\),
implies:
\begin{equation}
\dot u \mapsto -\dot u \quad \Rightarrow \quad v \mapsto -v.
\end{equation}
Hence, the two branches are related by
\begin{equation}
    (u,v) \;\mapsto\; (u,-v),
\end{equation}
which corresponds to the complex conjugation of the wave function, $\psi\;\mapsto\; \psi^*$. This is precisely the time-reversal operation in standard quantum mechanics. Thus, the solution space only appears larger because it also contains the time-reversed solutions, and the {\em dynamical} equivalence between the reduced single-field formulation and ordinary quantum mechanics is complete.
 
The curious connection of the reduced single-field formulation to the problem of time-reversal invariance lies at the heart of Callender's analysis \cite{callender2023}. He is dealing with the controversial issue of why time-reversal in quantum mechanics is not achieved simply by replacing $t\mapsto -t$, but also by complex conjugation. According to Callender, the real-valued formulation illuminates this relationship because the imaginary part is linked to the velocity of the real part $\dot{u}$. He concludes \cite[p. 848]{callender2023}: 
\begin{quote}
Seen through the prism of the real formalism, we can understand why complex conjugation is connected to temporal reflection. Formulated in the real theory version, a temporal reflection essentially reverses the velocities.  
\end{quote}
We will revisit and extend this analysis in Sec.~\ref{subsec:time_reversal_kramers_real}.
\section{The probability density\label{sec:prob}}
 
The equivalence of the spectra is not sufficient to establish the equivalence of the real and complex formulations. Quantum mechanics also requires a positive probability density, and in the standard formulation, this density has the simple local form
\begin{equation}
  \rho = \psi^*\psi .\label{equ:born}
\end{equation}
The question is therefore not merely whether the complex Schrödinger equation can be rewritten as a real second-order equation, but whether the probabilistic structure of the theory survives this rewriting in a local and transparent form.

This issue has both a historical and a technical side. Historically, it was already at the center of the Ehrenfest-Pauli exchange on the need for complex numbers in quantum mechanics. Technically, the same question reappears in Chen's formulation, where the missing imaginary part is reconstructed from the real part by a generally non-local operation. We shall first recall the historical argument and then show how non-locality enters the probability density and current. For notational simplicity, this section and Sec.~\ref{sec:momentum} are formulated in one spatial dimension ($x\in\mathbb R$, derivative $\partial_x$); nothing depends on this choice, and later sections indicate the dimension explicitly whenever an example is three-dimensional.

\subsection{The Ehrenfest-Pauli exchange \label{ssec:early}}

In 1932, Paul Ehrenfest posed the question of why the imaginary unit enters the Schrödinger equation \cite{ehrenfest32}. Given that {\em one} complex function corresponds to {\em two} real-valued functions, he specifically asked for the motivation behind the need for two real-valued scalars in quantum mechanics. 

Pauli's reply to this question focused on the probability density \cite{pauli33}. In the standard complex formulation, the Born density Eq.~\eqref{equ:born} is positive, conserved, and defined locally from the wave function at a fixed time. Pauli's point was that this simplicity is not accidental. If one rewrites the theory in terms of a single real field, the equation becomes second order in time, and the information carried by the second real component has to reappear as additional initial data.

To illustrate this argument, Pauli outlined a real-valued formulation of the theory. This formulation differs, however, from the Schrödinger-Chen proposal discussed in Sec.~\ref{sec:chen}. Pauli did not split the wave function into real- and imaginary part, but  defined a single real-valued field $U$ by:
\begin{equation}
    \psi=\left(\hat H+i\hbar \partial_t\right)U\quad\text{and}\quad\psi^* = \left(\hat H-i\hbar \partial_t\right)U.\label{equ:paulirec1}
\end{equation}
This definition implies: 
\begin{equation}
\Re\psi = \hat H U \quad\text{and}\quad\Im\psi=\hbar\dot{U}.\label{equ:paulisU}
\end{equation}
Hence, Pauli's $U$ should not be confused with Chen's $u=\Re \psi$. The Schrödinger equation implies, however, that the quantity $U$ satisfies the equation 
\begin{equation}
 -\hbar^{2}\ddot U = \hat{H}^2 U\, ,\label{equ:paulirsg}
\end{equation}
which is structurally identical to Chen's Eq.~\eqref{equ:rse}. \footnote{A similar point was raised by David Bohm in his textbook from 1951 
\cite[Chap.~4]{Bohm51}. Bohm's textbook is noteworthy because it is among the few treatments that explicitly address the apparent need for complex numbers in quantum mechanics. Although he did not refer to Schrödinger's earlier real-valued equation or to the Ehrenfest-Pauli exchange, his remarks are relevant here because they concern the same attempt to assess whether the complex wave function can be replaced by a real one. In the notation used here, however, his discussion does not clearly distinguish Pauli's auxiliary field \(U\) from the real part \(u=\Re\psi\). Since they both obey the same Eq.~\eqref{equ:paulirsg} they are easy to confuse. For this reason, Bohm's conclusion about the inadequacy of a real single-field formulation is not compelling.}

Using Eqs.~\eqref{equ:paulirec1}, the Born density can be expressed in terms of the Pauli variables $(U,\dot U)$:
\begin{equation}
\rho=\psi^*\psi = (\hat H U)^2+\hbar^2(\dot U)^2.\label{equ:paulisrho}
\end{equation}
This expression makes Pauli's argument transparent. Ehrenfest explicitly asked
 why quantum mechanics requires either one complex wave function
or two real functions. Pauli's reply addresses this framing directly: the
density \(\rho\) in \eqref{equ:paulisrho} depends on the two real
scalars \(U\) and \(\dot U\), just as \(\psi\) depends on the real and imaginary components. Pauli thus regarded the issue as settled: if the probability
density is to be defined from only one scalar, complex numbers are
unavoidable. This should not be read as a rejection of real variables, but as the correct observation that a single real configuration field is not, by itself, a complete local representation of the quantum state. The argument also appeared in Pauli's 1933 handbook article \cite[p.~40]{pauli1933handbuch}. 

The two parametrizations by Pauli and Chen are related to one another by
\begin{equation}
  u = \hat H U ,
  \qquad
  v = \hbar \dot U .
  \label{eq:pauli-chen-relation}
\end{equation}
Hence, whenever the inverse exists in the relevant subspace,
\begin{equation}
  U = \hat H^{-1}u .
  \label{eq:pauli-chen-inverse}
\end{equation}
Pauli's formulation therefore does not eliminate the inverse Hamiltonian
altogether. It hides it in the relation between \(U\) and the physical real
part \(u\), whereas Chen's formulation displays it explicitly in the
reconstruction of the missing imaginary part. Substituting \eqref{eq:pauli-chen-relation} into
\eqref{equ:paulisrho}, one obtains
\begin{equation}
  \rho
  = u^2
  + \left(\hbar \hat H^{-1}\dot u\right)^2 ,
  \label{eq:chen-density-from-pauli}
\end{equation}
which is precisely the Born density in Chen's variables. Thus, Pauli's and
Chen's formulations agree on the probabilistic content, but they display
the imaginary part differently. Pauli's variables keep the density
local in \(U\) and \(\dot U\), while Chen's variables use the physical real
part \(u=\Re\psi\) and thereby introduce the non-local operator
\(\hat H^{-1}\).

Thus, Chen's formulation reconstructs the imaginary part dynamically, and the price is the non-locality to which we now turn. 

\subsection{The non-locality of $H^{-1}$ and the loss of a manifestly local probability current\label{sec:alt}}

We now turn to the non-locality problem, which arises from the appearance of the inverse operator $\hat H^{-1}$ in Eq.~\eqref{equ:idurchr}.\footnote{ If \(\hat H\) has zero modes, \(\hat H^{-1}\) is not defined on the full Hilbert space. Throughout the reduced single-field formulation,
\(\hat H^{-1}\) should therefore be understood either on the subspace orthogonal to the zero modes or as an appropriate Moore-Penrose pseudoinverse \cite{benisrael2003generalized}. The zero mode sector must then be specified separately.}
 
Throughout this section, non-locality is meant in a representational or operator-theoretic sense, not in the sense of superluminal influence or Bell-type correlations. A quantity will be called local if its value at a point can be obtained from the fields and a finite number of their spatial derivatives at that point. By contrast, an expression involving $\hat H^{-1}$ is generally non-local, because the inverse of a differential operator is typically an integral operator. More explicitly, one has
\begin{equation}
(\hat H^{-1}\dot u)(x)=\int_{\mathbb R}G(x,x')\,\dot u(x')\,\dd x' ,
\end{equation}
where $G(x,x')$ is the corresponding Green function. The value of $(\hat H^{-1}\dot u)(x)$ therefore depends, in general, on the values of $\dot u$ away from $x$.

This statement at the operator level must be kept distinct from the different issue of how long the non-local tails are for a given class of states. Since $\dot u$ is not an arbitrary source but is constrained by Schrödinger dynamics, special solutions may display much milder spatial behavior than the Green function alone would suggest. This state-dependent localization property does not remove the operator-level non-locality, but it can reduce its practical severity.

Chen was aware of this issue. He argued that, although $\hat H^{-1}$ initially appears as a non-local Green-function operator, its action on the time derivative of a wave function effectively reduces to a local differential operation because of the properties of the Dirac delta function and its derivatives~\cite{chen91}. However, this argument is not fully convincing and contains a simple error. For the free particle in one dimension, Chen used $G(x,x')=|x-x'|^{-1}$, whereas the distributional solution of $-G''=\delta$ grows linearly, $G\propto|x-x'|$ up to affine terms; on $L^2(\mathbb R)$, the free Hamiltonian has no bounded inverse at all, so the statement necessarily concerns this distributional kernel. His subsequent Eq.~(14) in Ref.~\cite{chen91} therefore does not follow.

Nevertheless, Chen's intuition was not entirely misguided. As we will show in Sec.~\ref{sec:non-loc-rev}, free wave packets can exhibit a kind of containment of the otherwise generic non-locality: the reconstructed imaginary part may remain well localized even though the reconstruction map itself is non-local. This, however, is a state-dependent feature of particular solutions, not a local representation of the operator $\hat H^{-1}$.

 We now demonstrate that the probability given by Eq.~\eqref{equ:rhonaiv} is {\em globally} conserved by the real-valued Schrödinger dynamics. Consider, for this purpose, the time derivative of Eq.~\eqref{equ:rhonaiv}:
\begin{equation}
\dot{\rho} = 2 u\dot{u} + 2\hbar^2 (\hat{H}^{-1} \dot{u}) (\hat{H}^{-1} \ddot{u}).
\end{equation}
Using the real-valued Schrödinger equation to substitute $\ddot{u}$ with $-\frac{1}{\hbar^2} \hat{H}^2  u$ and 
given that $\hat{H}^{-1} \hat{H}^2 = \hat H$, one obtains:
\begin{equation}
\dot{\rho} = 2 u\dot{u} - 2 (\hat{H}^{-1} \dot{u}) (\hat H  u).
\end{equation}
For the {\em global} conservation of probability, one must consider the integral: 
\begin{equation}
\frac{d}{dt} \int_{\mathbb R} \rho\,  \dd x = 2 \int_{\mathbb R}  u \,\dot{u}\, \dd x - 2 \int_{\mathbb R} (\hat{H}^{-1} \dot{u}) (\hat H  u)\, \dd x.\label{equ:giltk}
\end{equation}
Since $H$ is self-adjoint, one can shift it and  exploit $\hat H\hat{H}^{-1}={\mathbb{1}}$. As a result, the two integrals cancel each other out, and the desired result $\frac{d}{dt} \int \rho \, \dd x =0$ follows.  

In standard quantum mechanics, $\dot{\rho}(x,t)$ can also be written as the divergence of a  {\em local} current $j(x,t)$, which implies the local continuity equation $\dot{\rho}+\partial_x j=0$. However, this local conservation of probability fails in the reduced variables $(u,\dot{u})$. Formally, one may write:
\begin{equation}
    j=\frac{\hbar^2}{m}\left [ u\,\partial_x( \hat{H}^{-1}\dot{u})-(\hat{H}^{-1}\dot{u})\partial_x u \right].\label{equ:pcurrent-chen}
\end{equation}
In the operator-theoretic sense specified above, this current is non-local: its value at $x$ depends on the reconstructed field $(\hat H^{-1}\dot u)(x)$, which is not determined by the values of $u$, $\dot u$, and finitely many of their spatial derivatives at $x$. Thus, the reduced variables $(u,\dot u)$ do not provide a manifestly
local probability current.

This kind of non-locality is not, by itself, an empirical inconsistency. Inverse differential operators also occur in familiar reformulations of local theories; for example, when a constraint is solved explicitly by a Green function. Such examples show that a non-local expression need not signal a new non-local physical interaction. Nevertheless, the non-locality is not innocuous in the present context. One motivation for the single-field formulation was precisely to provide a description in terms of one real field. If the imaginary part, and with it the probability density and current, can be recovered only by applying $\hat H^{-1}$ to $\dot u$, then the reduced variables $(u,\dot u)$ do not display the local structure that the standard complex formulation makes manifest. The non-locality is therefore not a contradiction; rather, it is a symptom that something has been hidden by the reduction. In Sec.~\ref{sec:non-loc-rev} this operator-level diagnosis will be sharpened into a state-level one (Proposition~\ref{prop:statelevel}): two admissible sets of reduced Cauchy data can coincide, together with all their spatial derivatives, in a neighbourhood of a point and nevertheless assign different Born densities to that point. A closely related problem will reappear in the discussion of momentum and the uncertainty relation.

\section{Momentum and uncertainty\label{sec:momentum}}
One strategy for avoiding the imaginary unit in quantum mechanics is the transition to a two-component formalism, which treats the components $u$ and $v$ of the wave function $\psi=u+iv$ as independent variables (see also our remarks in the introduction). While, for example, the momentum operator $\hat{p}_k=-i\hbar\,{\partial q_k}$ has complex eigenfunctions $\psi_{p_k}=e^{ip_kq_k/\hbar}$, the two-component version can be written as: 
\begin{equation}
\begin{aligned}
\hat p_k
&=
-\hbar J\,\partial_{q_k}
=
\begin{pmatrix}
0 & \hbar\partial_{q_k} \\
-\hbar\partial_{q_k} & 0
\end{pmatrix},
\\
&\text{with:}\quad J
=
\begin{pmatrix}
0 & -1 \\
1 & 0
\end{pmatrix}.
\end{aligned}
\label{eq:real_momentum_operator}
\end{equation}
Its eigenfunctions are real-valued,
\begin{equation}
      \psi^{(2)} =   \left ( \begin{array}{r}
\cos(p_kq_k/\hbar)  \\
\sin(p_kq_k/\hbar)  
\end{array}\right ) =\left ( \begin{array}{l}
 u  \\
 v  
\end{array}\right ),
\end{equation}
and the components can be directly identified with the real and imaginary parts of the complex solution. The symplectic matrix $J$ turns the vector by $\frac{\pi}{2}$; hence, it causes a sign change if applied twice: $J^2=-\mathbb{1}$. This obviously mimics the effect of the imaginary unit $i$. If one defines the matrix version of the position operator $\hat{x}=x\mathbb{1}$, even the canonical commutator relation $[\hat{x},\hat{p}]=\hbar J$ follows. 

Such a substitute for the momentum operator is not available in the single-component formulation.\footnote{Neither Chen \cite{chen89,chen91} nor Makris and Dargush \cite{makris2025} deal with this issue.}

\subsection{Momentum in the reduced single-field formulation}
A possible workaround for the reduced single-field formulation is to generalize the expression that relates the expectation value of the momentum to the probability current in the standard formulation:
\begin{eqnarray}
\langle \hat{p} \rangle_{QM} &=&  \int_{\mathbb R} \psi^* (-i\hbar\,\partial_x)\psi\,\dd x\label{equ:pqm}\\
&=&m\int_{\mathbb R} j_{QM}\,\dd x, 
\end{eqnarray}
with:
\begin{equation}
j_{QM} = \frac{\hbar}{2 i m} ( \psi^* \partial_x\psi-\psi\,\partial_x\psi^*).
\end{equation}
If we instead use Eq.~\eqref{equ:pcurrent-chen}, we obtain an expression for the expectation value in the reduced single-field formulation (note that we write $\langle p\rangle$ instead of $\langle \hat{p}\rangle$):
\begin{equation}
    \langle p \rangle = \hbar^2\int_{\mathbb R}  \left[ u \,\partial_x (\hat{H}^{-1}\dot{u}) - (\hat{H}^{-1}\dot{u}) \partial_x u \right] \dd x.\label{equ:momentum}
\end{equation}
As in the discussion of probability current, the appearance of
$\hat H^{-1}$ means that Eq.~\eqref{equ:momentum} is non-local in the reduced variables
$(u,\dot u)$. This non-locality does not prevent one from assigning the
standard expectation value of momentum: Eq.~\eqref{equ:momentum} is just the ordinary
complex expression rewritten after eliminating the imaginary part.
However, it does show that the reduced variables do not provide a local
momentum density. Momentum is obtained only after the imaginary part
has been reconstructed by applying $\hat H^{-1}$ to $\dot u$.

There is also a more technical subtlety. Eq.~\eqref{equ:momentum} can be simplified by
 integration by parts, but the validity of this step depends on the
asymptotic behavior of $H^{-1}\dot u$. In the independent
variables $u$ and $v$, the relevant boundary condition is imposed
directly on both fields. In the reduced variables, however, the boundary
term contains the non-local expression
\begin{equation}
\Big[ u(x) \cdot (\hat{H}^{-1} \dot u)(x) \Big]_{-\infty}^{+\infty},
\end{equation}
and the expression $(\hat{H}^{-1} \dot u)(x)$ may be non-zero in the limit $|x|\to\infty$. Thus the usual simplification is justified only when this term vanishes. For sufficiently localized physical states this will normally be the case, but it is not a purely local statement about the Cauchy data $(u,\dot u)$. Keeping this caveat in mind, one obtains a compact expression that bears some resemblance to Eq.~\eqref{equ:pqm}:
\begin{equation}
   \langle p \rangle = -2\hbar^2\int_{\mathbb R}   \hat{H}^{-1}\dot{u} \,\partial_x u \, \dd x.\label{equ:pcompact}
\end{equation}
The central point is therefore not that momentum has disappeared, but that it is no longer represented in the reduced single-field formulation by an operator acting on $u$ alone. Its expectation value can still be computed, but only by combining the spatial dependence of $u$ with the time dependence needed to reconstruct the imaginary part. 

This marks a sharp departure from the way momentum is represented in the standard formulation. In ordinary quantum mechanics, the state $\psi\sim e^{ikx}$ already has a definite momentum $\hbar k$. The real-valued snapshot $u\sim \cos(kx)$, by contrast, does not determine a direction of propagation. From $u(x,t)$ at a single time, one cannot tell whether the wave is moving to the right, moving to the left, or forming part of a standing wave. In the reduced single-field formulation, momentum can therefore be assigned only when the second Cauchy datum is also known: one needs both the spatial configuration $u(x,t)$ and its time derivative $\dot u(x,t)$.

The simplest example would be the plane wave $u(x,t)\sim\cos(kx-\omega t)$. Applying Eq.~\eqref{equ:pcompact} yields $\langle p\rangle =\hbar k$. One would expect that for the plane wave, not only is the {\em expectation value} $\hbar k$ but also that the momentum has the {\em definite value} $\hbar k$. In the standard formulation, this is implied by the plane wave being an eigenstate.  In the reduced real formulation, one may calculate the standard deviation:
\begin{equation}
    \Delta p = \sqrt{\langle p^2 \rangle - \langle p \rangle^2}.
\end{equation}
$\langle p\rangle$ is already given by Eq.~\eqref{equ:pcompact}, and for the expectation value of $p^2$, we obtain:
\begin{equation}
\begin{aligned}
\langle p^2\rangle
&=
\hbar^2
\int_{\mathbb R}
\bigl[
(\partial_x u)^2+(\partial_x v)^2
\bigr]\,\dd x
\\
&=
\hbar^2
\int_{\mathbb R}
(\partial_x u)^2\,\dd x
+
\hbar^4
\int_{\mathbb R}
\bigl[
\partial_x(\hat H^{-1}\dot u)
\bigr]^2\,\dd x .
\end{aligned}
\label{equ:momev}
\end{equation}
For the plane wave, we get $\langle p^2\rangle =\hbar^2k^2$; hence $\Delta p=0$. In this sense, the reduced formulation  also treats plane waves as states with definite momentum, although the notion of an eigenstate is no longer available.  

A definite momentum while the location is completely undetermined is reminiscent of Heisenberg's uncertainty principle. In the next section, we will examine what happens to this cornerstone of quantum mechanics in the reduced single-field formulation. 

\subsection{The uncertainty principle in the reduced variables\label{ssec:hur}}

The difficulty with the uncertainty principle is not that the inequality
becomes false. Since the reduced real equation is dynamically equivalent to
the complex Schrödinger equation, all physical states still satisfy the
standard bound. The problem is rather that the usual derivations no longer
have a local expression in the variables \((u,\dot u)\).

This is already suggested by the preceding discussion of momentum. In a
two-component real formulation, the missing role of the imaginary unit is
played by a real operator \(J\) with \(J^2=-\mathbb{1}\). Momentum can then be written as a generator involving \(J\). In the reduced one-field formulation,
however, there is no local operation on \(u\) alone that plays this role.
The imaginary part has been eliminated and can only be recovered as
\begin{equation}
  v = \hbar \hat H^{-1}\dot u .
  \label{eq:v-reconstruction-uncertainty}
\end{equation}
Thus, the complex structure needed for standard kinematics is still
present, but only in a non-local form.

The point can be seen most clearly from the standard Cauchy-Schwarz proof.
For a normalized state \(\psi=u+iv\), with
\(\langle x\rangle=\langle p\rangle=0\), one introduces
\begin{equation}
  f=x\psi,
  \qquad
  g=\partial_x\psi .
  \label{eq:f-g-uncertainty}
\end{equation}
Then
\begin{equation}
  (\Delta x)^2=\|f\|^2,
  \qquad
  (\Delta p)^2=\hbar^2\|g\|^2 ,
  \label{eq:variances-f-g}
\end{equation}
and the Cauchy-Schwarz inequality gives
\begin{equation}
  (\Delta x)^2\,(\Delta p)^2
  \geq \hbar^2 |\langle f,g\rangle|^2 .
  \label{eq:cs-uncertainty}
\end{equation}
The lower bound follows from the real part of
\begin{equation}
  \langle f,g\rangle  =  \int x\,\psi^*\partial_x\psi\,dx .
  \label{eq:fg-inner-product}
\end{equation}
Indeed,
\begin{align}
  \Re\langle f,g\rangle
  &=  \int x\left(u\partial_x u+v\partial_x v\right)dx\nonumber \\
  &=  \frac{1}{2}\int x\,\partial_x(u^2+v^2)\,dx \nonumber \\
  &= -\frac{1}{2},
  \label{eq:real-part-fg}
\end{align}
where the last step uses normalization and the vanishing of the boundary
term. Hence, Eq.~\eqref{eq:cs-uncertainty} implies
\begin{equation}
  \Delta x\,\Delta p \geq \frac{\hbar}{2}.
  \label{eq:heisenberg-bound}
\end{equation}

In the reduced variables, this key local step is no longer available. After
eliminating \(v\), one has
\begin{equation}
  v(x,t)=\hbar(\hat H^{-1}\dot u)(x,t),
  \label{eq:v-nonlocal-uncertainty}
\end{equation}
so that
\begin{equation}
  v\,\partial_x v
  =
  \hbar^2
  (\hat H^{-1}\dot u)\,
  \partial_x(\hat H^{-1}\dot u).
  \label{eq:v-dxv-nonlocal}
\end{equation}
Formally, this can still be written as
\begin{equation}
  v\,\partial_x v
  =
  \frac{1}{2}\partial_x
  \left[
    \hbar^2(\hat H^{-1}\dot u)^2
  \right].
  \label{eq:v-dxv-formal-total-derivative}
\end{equation}
But the quantity inside the derivative is not a local functional of \(u(x,t)\) and its spatial derivatives. The integration-by-parts step has, therefore, lost the local meaning it had in the two-component description. The obstruction is the same as for the probability current and momentum: the reduced variables contain the missing imaginary part only through the non-local operator \(\hat H^{-1}\).

Thus, the uncertainty relation is not threatened, but its standard
kinematical origin is obscured. The failure is therefore not a physical failure of the uncertainty principle, but a diagnostic failure of the variables \((u,\dot u)\) as local state variables.

This is compatible with de Gosson's symplectic reading of the uncertainty principle \cite{degosson2009camel}: the relevant structure is not merely the Fourier relation between \(x\) and \(p\), but the underlying symplectic geometry. The reduced single-field formulation gives a concrete illustration of this point. Once the imaginary part is eliminated, the uncertainty relation remains valid, but the symplectic structure from which its standard kinematical derivation draws its force is no longer locally visible.

\section{From the reduced equation to real phase space}
\label{sec:real-phase-space-reconstruction}

The preceding sections have shown that the real second-order equation is not merely a technical rewriting of the Schrödinger equation. It forces a sharper question about the role of the imaginary unit. If the symbol \(i\) is removed from the notation, which structures must still be retained in order to recover probability, momentum, and the uncertainty relation in their usual form? The answer developed in this section is that the reduced equation must be lifted to a real Hamiltonian phase space. This lift remains entirely real-valued, but it restores the canonical partner and the symplectic structure hidden by the one-field reduction.

\subsection{The limits of the reduced description}
 
The central difficulty of the reduced single-field formulation can now be stated in one sentence: it is dynamically sufficient but not locally complete. Probability, momentum, and the uncertainty relation do not disappear. Global probability is conserved, momentum expectation values can be computed, and the usual uncertainty bound remains valid. Yet in the variables \((u,\dot u)\) these structures are no longer available as autonomous local structures.

The reason is always the same. The imaginary part has been eliminated and can be recovered only through the Hamiltonian-dependent map
\begin{equation}
v=\hbar\hat H^{-1}\dot u .
\end{equation}

Thus, the reduced variables hide the local structure that the standard complex formalism keeps explicit. The question is, therefore, not merely whether the symbol \(i\) can be removed from the Schrödinger equation, but what structure has to remain in order for the usual quantum kinematics to be recovered.

The answer developed below is that the second-order real equation must be understood as the configuration-space projection of a  Hamiltonian system. In the case of a time-independent potential with a real Hamiltonian, the canonical partner of \(u\) is
\begin{equation}
\pi=\hbar^2\hat H^{-1}\dot u .    
\end{equation}
The non-locality encountered above is then seen as the price of eliminating this canonical partner. This phase-space reconstruction remains entirely real-valued; what fails is only the stronger claim that one real configuration field and its ordinary time derivative support a Born rule that is simultaneously local and independent of the Hamiltonian; the precise dichotomy, established below for the Chen and Pauli parametrizations, is stated in Sec.~\ref{ssec:pauliagain}.

\subsection{The Hamiltonian lift}
 \label{sec:realH}

We start from the reduced real equation
\begin{equation}
\ddot u+\hbar^{-2}\hat H^2u=0 .
\label{eq:red}
\end{equation}
This equation fixes the trajectories of \(u\), but it does not by itself fix the symplectic structure. This is already familiar from the ordinary harmonic oscillator: the equation
\(\ddot q+\omega^2q=0\) determines the motion, but not whether the canonical momentum is \(p=\dot q\), \(p=m\dot q\), or, more generally, \(p=\alpha\dot q\). To reconstruct the quantum theory in real variables, one must therefore ask which Hamiltonian lift of the second-order equation carries the correct conserved Born norm and generator structure.

\subsection*{Step 1: Matching the second-order equation}
Assume that the dynamics of \(u\) arise from first-order Hamiltonian equations of the form
\begin{equation}
\dot u = \hat L\pi,
\qquad
\dot \pi = -\hat Ku,
\label{eq:ham1}
\end{equation}
where $\hat L$ and $\hat K$ are real symmetric (self-adjoint) operators to be determined.\footnote{Two remarks on the scope of this ansatz. First, Eq.~\eqref{eq:ham1} excludes ``gyroscopic'' diagonal blocks of the form $\dot u=\hat Au+\hat L\pi$, $\dot\pi=-\hat Ku+\hat B\pi$. For a real, time-reversal-invariant $\hat H$, this exclusion is justified by requiring the lift to be covariant under the time-reversal map $(u,\pi)\mapsto(u,-\pi)$ of Sec.~\ref{subsec:time_reversal_kramers_real}, which forbids the diagonal blocks. When time-reversal is broken, such blocks do occur and are physically required -- see the magnetic case of Sec.~\ref{sec:mag}. Second, self-adjointness of $\hat K$ and $\hat L$ is used in Step 2; without it, the norm condition yields $\hat K=\hbar^2\hat L^{T}$ instead of Eq.~\eqref{eq:Aeq}, and Eq.~\eqref{eq:KsqHsq} admits further solutions of the form $\hat K=W|\hat H|$ with $W$ orthogonal.} Eliminating $\pi$ gives
\begin{equation}
\ddot u = -\hat L\hat K\,u.
\end{equation}
To reproduce the reduced equation \eqref{eq:red}, one therefore needs:
\begin{equation}
\hat L\hat K=\hbar^{-2}\hat H^2.
\label{eq:AB}
\end{equation}
This condition alone does not yet uniquely determine the momentum. We need an additional criterion in order to obtain the desired result $\hat K=\hat H$ and $\hat L=\hbar^{-2}\hat H$.

\subsection*{Step 2: Matching the conserved norm}

The second requirement is that the lifted theory should carry the natural rotationally invariant norm on the local phase-space plane. Thus, one demands a conserved quantity of the form
\begin{equation}
\mathcal P[u,\pi] =\int_{\mathbb R^3} \dd^3r\left(u^2+\frac{\pi^2}{\hbar^2}\right).
\label{eq:P}
\end{equation}
Differentiating with respect to time and inserting \eqref{eq:ham1} yields
\begin{align}
\dot{\mathcal P}&=2\langle u,\dot u\rangle+\frac{2}{\hbar^2}\langle \pi,\dot\pi\rangle\nonumber\\
&=2\langle u,\hat L\pi\rangle-\frac{2}{\hbar^2}\langle \pi,\hat Ku\rangle .
\end{align}
For $\dot{\mathcal P}$ to vanish for arbitrary states, one must have
\begin{equation}
\hat L=\hbar^{-2}\hat K.
\label{eq:Aeq}
\end{equation}
Combining \eqref{eq:AB} and \eqref{eq:Aeq} gives
\begin{equation}
\hat K^2=\hat H^2.
\label{eq:KsqHsq}
\end{equation}
We choose the physically natural branch
\begin{equation}
\hat K=\hat H,
\qquad
\hat L=\hbar^{-2}\hat H.
\end{equation}
The alternative choice $\hat K=-\hat H$, $\hat L=-\hbar^{-2}\hat H$ is absorbed by the trivial redefinition $\pi\to -\pi$.

A qualification is in order, however, because Eq.~\eqref{eq:KsqHsq} constrains only the square of $\hat K$. The redefinition $\pi\to-\pi$ absorbs the \emph{global} sign flip, but a self-adjoint solution may still choose the sign independently on each spectral subspace of $\hat H$: every $\hat K=\varepsilon(\hat H)\,\hat H$ with $\varepsilon(\hat H)^2=\mathbb 1$ satisfies Eqs.~\eqref{eq:AB} and \eqref{eq:Aeq}. In particular, $\hat K=|\hat H|$ conserves the norm \eqref{eq:P} and reproduces the reduced equation \eqref{eq:red}, yet it lifts to $i\hbar\dot\psi=|\hat H|\psi$ in the sense of Sec.~\ref{ssec:equivalence}, which is empirically inequivalent to the Schrödinger evolution whenever the spectrum of $\hat H$ contains both signs. For a superposition of a bound state ($E_1<0$) and a scattering state ($E_2>0$), the local density beats at $(|E_1|+E_2)/\hbar$ under the Schrödinger dynamics but at $\bigl||E_1|-E_2\bigr|/\hbar$ under the $|\hat H|$ lift -- a measurable difference built on identical reduced trajectory data, since both lifts reproduce the same second-order equation for $u$. For sign-definite $\hat H$, the lift is therefore unique up to $\pi\to-\pi$; for indefinite spectra, the norm criterion fixes it only up to these subspace-wise signs, and the physical branch is selected by requiring the lifted first-order system to reproduce the original spectral sign of $\hat H$, and not merely $\hat H^2$ -- equivalently, by matching the first-order dynamics \eqref{equ:coup12} (cf.\ Secs.~\ref{ho} and \ref{subsec:real_zeeman_splitting}, and Appendix~\ref{app:hinv}). With this understanding, the canonical equations are fixed to be
\begin{subequations}
\label{eq:canonical_equations}
\begin{align}
\dot u &= \hbar^{-2}\hat H\,\pi,\label{eq:u_dot_pi}\\
\dot \pi &= -\hat H\,u.\label{eq:pi_dot_u}
\end{align}
\end{subequations}
Solving the first equation for $\pi$ immediately gives
\begin{equation}
\pi=\hbar^2\hat H^{-1}\dot u.
\label{eq:pi_canonical}
\end{equation}
This is the desired conjugate momentum. If one introduces the dimensionless companion field $v := \hbar^{-1}\pi$, then Eqs.~\eqref{eq:canonical_equations} take the form of the coupled equations~\eqref{equ:coup12}, which were the starting point of Chen's discussion. This familiar pair of first-order equations is, therefore, not an external postulate. It is the Hamiltonian completion selected by Eq.~\eqref{eq:red}, together with the requirement that the norm becomes an exactly conserved quadratic form. We note that the lift extends verbatim to time-dependent potentials: with $\hat H(t)$ in Eqs.~\eqref{eq:canonical_equations}, eliminating $\pi$ reproduces Chen's Eq.~\eqref{equ:chenfull}, including the $\dot V$ term. This reinforces the reading that the first-order canonical system, not the reduced second-order equation, is the fundamental object.

\subsection{Mode-wise interpretation and quadratures}
 \label{ssec:heuristic}
The key probability assumption~\eqref{eq:P} and the Eq.~\eqref{eq:pi_canonical} for the conjugate momentum allow for an  intuitive motivation when the reduced real equation is investigated mode-by-mode. For each energy mode, Eq.~\eqref{eq:red} implies:
\begin{equation}
\ddot u_k+\omega_k^2 u_k=0,
\end{equation}
which is simply the equation of a harmonic oscillator. Its general solution may be written as
\begin{equation}
u_k(t)=A_k\cos(\omega_k t+\alpha_k),
\end{equation}
so that
\begin{equation}
\dot u_k(t)=-\,\omega_k A_k\sin(\omega_k t+\alpha_k).
\end{equation}
Dividing by \(\omega_k\) gives
\begin{equation}
\frac{\dot u_k(t)}{\omega_k}=-\,A_k\sin(\omega_k t+\alpha_k).
\end{equation}
This is the point at which the language of quadratures becomes useful. For each normal mode of the second-order equation, \(u_k\) supplies one oscillating component, while \(\dot u_k/\omega_k\) supplies the component shifted by \(\pi/2\). We will call this phase-shifted component the second quadrature of the mode.\footnote{In the following, we will use {\em second quadrature}, {\em companion quadrature}, and {\em canonical partner} interchangeably, depending on whether the mode picture, the reconstruction, or the phase-space role is in focus.} The key point is that
\begin{equation}
u_k^2+
\left(
\frac{\dot u_k}{\omega_k}
\right)^2
=
A_k^2
\end{equation}
is time independent, whereas \(u_k^2\) alone oscillates in time. Since in an energy eigenbasis
\begin{equation}
\hat H\phi_k=E_k\phi_k,
\qquad
\omega_k=\frac{E_k}{\hbar},
\end{equation}
one may equally write
\begin{equation}
\frac{\dot u_k}{\omega_k}=\frac{\hbar}{E_k}\dot u_k,
\end{equation}

which is just the \(k\)-th mode component of the operator relation $v=\hbar \hat H^{-1}\dot u$. Thus $v$ is the field-theoretic version of the second quadrature. Equivalently, the canonical momentum $\pi=\hbar v$ is the corresponding phase-space partner of $u$. (In writing $\omega_k=E_k/\hbar$ we have taken $E_k>0$; for spectra containing both signs, the oscillator frequency of each mode is $|E_k|/\hbar$, and the sign of $E_k$ is carried by the branch selection of Sec.~\ref{sec:realH} and Appendix~\ref{app:hinv}.)

The central point is that \(\pi\neq \dot u\). The reduced variable \(u\) is not the coordinate of a mechanical oscillator with unit mass. Instead, the effective \enquote{mass matrix} of the reduced theory is \(\hat H^{-1}\). This is why the canonical momentum \(\pi\) contains the inverse Hamiltonian.\footnote{ As noted in Sec.~\ref{sec:alt}, possible zero modes of \(\hat H\) must be excluded or treated separately before using \(\hat H^{-1}\); equivalently, one may work with a Moore-Penrose pseudoinverse \cite{benisrael2003generalized}. From the Hamiltonian point of view, this caveat prevents a spurious solution \(u_0(t)=a+bt\) for \(\hat H u_0=0\), which would otherwise appear in the reduced second-order equation.} In this way, the reduced second-order equation is seen to be the configuration-space projection of a first-order Hamiltonian system with the correct Born norm built in from the start.

The mode-wise discussion gives an intuitive interpretation of \(\pi/\hbar\) as the second quadrature associated with \(u\). The next step is to state the corresponding equivalence with the standard complex Schrödinger equation explicitly.

\subsection{Equivalence with the complex Schrödinger equation\label{ssec:equivalence}}
The equivalence established in this subsection is a special case of the standard geometric formulation of quantum mechanics. In that tradition, Schrödinger evolution is Hamiltonian flow on a real symplectic or Kähler state space, and the Poisson bracket of expectation-value functions reproduces the commutator bracket \cite{AshtekarSchilling1999,Strocchi1966}. The purpose of the present argument is therefore not to rederive that equivalence in general. It is to identify the particular canonical partner selected by the Schrödinger-Chen reduction. The missing component is not postulated as an independent second real field; it is fixed by the Hamiltonian lift of the
reduced equation. 

Let \(\mathcal K\) be the real Hilbert space of real-valued wave functions and let \(\hat H\) be a real self-adjoint Hamiltonian on a
common invariant domain. Let \(\mathcal K_{\mathbb C}\) denote its
complexification, and let \(\hat H_{\mathbb C}\) be the complex-linear
extension of \(\hat H\).

\begin{proposition}
The real Hamiltonian system
\begin{subequations}
\label{eq:real-hamilton-system-u-pi}
\begin{align}
\dot u &= \hbar^{-2}\hat H\,\pi,\label{eq:u_dot_pi2}\\
\dot \pi &= -\hat H\,u.\label{eq:pi_dot_u2}
\end{align}
\end{subequations}
is equivalent to the complex Schrödinger equation
\begin{equation}
  i\hbar\dot\psi=\hat H_{\mathbb C}\psi
  \label{eq:complex-schroedinger-equivalence}
\end{equation}
under the real-linear identification
\begin{equation}
  \psi = u+i\frac{\pi}{\hbar}.
  \label{eq:complex-coordinate-from-phase-space}
\end{equation}
Moreover, the conserved phase-space norm is precisely the usual complex Hilbert-space norm:
\begin{equation}
  \|\psi\|^2_{\mathcal K_{\mathbb C}}  =  \|u\|^2_{\mathcal K}  +  \frac{1}{\hbar^2}\|\pi\|^2_{\mathcal K}.
  \label{eq:norm-equivalence}
\end{equation}
\end{proposition}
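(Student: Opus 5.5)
The plan is to use the fact that complexification identifies $\mathcal K_{\mathbb C}$ with $\mathcal K\oplus\mathcal K$ as a real vector space. Every $\psi\in\mathcal K_{\mathbb C}$ can be written uniquely as $\psi=u+iw$ with $u,w\in\mathcal K$. Setting $w=\pi/\hbar$, the map $(u,\pi)\mapsto u+i\pi/\hbar$ in Eq.~\eqref{eq:complex-coordinate-from-phase-space} is a real-linear bijection $\mathcal K\oplus\mathcal K\to\mathcal K_{\mathbb C}$. Because $\hat H$ is real and self-adjoint on a common invariant domain, its complex-linear extension acts componentwise: $\hat H_{\mathbb C}(u+iw)=\hat Hu+i\hat Hw$, with $\hat Hu$ and $\hat Hw$ again in $\mathcal K$. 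Both directions of the equivalence then reduce to comparing real and imaginary parts.

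\emph{Forward direction.} Insert $\psi=u+i\pi/\hbar$ into Eq.~\eqref{eq:complex-schroedinger-equivalence}. The left side is $i\hbar\dot u-\dot\pi$ and the right side is $\hat Hu+\tfrac{i}{\hbar}\hat H\pi$. The decomposition into real and imaginary parts is unique, and every term lies in $\mathcal K$. Equating real parts gives $\dot\pi=-\hat Hu$. Equating imaginary parts gives $\hbar\dot u=\hbar^{-1}\hat H\pi$. These are exactly Eqs.~\eqref{eq:u_dot_pi2}--\eqref{eq:pi_dot_u2}.

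\emph{Converse.} Run the same computation backwards. Given a solution $(u,\pi)$ of the real system, define $\psi$ by Eq.~\eqref{eq:complex-coordinate-from-phase-space}. Adding the two real equations, the second multiplied by $-1$ and the first by $i\hbar$, reproduces $i\hbar\dot\psi=\hat H_{\mathbb C}\psi$. The bijection also carries initial data to initial data. Hence solutions, and therefore the two flows, correspond one-to-one. This is a correspondence of trajectories, not merely of equations. The argument is the same computation that turned Eq.~\eqref{equ:se} into Eqs.~\eqref{equ:coup12} in Sec.~\ref{sec:chen}, with $v=\pi/\hbar$.

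\emph{Norm identity.} Take the complexified inner product to be the sesquilinear extension $\langle u_1+iw_1,u_2+iw_2\rangle=\langle u_1,u_2\rangle+\langle w_1,w_2\rangle+i(\langle u_1,w_2\rangle-\langle w_1,u_2\rangle)$. The cross terms vanish on the diagonal by the symmetry of the real inner product, so $\|\psi\|^2=\|u\|^2+\|w\|^2=\|u\|^2+\hbar^{-2}\|\pi\|^2$, which is Eq.~\eqref{eq:norm-equivalence}. Conservation then follows in either picture. In the real picture it was already established in Step~2, via Eq.~\eqref{eq:Aeq}. In the complex picture it follows from the unitarity of the flow generated by self-adjoint $\hat H_{\mathbb C}$.

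\emph{Main obstacle.} The algebra is trivial; the only delicate step is functional-analytic. One must ensure that $\hat H_{\mathbb C}$ is self-adjoint on $D_{\mathbb C}=D\oplus iD$ whenever $\hat H$ is self-adjoint on $D$, so that the two evolutions exist and are unique on corresponding domains. I would settle this by noting that the adjoint of the componentwise extension is the componentwise extension of the adjoint, citing the assumptions collected in Appendix~\ref{app:hinv}. Note that $\hat H^{-1}$ plays no role here, since $\pi$ is an independent variable rather than reconstructed from $\dot u$. The proposition therefore holds without the zero-mode caveats.
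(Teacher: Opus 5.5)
Your proof is correct and follows essentially the same route as the paper's: the real-linear bijection $(u,\pi)\mapsto u+i\pi/\hbar$, equating real and imaginary parts of $i\hbar\dot\psi=\hat H_{\mathbb C}\psi$ for one direction and recombining the two real equations for the other, and the norm identity from the vanishing of the cross terms. Your additional remarks on the self-adjointness of the complexified operator and on the irrelevance of $\hat H^{-1}$ here go slightly beyond what the paper spells out, but they do not change the argument.
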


\begin{proof}
The map
\begin{equation}
  T:\mathcal K\oplus\mathcal K\to \mathcal K_{\mathbb C},   \qquad
  T(u,\pi)=u+i\frac{\pi}{\hbar}
\end{equation}
is a real-linear bijection, with its inverse given by
\begin{equation}
  u=\Re\psi,  \qquad \pi=\hbar\,\Im\psi .
\end{equation}
Assume first that \((u,\pi)\) satisfy the Eqs.~\eqref{eq:real-hamilton-system-u-pi}.
Then
\begin{align}
  i\hbar\dot\psi
  &=  i\hbar  \left(   \dot u+i\frac{\dot\pi}{\hbar}  \right)  \nonumber \\
  &=  i\hbar\dot u-\dot\pi  \nonumber \\
  &=  i\hbar  \left(   \hbar^{-2}\hat H\pi  \right)  +\hat H u  \nonumber \\
  &=  \hat H u+i\hat H\frac{\pi}{\hbar}  \nonumber \\
  &=  \hat H_{\mathbb C}  \left(    u+i\frac{\pi}{\hbar}  \right)  =  \hat H_{\mathbb C}\psi .
\end{align}
Thus \(\psi\) satisfies the complex Schrödinger equation.

Conversely, suppose that \(\psi\) satisfies
\eqref{eq:complex-schroedinger-equivalence}, and write
\begin{equation}
  \psi=u+iv,
  \qquad
  \pi=\hbar v .
\end{equation}
Then
\begin{equation}
  i\hbar(\dot u+i\dot v)   =  \hat H u+i\hat H v .
\end{equation}
Equating real and imaginary parts gives
\begin{subequations}
 \begin{align}
  -\hbar\dot v &= \hat H u ,\\
  \hbar\dot u &= \hat H v .
\end{align}
\end{subequations}
Using \(v=\pi/\hbar\), these two equations become precisely Eqs.~\eqref{eq:real-hamilton-system-u-pi}.
Finally,
\begin{equation}
  \|\psi\|^2  =  \|u+iv\|^2  =  \|u\|^2+\|v\|^2  =  \|u\|^2+\frac{1}{\hbar^2}\|\pi\|^2 ,
\end{equation}
which proves \eqref{eq:norm-equivalence}.
\end{proof}

 \begin{corollary}
If \(\hat H^{-1}\) exists on the relevant subspace, the real Hamiltonian
system above projects to the reduced second-order equation
\begin{equation}
  \ddot u+\hbar^{-2}\hat H^2u=0 ,
  \label{eq:reduced-second-order-corollary}
\end{equation}
with
\begin{equation}
  \pi=\hbar^2\hat H^{-1}\dot u .
  \label{eq:pi-from-reduced-corollary}
\end{equation}
\end{corollary}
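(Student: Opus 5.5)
The plan is a direct elimination of \(\pi\) from the real Hamiltonian system \eqref{eq:real-hamilton-system-u-pi}, with \(\hat H\) time independent as in the proposition. Two facts are used throughout. First, \(\hat H^{-1}\) is defined on the relevant subspace, meaning the orthogonal complement of the kernel of \(\hat H\), or via the pseudoinverse of Appendix~\ref{app:hinv}. Second, this subspace is invariant under the flow, because \(\hat H\) commutes with its spectral projections and so the dynamics never leaves the nonzero-spectrum sector.

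\emph{Step 1 (the relation for \(\pi\)).} Start from Eq.~\eqref{eq:u_dot_pi2}, \(\dot u=\hbar^{-2}\hat H\pi\). Apply \(\hat H^{-1}\) to both sides. On the relevant subspace \(\hat H^{-1}\hat H=\mathbb 1\), which gives \(\pi=\hbar^2\hat H^{-1}\dot u\), i.e.\ Eq.~\eqref{eq:pi-from-reduced-corollary}.

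\emph{Step 2 (the second-order equation).} Differentiate Eq.~\eqref{eq:u_dot_pi2} in time. Because \(\hat H\) does not depend on time, this gives \(\ddot u=\hbar^{-2}\hat H\dot\pi\). Insert Eq.~\eqref{eq:pi_dot_u2}, \(\dot\pi=-\hat H u\), to obtain \(\ddot u=-\hbar^{-2}\hat H^2u\), which is Eq.~\eqref{eq:reduced-second-order-corollary}.

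\emph{Step 3 (domain justification).} This is the only real obstacle. Step 2 requires \(u\) and \(\pi\) to lie in the domain of \(\hat H^2\) and \(\hat H\), and requires the time derivative to commute with \(\hat H\). I would handle this by passing to the complex picture of the preceding Proposition. There the solution is \(\psi(t)=e^{-i\hat H_{\mathbb C}t/\hbar}\psi_0\), and for \(\psi_0\) in the common invariant domain it remains there for all \(t\). Taking real and imaginary parts then shows that \(u\) and \(\pi\) stay in the domain, and that differentiation may be exchanged with \(\hat H\). If \(\hat H\) has zero modes, the kernel component must be treated separately. On the kernel, \(\dot u=0\) and \(\dot\pi=0\), so \(u\) is constant there. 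The reduced equation alone would also admit the spurious solution \(u_0(t)=a+bt\). The corollary is therefore stated only on the complement of the kernel, as in the earlier footnote.
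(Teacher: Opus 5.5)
Your elimination argument is correct and matches the paper's: you solve \(\dot u=\hbar^{-2}\hat H\pi\) for \(\pi\), then differentiate it and insert \(\dot\pi=-\hat H u\), and your Step~3 treatment of domains and zero modes is extra care that the paper leaves to Appendix~\ref{app:hinv}. The paper's proof also contains the converse lift, which yours omits: if \(u\) solves \(\ddot u+\hbar^{-2}\hat H^2u=0\) and \(\hat H^{-1}\) is defined on \(\dot u\), then \(\pi:=\hbar^2\hat H^{-1}\dot u\) satisfies \(\dot u=\hbar^{-2}\hat H\pi\) and \(\dot\pi=\hbar^2\hat H^{-1}\ddot u=-\hat H u\), and that direction is where your remark about the spurious solution \(u_0=a+bt\) is actually needed.
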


\begin{proof}
Eliminating \(\pi\) from the set \eqref{eq:real-hamilton-system-u-pi} gives
\begin{equation}
  \ddot u
  =
  \hbar^{-2}\hat H\dot\pi
  =
  -\hbar^{-2}\hat H^2u .
\end{equation}
Conversely, if \(u\) satisfies
\eqref{eq:reduced-second-order-corollary} and \(\hat H^{-1}\) is defined on
\(\dot u\), then defining \(\pi\) by
\eqref{eq:pi-from-reduced-corollary} gives
\begin{equation}
  \dot u=\hbar^{-2}\hat H\pi
\end{equation}
and
\begin{equation}
  \dot\pi
  =
  \hbar^2\hat H^{-1}\ddot u
  =
  -\hat H u ,
\end{equation}
so the lifted Hamiltonian equations are recovered.
\end{proof}

\subsection{Geometric interpretation of the complex structure}
The preceding proposition establishes the equivalence between the lifted real Hamiltonian system and the standard complex Schrödinger equation. The geometric meaning of this equivalence is simple. Once the canonical pair \((u,\pi)\) has been restored, the dimensionless pair \((u,\pi/\hbar)\) forms a real two-dimensional phase plane. On this plane, one may introduce a compatible complex structure \(J\), i.e., a real linear
map satisfying
\begin{equation}
  J^2=-\mathbb 1 .
\end{equation}
With the sign convention corresponding to
\begin{equation}
  \psi=u+i\frac{\pi}{\hbar},
\end{equation}
the action of \(J\) is a rotation by \(\pi/2\). In the associated complex
coordinate, this same operation is represented by multiplication with the
imaginary unit. Equivalently, complex multiplication is reconstructed from the real structure by
\begin{equation}
  (a+ib)X = aX+bJX .
\end{equation}
Figure~\ref{fig:phase_plane} summarizes this local geometry: free evolution rotates each mode's pair $(u,\pi/\hbar)$ clockwise at its frequency (Sec.~\ref{ssec:heuristic}), time reversal reflects the pair in the $u$-axis (Sec.~\ref{subsec:time_reversal_kramers_real}), and a gauge transformation rotates it by a position-dependent angle (Sec.~\ref{sec:mag}).

\begin{figure*}[t]
\centering
\begin{tikzpicture}[>=Stealth,scale=0.9]
  \draw[->] (-1.9,0) -- (1.9,0) node[below] {$u$};
  \draw[->] (0,-1.9) -- (0,1.9) node[left] {$\pi/\hbar$};
  \draw[gray] (0,0) circle (1.25);
  \fill (150:1.25) circle (1.6pt) node[left] {$\Phi$};
  \draw[->,thick] (145:1.25) arc (145:20:1.25);
  \node at (55:1.62) {$\omega_k t$};
  \node at (0,-2.3) {(a)};
\end{tikzpicture}\hspace{1.6em}
\begin{tikzpicture}[>=Stealth,scale=0.9]
  \draw[->,line width=0.8pt] (-1.9,0) -- (1.9,0) node[below] {$u$};
  \draw[->] (0,-1.9) -- (0,1.9) node[left] {$\pi/\hbar$};
  \draw[gray] (0,0) -- (50:1.35);
  \draw[gray] (0,0) -- (-50:1.35);
  \fill (50:1.35) circle (1.6pt) node[above right] {$\Phi$};
  \fill (-50:1.35) circle (1.6pt) node[below right] {$\mathfrak T_0\Phi$};
  \draw[->,dashed] (50:1.55) arc (50:-50:1.55);
  \node at (14:1.95) {$\mathfrak T_0$};
  \node at (0,-2.3) {(b)};
\end{tikzpicture}\hspace{1.6em}
\begin{tikzpicture}[>=Stealth,scale=0.9]
  \draw[->] (-1.9,0) -- (1.9,0) node[below] {$u$};
  \draw[->] (0,-1.9) -- (0,1.9) node[left] {$\pi/\hbar$};
  \draw[->,thick] (0,0) -- (15:1.45) node[right] {$\Phi$};
  \draw[->,thick] (0,0) -- (70:1.45) node[above right] {$R(\chi)\,\Phi$};
  \draw[->] (15:0.85) arc (15:70:0.85);
  \node at (40:1.18) {$\chi(\mathbf r)$};
  \node at (0,-2.3) {(c)};
\end{tikzpicture}
\caption{The local real phase plane $(u,\pi/\hbar)$ at a fixed point. (a)~Free evolution rotates every energy mode clockwise at its frequency $\omega_k=E_k/\hbar$, drawn for $E_k>0$ (Sec.~\ref{ssec:heuristic}); the Born density is the squared radius $u^2+\pi^2/\hbar^2$, the conserved $A_k^2$ of each mode. (b)~Spinless time reversal $\mathfrak T_0:(u,\pi)\mapsto(u,-\pi)$ is the reflection in the $u$-axis -- the real form of complex conjugation (Sec.~\ref{subsec:time_reversal_kramers_real}). (c)~A gauge transformation rotates the pair by the position-dependent angle $\chi(\mathbf r)$, Eq.~\eqref{eq:magnetic_rigid_SO2}, while the vector potential shifts as $A_i\mapsto A_i+(\hbar/q)\,\partial_i\chi$, Eq.~\eqref{eq:gauge_transform_real}.}
\label{fig:phase_plane}
\end{figure*}

Thus, after the phase-space pair has been identified, the transition to a complex wave function is mathematically straightforward and may even seem trivial. This, however, is not the substantive step in the present argument. We do not postulate a two-component real wave function from the start, as this would merely rewrite \(\psi=u+iv\). The starting point is the reduced second-order equation for the single real field \(u\). The second component is then fixed by the Hamiltonian reconstruction as the canonical momentum
\begin{equation}
  \pi=\hbar^2\hat H^{-1}\dot u ,
\end{equation}
or, equivalently, \(v=\pi/\hbar\). The reconstruction therefore does not
double variables by hand; it identifies which real pair carries the local
quantum kinematics.\footnote{We thank Martin Schwingenheuer for emphasising
this distinction.}

The examples discussed in the following sections should be read in this
light. They are not intended as independent proofs of the equivalence just
established. Their purpose is diagnostic: they show how familiar quantum structures look in the reduced single-field description and in the real phase-space formulation, and they identify which parts of the usual complex formalism are carried by the second Cauchy datum, the canonical momentum, or the internal \(SO(2)\) geometry.

\subsection{Relation to Pauli's real-valued equation\label{ssec:pauliagain}}

In fact, the real phase-space formulation proposed here can be expressed equally well in Pauli's variables, which were introduced in Sec.~\ref{ssec:early}. Instead of taking the configuration-type variable $u=\Re\psi$, Pauli introduced a real auxiliary field $U$ such that $u=\Re\psi=HU$ and $v=\Im\psi=\hbar \dot U$. In these variables, the natural conjugate momentum is simply
\begin{equation}
\Pi_U=\hbar^2 \dot U ,
\end{equation}
and the second-order equation $-\hbar^2 \ddot U = H^2 U$ can be obtained from the local Hamiltonian
\begin{equation}
\mathcal H = \frac{1}{2\hbar^2}\Pi_U^2+\frac12 (HU)^2 .
\end{equation}
The conserved Born density then takes the form
\begin{equation}
\rho=(HU)^2+\frac{\Pi_U^2}{\hbar^2}.
\end{equation}
This is the Pauli-variable version of the density associated with the
phase-space norm \(\mathcal P[u,\pi]\) introduced in Eq.~\eqref{eq:P}. In the canonical
variables \((u,\pi)\), its integrand is
\begin{equation}
  \rho=u^2+\frac{\pi^2}{\hbar^2}.
\end{equation}

Thus, Pauli's variables provide a different canonical parametrization of the same underlying real phase-space structure. They avoid the explicit appearance of \(\hat H^{-1}\) in the density, but only because the inverse Hamiltonian is hidden in the relation \(U=\hat H^{-1}u\) between Pauli's auxiliary field and Chen's physical real part.

It is worth being precise about what does and does not fail here, because the two single-field descriptions fail in \emph{different} ways. In Pauli's variables, the density above, the current $\bm j=(\hbar^2/m)\bigl[(\hat HU)\nabla\dot U-\dot U\,\nabla(\hat HU)\bigr]$ (for the standard kinetic-plus-potential Hamiltonian), and the momentum functional are all \emph{local} in $(U,\dot U)$ in the operator-theoretic sense of Sec.~\ref{sec:alt}: they involve the fields and finitely many of their spatial derivatives at each point. Taken by itself, the Pauli description is therefore a single-field account that is complete and local in this operator-theoretic sense -- though the operational meaning of $U$ itself, unlike that of $u=\Re\psi$, is fixed only through the Hamiltonian-dependent dictionary $u=\hat HU$, $v=\hbar\dot U$. What it sacrifices instead is the \emph{kinematical autonomy} of the probability rule: the map from the field data to the Born density contains the Hamiltonian explicitly, so the probabilistic reading of the state is entangled with the dynamics. Concretely, for composite systems the joint density contains the total $\hat H_{AB}$, which does not decompose into subsystem expressions (Sec.~\ref{subsec:nonfactorization}), and for time-dependent $\hat H(t)$ the state-to-probability dictionary itself becomes time-dependent. Chen's variables make the complementary trade: the field $u=\Re\psi$ retains its Hamiltonian-independent meaning, but density, current, and momentum then become non-local through $\hat H^{-1}$, and remain Hamiltonian-dependent as well.

The accurate lesson is therefore a dichotomy rather than a blanket impossibility, and its scope should be stated precisely: it is established here for the two natural single-field parametrizations of $\psi$ -- Chen's physical real part and Pauli's auxiliary field -- and a no-go theorem covering every conceivable single-field variable would require a specification of the admissible variables and reconstruction maps, which is not attempted here. In a single-field description of either type, the Born rule can be kept local (Pauli), or the field itself can be kept kinematically autonomous (Chen's $u=\Re\psi$), but the rule cannot be made simultaneously local and independent of the dynamics. Only the canonical pair achieves both at once: $\rho=u^2+\pi^2/\hbar^2$ is local, contains no reference to $\hat H$, and composes across subsystems. 

As a historical aside, one may ask whether this phase-space reading of the real second-order equation could have been recognized earlier. Had Schrödinger's equation, or Pauli's later parametrization, been understood as a reduced form of a real Hamiltonian system, the imaginary unit might have appeared from the beginning as a compact notation for an underlying symplectic structure. This counterfactual point is suggestive, but the
argument above does not depend on it.

\subsection{Phase-space action and reduced non-local Lagrangian}
The Hamiltonian equations already establish the lifted real phase-space
formulation. The action principle is useful for two further reasons. First,
it makes explicit that \(\pi\) is the canonical momentum conjugate to \(u\).
Second, it shows why any action written only in terms of \(u\) and \(\dot u\)
must be non-local: eliminating \(\pi\) necessarily introduces
\(\hat H^{-1}\). The same action will also be used in the next subsection to
derive the local continuity equation from the internal \(SO(2)\) symmetry.

The canonical system \eqref{eq:canonical_equations} follows from the phase-space action
\begin{equation}
  S[u,\pi]  =  \int dt  \left[    \langle \pi|\dot u\rangle    -    \mathcal H[u,\pi]  \right],
\label{eq:S_phase}
\end{equation}
with
\begin{equation}
  \mathcal H[u,\pi]
  =
  \frac12\langle u|\hat H|u\rangle
  +
  \frac{1}{2\hbar^2}\langle \pi|\hat H|\pi\rangle .
\label{eq:H_phase_space}
\end{equation}
Under the identification $\psi=u+i\pi/\hbar$, this functional is one half of the usual complex expectation value, $\mathcal H=\tfrac12\braket{\psi|\hat H|\psi}$; this convention propagates to the overall factor $\hbar/2$ in Eq.~\eqref{eq:so2_lagrangian_variation} and to the corresponding factor $\tfrac12$ in the magnetic variation of Sec.~\ref{sec:mag}. Variation with respect to \(u\) and \(\pi\) gives the Hamiltonian equations.
The stationarity condition with respect to \(\pi\) is
\begin{equation}
  \pi=\hbar^2\hat H^{-1}\dot u .
\end{equation}
Substituting this relation back into the phase-space action gives the
reduced action
\begin{equation}
  S_{\rm red}[u]
  =
  \frac12\int dt
  \left[
    \hbar^2\langle \dot u|\hat H^{-1}|\dot u\rangle
    -
    \langle u|\hat H|u\rangle
  \right].
\end{equation}
Thus, the non-locality of the reduced action 
is the direct result of eliminating the canonical momentum.

 \subsection{Internal $SO(2)$ symmetry and the continuity equation\label{ssec:so2_continuity}}
 In standard quantum mechanics, the transformation 
 \begin{equation}
 \psi\mapsto \psi^\prime =e^{i\theta}\psi
 \end{equation}
 leaves the physical state unchanged. This global $U(1)$ symmetry implies, via Noether's theorem, a conserved current whose time component is $\rho = |\psi|^2$. This quantity is subsequently interpreted as the probability density. In our formulation, this symmetry translates into a global  $SO(2)$ symmetry in the local real phase-space. This correspondence is mathematically exact: the groups \(U(1)\) and \(SO(2)\) are isomorphic as one-dimensional compact Lie groups. What differs is not the abstract symmetry itself, but only its representation -- as multiplication by a complex phase in the standard formalism and as a real rotation in the two-dimensional quadrature plane in the real phase-space formulation.

For the Hamiltonian 
\begin{equation}
\hat H = -\frac{\hbar^2}{2m}\nabla^2 + V(\bm r),
\label{eq:H_standard}
\end{equation}
Eq.~\eqref{eq:S_phase} may be written, up to a total time derivative, in the symmetrized local form
\begin{equation}
S_{\rm sym}[u,\pi]=\int_{{\mathbb{R}}^3} \dd t\,\dd^3r\,\mathcal L_{\rm sym},
\label{eq:S_sym}
\end{equation}
with
\begin{equation}
\begin{aligned}
\mathcal L_{\rm sym}
&=
\frac12(\pi\dot u-u\dot\pi)
-\frac{\hbar^2}{4m}(\nabla u)^2
-\frac{1}{4m}(\nabla\pi)^2
\\
&\quad
-\frac12 V(\mathbf r)
\left(
u^2+\frac{\pi^2}{\hbar^2}
\right).
\end{aligned}
\label{eq:L_sym_local}
\end{equation}
The action is invariant under global rotations in the internal
plane \((u,\pi/\hbar)\),
\begin{align}
  u_\theta
  &=
  u\cos\theta-\frac{\pi}{\hbar}\sin\theta,
  \label{eq:so2_rotation_u}
  \\
  \pi_\theta
  &=
  \pi\cos\theta+\hbar u\sin\theta .
  \label{eq:so2_rotation_pi}
\end{align}
Equivalently, with \(v=\pi/\hbar\) and
\begin{equation}    
  \Phi =
  \begin{pmatrix}
    u\\ v
  \end{pmatrix},
\end{equation}
these transformations are rigid rotations of the internal real plane
\(\Phi\mapsto R(\theta)\Phi\). The density
\(\rho=u^2+v^2\) is the Euclidean norm in this plane. Thus, the phase
symmetry is not an additional structure imposed on the real formulation; it
is the rotation symmetry of the canonical phase-space pair itself.
For infinitesimal $\theta$,
\begin{equation}
\delta u = -\theta\,\frac{\pi}{\hbar},\qquad\delta\pi = \theta\,\hbar u.\label{eq:delta_u_pi}
\end{equation}
Just as in complex theory, Noether’s theorem establishes the connection to the probability density. 
Let the rotation angle vary slowly in space and time,
\(\theta\to\theta(\mathbf r,t)\). A direct calculation yields
\begin{equation}
  \delta\mathcal L_{\rm sym}
  =
  -\frac{\hbar}{2}\rho\,\dot\theta
  -
  \frac{\hbar}{2}\mathbf j\cdot\nabla\theta.
  \label{eq:so2_lagrangian_variation}
\end{equation}
In Noether language, \(\rho\) and \(\mathbf j\) are therefore, up to the overall conventional factor \(\hbar/2\), the charge density and spatial current associated with the internal \(SO(2)\) symmetry:
\begin{subequations}
\label{eq:rhoj_real}
\begin{align}
\rho(\bm r,t)
&=
 u(\bm r,t)^2 + \frac{\pi(\bm r,t)^2}{\hbar^2},
\label{eq:prob_density_local}
\\
\bm j(\bm r,t)
&=
\frac{1}{m}\Big(u\nabla\pi-\pi\nabla u\Big).
\label{eq:prob_current_local}
\end{align}
\end{subequations}
In complex notation, with $\psi=u+i\pi/\hbar$, Eqs.~\eqref{eq:rhoj_real} are the familiar $\rho=|\psi|^2$ and $\bm j=(\hbar/m)\,\mathrm{Im}(\psi^*\nabla\psi)$. Noether's theorem, therefore, yields the continuity equation
\begin{equation}
\partial_t\rho + \nabla\cdot\bm j = 0.
\label{eq:continuity_real}
\end{equation}
The Born density is thus local in the canonical variables $(u,\pi)$. The same local continuity equation can also be verified directly from
Eqs.~\eqref{eq:u_dot_pi}--\eqref{eq:pi_dot_u}:
\begin{align}
\partial_t\rho
&= 2u\dot u + \frac{2}{\hbar^2}\pi\dot\pi
\nonumber\\
&= \frac{2}{\hbar^2}\Big(u\hat H\pi-\pi\hat H u\Big)
\nonumber\\
&= -\frac{1}{m}\nabla\cdot\Big(u\nabla\pi-\pi\nabla u\Big).
\end{align}
However, the Noether derivation is more informative: it shows that the Born density and current are tied to the internal \(SO(2)\) symmetry of the real phase-space plane.

One further consequence deserves explicit mention because it concerns a convention that is invisible in complex notation: the additive constant of the energy. In ordinary quantum mechanics, the shift $\hat H\mapsto\hat H+c\,\mathbb 1$ multiplies every solution by the global phase $e^{-ict/\hbar}$ and changes nothing observable. In the lifted real formulation, the same shift acts as a rigid, uniformly time-dependent rotation of every local plane, $u\mapsto\cos(ct/\hbar)\,u+\sin(ct/\hbar)\,v$ and $v\mapsto\cos(ct/\hbar)\,v-\sin(ct/\hbar)\,u$, under which the quadratic invariants of this subsection -- the density, the current, and the generator pairings -- are unchanged: the energy-zero shift is a manifest symmetry. In the reduced single-field description, by contrast, nothing of this is visible as a symmetry. The shift changes the field itself, since $u=\Re\psi$ mixes with the eliminated quadrature; it changes the second-order equation, $\hat H^2\mapsto(\hat H+c)^2$; it changes the reconstruction map, $\hat H^{-1}\mapsto(\hat H+c)^{-1}$; and it can create zero modes by moving a spectral point to the origin (Appendix~\ref{app:hinv}). The energy-zero convention, physically empty in the standard theory, thereby becomes entangled with the very definition of the reduced variables -- one more expression of the fact that the one-field reduction is not an autonomous local representation of the physical ray.

\section{Free wave packets in real phase space}
\label{sec:gaussian_green_reconstruction}
The equivalence established above guarantees that the standard free
Gaussian packet can be represented in real variables. 
This example shows how the spreading packet is encoded by the
second-order real equation, how the second Cauchy datum fixes the internal
quadrature of the Fourier modes, and how the canonical momentum reconstructed
through \(H^{-1}\) can remain localized despite the formal non-locality of
the Green-function kernel.

\subsection{Mode-by-mode solution of the free real equation}
For the free particle,
\begin{equation}
 \hat H=-\frac{\hbar^2}{2m}\nabla^2,
 \qquad
 \hat H^{\,2}=\frac{\hbar^4}{4m^2}\nabla^4,
\end{equation}
and the real second-order equation becomes
\begin{equation}
 \ddot u+\hbar^{-2}\hat H^{\,2}u=0,
 \label{eq:real_second_order_general}
\end{equation}
that is,
\begin{equation}
 \ddot u+\frac{\hbar^2}{4m^2}\nabla^4 u=0.
 \label{eq:real_second_order_free}
\end{equation}

With the Fourier conventions
\begin{eqnarray}
 u(\bm r,t)=\frac{1}{(2\pi)^{3/2}}\int_{\mathbb R^3}\tilde u(\bm k,t)e^{i\bm k\cdot\bm r}\,\dd^3k,
 \nonumber \\
 \tilde u(\bm k,t)=\frac{1}{(2\pi)^{3/2}}\int_{\mathbb R^3}u(\bm r,t)e^{-i\bm k\cdot\bm r}\,\dd^3r,
\end{eqnarray}
Eq.~\eqref{eq:real_second_order_free} reduces mode by mode to
\begin{equation}
 \partial_t^2 \tilde u(\bm k,t)+\omega_k^2\tilde u(\bm k,t)=0,
 \qquad
 \omega_k=\frac{\hbar k^2}{2m}.
 \label{eq:free_mode_oscillator}
\end{equation}
Hence the exact real evolution law is
\begin{equation}
 \tilde u(\bm k,t)
 =\tilde u(\bm k,0)\cos(\omega_k t)
 +\frac{\partial_t\tilde u(\bm k,0)}{\omega_k}\sin(\omega_k t).
 \label{eq:free_mode_solution_general}
\end{equation}
This is the essential point: the real theory propagates each Fourier mode as an ordinary oscillator, and the second initial datum $\partial_t\tilde u(\bm k,0)$ fixes the internal orientation of that oscillator.

\subsection{Exact Gaussian packet, canonical momentum, and Born density}

Choose the normalized initial profile
\begin{equation}
 u(\bm r,0)=u_0(\bm r)=\frac{1}{(\pi\sigma_0^2)^{3/4}}\exp\!\left(-\frac{r^2}{2\sigma_0^2}\right),
 \quad r=|\bm r|,
 \label{eq:gaussian_initial_u}
\end{equation}
and take
\begin{equation}
 \dot u(\bm r,0)=0.
 \label{eq:gaussian_initial_udot}
\end{equation}
The choice \eqref{eq:gaussian_initial_udot} selects the even-in-time member of the packet family: it fixes the internal $SO(2)$ orientation so that the initial canonical momentum vanishes. Equivalently, in complex notation, Eq.~\eqref{eq:gaussian_initial_udot} states that the initial wave function $\psi(\cdot,0)=u_0$ is real: the even-in-time member of the packet family is precisely the packet with real initial data. In Fourier space, $\tilde u(\bm k,0)=(\sigma_0^2/\pi)^{3/4}\exp(-\sigma_0^2k^2/2)$, and Eq.~\eqref{eq:free_mode_solution_general} gives $\tilde u(\bm k,t)=\tilde u(\bm k,0)\cos(\omega_k t)$, while the canonical momentum $\pi=\hbar^2\hat H^{-1}\dot u$ of Eq.~\eqref{eq:pi_canonical} becomes
\begin{equation}
 \tilde\pi(\bm k,t)=\frac{\hbar^2}{E_k}\,\partial_t\tilde u(\bm k,t)
 =-\hbar\,\tilde u(\bm k,0)\sin(\omega_k t),
 \label{eq:gaussian_pi_kspace}
\end{equation}
with $E_k=\hbar\omega_k$. The two inverse transforms are elementary and yield the exact real pair
\begin{equation}
 u(\bm r,t)=A(\bm r,t)\cos\Phi(\bm r,t),
 \qquad
 \frac{\pi(\bm r,t)}{\hbar}=A(\bm r,t)\sin\Phi(\bm r,t),
 \label{eq:u_v_gauss}
\end{equation}
with the common envelope
\begin{eqnarray}
 A(\bm r,t) & = & \frac{(1+\tau^2)^{-3/4}}{(\pi\sigma_0^2)^{3/4}}\,
 \exp\!\left(-\frac{r^2}{2\sigma_t^2}\right),
 \quad
 \tau:=\frac{\hbar t}{m\sigma_0^2},
 \nonumber\\
 \sigma_t& = &\sigma_0\sqrt{1+\tau^2},
 \label{eq:gauss_width}
\end{eqnarray}
and the common phase
\begin{equation}
 \Phi(\bm r,t)=\frac{\tau r^2}{2\sigma_t^2}-\frac{3}{2}\arctan\tau .
 \label{eq:phasePhi}
\end{equation}
Hence the local Born density is
\begin{equation}
\rho(\bm r,t)
=u^2+\frac{\pi^2}{\hbar^2}
=A^2
=\frac{\exp\!\left(-r^2/\sigma_t^2\right)}{(\pi\sigma_t^2)^{3/2}} .
 \label{eq:gaussian_density_real}
\end{equation}
Thus the free Gaussian packet 
is obtained entirely within the real phase-space description, without first introducing a two-component wavefunction. The packet disperses through the widening of $\sigma_t$, while the real pair $(u,\pi/\hbar)$ oscillates through the common phase $\Phi$. The role of the second initial condition is transparent: it fixes the quadrature needed to convert the oscillatory real field into a positive density.

\subsection{Green-function reconstruction and state-dependent containment\label{sec:non-loc-rev}}

We now return to the non-locality issue discussed in Sec.~\ref{sec:alt}. Chen had hoped to show that the action of \(\hat H^{-1}\) on \(\dot u\) effectively reduces to a local operation. As explained there, his particular Green's function argument does not work. The Gaussian packet nevertheless shows that the underlying intuition contains a grain of truth. The reconstruction map remains non-local as an operator, but for special Schrödinger states, its non-local tails can be strongly suppressed or absent.

For the free particle, the reconstruction $\pi=\hbar^{2}\hat H^{-1}\dot u$, Eq.~\eqref{eq:pi_canonical}, is equivalent to the Poisson equation
\begin{equation}
 -\nabla^2 \pi(\bm r,t)=2m\,\dot u(\bm r,t).
 \label{eq:poisson_for_pi}
\end{equation}
With the boundary condition $\pi(\bm r,t)\to 0$ as $|\bm r|\to\infty$, the unique solution is
\begin{equation}
 \pi(\bm r,t)=2m\int_{\mathbb R^3}\frac{\dot u(\bm r',t)}{4\pi|\bm r-\bm r'|}\,\dd^3r'.
 \label{eq:v_green_3d}
\end{equation}
This is the free-space reconstruction formula for the canonical momentum.

The kernel in Eq.~\eqref{eq:v_green_3d} makes the operator-level non-locality diagnosed in Sec.~\ref{sec:alt} manifest. It also yields a sharper, state-level formulation, which preempts the objection that non-local \emph{representations} of local dynamics are commonplace:

\begin{proposition}[state-level non-locality of the Born density]
\label{prop:statelevel}
Let $\hat H=-\frac{\hbar^2}{2m}\nabla^2$ on $L^2(\mathbb R^3)$, and let $\rho=u^2+\pi^2/\hbar^2$ with $\pi=\hbar^2\hat H^{-1}\dot u$ be the Born density in the reduced variables. For every point $\bm r_0$ and every time $t_0$, there exist admissible Cauchy data $(u_1,\dot u_1)$ and $(u_2,\dot u_2)$, specified at $t_0$, which coincide on an open neighbourhood of $\bm r_0$  (and hence agree at $\bm r_0$ together with all their spatial derivatives )  while the associated Born densities differ at $(\bm r_0,t_0)$. In particular, $\rho(\bm r_0,t_0)$ is not a function of $u$, $\dot u$, and finitely many of their derivatives at $\bm r_0$.
\end{proposition}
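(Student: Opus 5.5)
The plan is to build the two sets of Cauchy data from the canonical-momentum side rather than from $\dot u$. By the Corollary of Sec.~\ref{ssec:equivalence}, admissible reduced data $(u,\dot u)$ at $t_0$ are exactly pairs with $\dot u=\hbar^{-2}\hat H\pi$, where $\psi=u+i\pi/\hbar$ is a normalized vector in the domain of $\hat H$. For such data, $\pi=\hbar^2\hat H^{-1}\dot u$ is precisely this $\pi$; this is the unique decaying solution of the Poisson equation \eqref{eq:poisson_for_pi}, given by Eq.~\eqref{eq:v_green_3d}.

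The key observation is that a harmonic function leaves no trace in $\dot u$. I therefore choose a smooth, compactly supported real function $\varphi$ with $\varphi\equiv c\neq 0$ on a ball $B=B_\delta(\bm r_0)$. Then $\nabla^2\varphi=0$ on $B$, so
\begin{equation}
\dot u_2:=\hbar^{-2}\hat H\varphi=-\tfrac{1}{2m}\nabla^2\varphi
\end{equation}
vanishes identically on $B$. Nevertheless, the reconstructed momentum is $\pi_2=\varphi$, with $\pi_2(\bm r_0)=c$. As the comparison state I take $\dot u_1\equiv0$, so that $\pi_1=0$. For the configuration field I take a common Schwartz function $u_0$ on $B$. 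Then $u_1$ and $u_2$, and likewise $\dot u_1$ and $\dot u_2$, coincide on $B$. The Born densities at $\bm r_0$ are
\begin{equation}
\rho_1(\bm r_0)=u_0(\bm r_0)^2,\qquad \rho_2(\bm r_0)=u_0(\bm r_0)^2+c^2/\hbar^2,
\end{equation}
and these differ. Since all derivatives at $\bm r_0$ agree, $\rho(\bm r_0,t_0)$ cannot be a function of finitely many jets of $(u,\dot u)$ at $\bm r_0$.

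The remaining step, which I expect to be the only real obstacle, is admissibility. Both $\psi_j=u_j+i\pi_j/\hbar$ must be normalized and lie in the domain of $\hat H$, and the inverse must act on the range correctly, since the free $\hat H$ has no bounded inverse on $L^2(\mathbb R^3)$ (Sec.~\ref{sec:alt}, Appendix~\ref{app:hinv}).

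Membership in the domain is clear, because everything is Schwartz or $C_c^\infty$. Well-definedness of $\hat H^{-1}$ on $\dot u_2$ follows from two facts. First, $\dot u_2$ lies in the range of $\hat H$ by construction. Second, $\hat H$ is injective, since $-\nabla^2$ has no $L^2$ zero modes. Hence $\hat H^{-1}\dot u_2=\hbar^{-2}\varphi$ unambiguously, in agreement with the Green-function formula \eqref{eq:v_green_3d}.

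For normalization I would not rescale, because rescaling would destroy agreement on $B$. Instead I modify $u_1$ and $u_2$ only outside $B$. Let $u_1=u_0+w_1$ and $u_2=u_0+w_2$, where $w_j\in C_c^\infty(\mathbb R^3\setminus\overline B)$ are chosen so that $\|u_j\|^2+\|\pi_j\|^2/\hbar^2=1$. This is possible by adjusting the amplitude of a bump placed far away, because the norm then varies continuously from $\|u_0\|^2$ upward. It suffices to start with $\|u_0\|^2+\|\varphi\|^2/\hbar^2<1$.

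Finally, time $t_0$ plays no role, because the data are specified at $t_0$ and propagated by the equivalent Schrödinger flow of Sec.~\ref{ssec:equivalence}. The point $\bm r_0$ is arbitrary by translation invariance. This completes the proposed argument.
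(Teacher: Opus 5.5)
Your proof is correct, but it runs in the opposite direction from the paper's.

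\emph{The paper's route.} It starts from a given admissible datum (the Gaussian) and perturbs $\dot u$ by a dipole source $\lambda\,\partial_z g$ supported in a ball away from $\bm r_0$. It then works in the non-local direction $\dot u\mapsto\pi$, which costs three steps. It checks that $\hat H^{-1}\partial_z g$ decays like $r^{-2}$ and is therefore square integrable, so the data stay admissible. It evaluates the change of $\pi(\bm r_0)$ through the Coulomb kernel \eqref{eq:v_green_3d}, with a sign argument giving $\delta>0$. Finally, it excludes the single value of $\lambda$ at which the cross term $2\lambda\delta\,\pi_1(\bm r_0)$ could cancel $\lambda^2\delta^2$.

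\emph{Your route.} You prescribe the canonical momentum and compute $\dot u=\hbar^{-2}\hat H\pi$, which is the local, differential direction. A plateau of $\pi$ at $\bm r_0$ is invisible in $\dot u$ there, admissibility holds by construction, and identifying $\hat H^{-1}\dot u_2$ needs only injectivity of $\hat H$ on $L^2$, with no integral to estimate. Taking $\pi_1=0$ also removes the cross term.

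\emph{Generality.} Your construction is no less general than the paper's. Adding $\hbar^{-2}\hat H\varphi$ to any admissible $\dot u$ shifts $\pi$ by exactly $\varphi$, so your method perturbs the paper's Gaussian just as well.

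\emph{What each approach buys.} The paper's version shows directly the feature a user of the reduced variables actually faces: a change of the source $\dot u$ far from $\bm r_0$ alters $\pi(\bm r_0)$ through the kernel. Its dipolar perturbation is also reused afterwards to show that admissibility does not exclude algebraic $r^{-2}$ tails. Your version makes concrete a point the paper only states in words: the state-level non-locality is logically independent of tails, since your $\pi_2-\pi_1$ is compactly supported.

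\emph{Normalization.} Your normalization step is a harmless extra. Appendix~\ref{app:hinv} only requires $\dot u\in\operatorname{Ran}\hat H$ with normalisable $v$, and the paper does not renormalize its perturbed Gaussian either.
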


\begin{proof}
Let $(u_1,\dot u_1)$ be admissible, for instance, the Gaussian data \eqref{eq:gaussian_initial_u}--\eqref{eq:gaussian_initial_udot}, where "admissible” is meant in the precise sense of Appendix A. Choose a smooth $g\ge 0$, $g\not\equiv 0$, supported in a closed ball $B$ that lies strictly above the plane $\{z=z_0\}$ and does not contain $\bm r_0$, and set
\begin{equation*}
 u_2=u_1,
 \qquad
 \dot u_2=\dot u_1+\lambda\,\partial_z g ,
 \qquad
 \lambda\neq 0 .
\end{equation*}
The two data sets coincide outside $B$, hence in a neighbourhood of $\bm r_0$. Because $\partial_z g$ is the derivative of a compactly supported function, its integral vanishes, and $\hat H^{-1}\partial_z g$ decays like a dipole field, $\sim|\bm r|^{-2}$, at infinity; it is therefore square integrable, and the perturbed data remain admissible, $\dot u_2\in\operatorname{Ran}\hat H$ (Appendix~\ref{app:hinv}). By Eq.~\eqref{eq:v_green_3d} and one integration by parts (all fields evaluated at $t_0$),
\begin{equation*}
 \pi_2(\bm r_0)-\pi_1(\bm r_0)
 =\frac{\lambda m}{2\pi}\int_{B}
 g(\bm r')\,\frac{z'-z_0}{|\bm r_0-\bm r'|^{3}}\,\dd^3r'
 =:\lambda\,\delta ,
\end{equation*}
with $\delta>0$, since the integrand is non-negative on $B$ and strictly positive where $g>0$. Hence
$\rho_2(\bm r_0,t_0)-\rho_1(\bm r_0,t_0)
=\bigl(2\lambda\delta\,\pi_1(\bm r_0)+\lambda^{2}\delta^{2}\bigr)/\hbar^{2}$,
which vanishes for at most one non-zero value of $\lambda$.
\end{proof}

The non-locality established by Proposition~\ref{prop:statelevel} concerns which data determine the density; it is logically independent of the question of how long the non-local tails of $\pi$ are for a given state. The remainder of this subsection separates the two issues: admissibility removes the monopole obstruction -- and with it the leading $1/r$ tail -- for \emph{every} admissible datum, and can strongly suppress the remaining tails for special states such as the Gaussian packet of Eqs.~\eqref{eq:gaussian_initial_u}--\eqref{eq:gaussian_initial_udot}. It does not, by itself, imply Gaussian localisation or the absence of all algebraic multipole tails: the dipolar perturbation used in the proof above is admissible and produces a $\sim r^{-2}$ tail in $\pi$.

At first sight, the Coulomb kernel in Eq.~\eqref{eq:v_green_3d} suggests an algebraic
tail decaying like the inverse of distance. For admissible data, however, the source is constrained by the Schrödinger evolution, and the leading obstruction is absent. The Cauchy data $(u,\dot u)$ descend from a normalisable Schrödinger solution precisely when $\dot u\in\operatorname{Ran}\hat H$, i.e., when $\dot u=\hat Hv/\hbar$ for a normalisable $v$; see Eq.~\eqref{equ:coup1}. For the free particle, the source of the Poisson equation \eqref{eq:poisson_for_pi} is then itself a total divergence, $2m\,\dot u=-\hbar\,\nabla^2v$, so that
\begin{equation}
 \int_{\mathbb R^3}\dot u(\bm r,t)\,\dd^3r
 =-
 \frac{\hbar}{2m}\lim_{R\to\infty}
 \oint_{S_R}\nabla v\cdot \dd\bm S
 =0.
 \label{eq:neutrality_pi_source}
\end{equation}
The monopole moment of the source therefore vanishes \emph{by construction},  
and for this reason the Green reconstruction produces no $1/r$ field. The decay beyond the monopole order is state-dependent: for the Gaussian data \eqref{eq:gaussian_initial_u}--\eqref{eq:gaussian_initial_udot}, the reconstruction inherits the Gaussian decay of the source, as the short-time cheque below makes this explicit; whereas the dipolar data of Proposition~\ref{prop:statelevel} retain their algebraic tail. The admissibility condition $\dot u\in\operatorname{Ran}\hat H$ is discussed together with the other standing assumptions in Appendix~\ref{app:hinv}.

A useful short-term cheque is obtained directly from the real equation. Since \eqref{eq:gaussian_initial_udot} implies $\dot u(\bm r,t)=\mathcal O(t)$,
\begin{equation}
 \dot u(\bm r,t)=-\frac{\hbar^2 t}{4m^2}\nabla^4 u_0(\bm r)+\mathcal O(t^3),
\end{equation}
and therefore
\begin{equation}
 \pi(\bm r,t)=\frac{\hbar^2 t}{2m}\nabla^2 u_0(\bm r)+\mathcal O(t^3).
 \label{eq:pi_smallt_local}
\end{equation}
For the Gaussian initial profile \eqref{eq:gaussian_initial_u}, this is a polynomial multiplied by the original Gaussian. The absence of algebraic tails is thus already visible at leading order.

This gives a controlled version of the intuition behind Chen's argument. The map \(\hat H^{-1}\) remains non-local as an operator, but for Schrödinger data with the appropriate global constraints, its action need not produce the leading $1/r$ tail suggested by the Green function alone. Given the equivalence of Sec.~\ref{ssec:equivalence}, the localisation of $\pi=\hbar\,\mathrm{Im}\,\psi$ for the Gaussian packet is, of course, immediate from the known complex solution; the non-trivial content of this subsection is that the same conclusion can be reached from the reduced data $(u,\dot u)$ alone, with the admissibility condition $\dot u\in\operatorname{Ran}\hat H$ doing the work.

\subsection{The harmonic oscillator and coherent packets in brief}
\label{ssec:ho_brief}

The one-dimensional harmonic oscillator can be treated in exactly the same mode-wise fashion, and we only record the outcome here; the construction is collected in Appendix~\ref{sec:ho_ladder_coherent_real}. The spatial spectral theory -- Hermite-functions and real ladder operators -- is unchanged by the real formulation; what changes is the description of time-dependent states. In particular, the coherent packet is not introduced as an eigenstate of a complex annihilation operator but appears as a \emph{phase-locked orbit of real normal modes}: a Poisson-weighted ladder of oscillator levels whose phases are locked linearly in the level index, whose Born density is a rigidly translating Gaussian following the classical orbit, and which saturates the Heisenberg bound $\Delta x\,\Delta p=\hbar/2$. Whether the packet moves to the right, to the left, or merely oscillates in place is fixed by the second Cauchy datum, i.e., by the canonical momentum.

\section{Heisenberg's uncertainty relation and momentum}
\label{sec:fisher_uncertainty_real}
The equivalence established in Sec.~\ref{ssec:equivalence} already guarantees that the usual Heisenberg uncertainty relation is recovered in the real phase-space formulation. This can be seen directly in the momentum moments, which become local functionals of the canonical variables
\((u,\pi)\):
\begin{align}
  \langle p\rangle
  &=  \int_{\mathbb R}  \bigl(    u\,\partial_x\pi-\pi\,\partial_x u  \bigr)\,\dd x ,  \label{eq:p_mean}  \\
  \langle p^2\rangle
  &=  \hbar^2  \int_{\mathbb R}  \left[    (\partial_x u)^2    +    \frac{1}{\hbar^2}(\partial_x\pi)^2  \right]\dd x .  \label{eq:p2_form}
\end{align}
Once these local phase-space expressions are available, the standard Cauchy-Schwarz proof of the uncertainty relation goes through exactly as in Sec.~\ref{ssec:hur}. The Heisenberg bound is therefore not a new test of the real phase-space formulation; it is a reflection of the restored local symplectic geometry.

The purpose of the present section is, therefore, not to prove the bound once more, but to identify where the momentum operator has gone in the real variables. This point matters because the reduced single-field formulation has no local momentum operator acting on \(u\) alone. In Sec.~\ref{sec:momentum}, we saw that momentum expectation values can still be computed, but only after the eliminated imaginary part has been reconstructed through \(\hat H^{-1}\dot u\). In the real  phase-space formulation, the situation is different. The canonical pair \((u,\pi)\) carries the local symplectic structure, and spatial translations act naturally on this pair. Momentum is therefore recovered as the symplectic generator of translations, rather than as an operator on the single field \(u\).

\subsection{Translations and the missing factor of \(i\)}
\label{subsec:translations_second_order}

Assume first that the Hamiltonian is translation invariant in one spatial dimension, $[\hat H,\partial_x]=0$. The real second-order equation is then invariant under the translation group $(T(a)u)(x,t):=u(x-a,t)$, and since $T(a)$ commutes with $\hat H^{-1}$, the conserved norm $\mathcal P[u,\dot u]$ of Sec.~\ref{sec:realH} is translation invariant as well: the reduced formulation already carries a perfectly ordinary representation of spatial translations on its Cauchy data. The role of the second initial datum is transparent for a monochromatic mode $u_k(x,t)=A_k\cos(kx-\omega_k t+\alpha_k)$: a translation by $a$ shifts the phase, and using $\dot u_k=\omega_k A_k\sin(kx-\omega_k t+\alpha_k)$ one may rewrite the shifted mode as
\begin{equation}
  T(a)u_k  =  u_k\cos(ka)  +  \frac{\dot u_k}{\omega_k}\sin(ka).
  \label{eq:translation_mode_rotation}
\end{equation}
For each Fourier mode, a spatial translation therefore acts as a rotation in the oscillator plane $(u_k,\dot u_k/\omega_k)$,   the same pair that already emerged in Sec.~\ref{sec:realH} when the norm was inferred from the monochromatic-wave heuristic.  For this reason,  the momentum associated with translations must depend on both quadratures.

The infinitesimal generator of translations, $A=-\partial_x$, is thus available in the reduced formulation, but it is not yet the quantum-mechanical momentum operator: in the complex theory, $\hat p=-i\hbar\partial_x$, and the factor $i$ turns the generator of translations into a self-adjoint observable. In the reduced single-field formulation, there is no local operation on $u$ alone that can play this role. Once the canonical partner $\pi$ is retained, however, the pair $(u,\pi/\hbar)$ carries a local rotation $J$ with $J^2=-\mathbb 1$ -- the real counterpart of multiplication by $i$ -- and momentum is recovered as the generator of spatial translations combined with this internal rotation.

\subsection{Momentum functional in real phase space}
\label{subsec:translations_reduced_phase_space}

It is convenient to assemble the state into the two-component real field
\begin{equation}
  \Phi(x)
  :=
  \begin{pmatrix}
    u(x)\\[3pt]
    \pi(x)/\hbar
  \end{pmatrix},
  \;
  \|\Phi\|^2
  :=
  \int_{\mathbb R}\left(u^2+\frac{\pi^2}{\hbar^2}\right) \dd x.
  \label{eq:Phi_real}
\end{equation}
Translations act componentwise, $(T(a)\Phi)(x)=\Phi(x-a)$, and preserve not only the
Born norm \eqref{eq:Phi_real} but also the antisymmetric bilinear form
\begin{eqnarray}
  \Omega(\Phi_1,\Phi_2)
 & := &
  \int_{\mathbb R}
  \left(
    u_1\,\frac{\pi_2}{\hbar}
    -
    \frac{\pi_1}{\hbar}\,u_2
  \right) \dd x,
  \nonumber \\
  \Phi_a(x)&=&\binom{u_a(x)}{\pi_a(x)/\hbar}\ (a=1,2).
  \label{eq:symplectic_form}
\end{eqnarray}
The natural scalar associated with translations is then obtained by pairing the
translation generator with this antisymmetric form:
\begin{align}
\langle p\rangle
:&= -\hbar\,\Omega\bigl(\Phi,A\Phi\bigr)
= \hbar\,\Omega\bigl(\Phi,\partial_x\Phi\bigr)
\notag \\
&= \int_{\mathbb R}
\bigl(
u\,\partial_x\pi
-
\pi\,\partial_xu
\bigr)\,\dd x .
\label{eq:momentum_functional}
\end{align}
This is precisely the momentum functional already used in the uncertainty argument. The normalization is fixed by the monochromatic wave: for $u=A\cos(kx-\omega t+\alpha)$ and $\pi=\hbar A\sin(kx-\omega t+\alpha)$ one finds
$u\partial_x \pi-\pi \partial_x u = \hbar kA^2$, so a right-moving plane wave carries momentum $+\hbar k$ per unit norm (plane waves being non-normalizable, this is a statement about densities).

When the Hamiltonian is translation invariant, $\langle p\rangle$ is conserved. Using Eq.~\eqref{eq:canonical_equations} and integrating by parts,
\begin{align}
  \frac{d}{dt}\langle p\rangle
  &=
  2\int_{\mathbb R}
  \left(\dot u\,\partial_x \pi - \dot\pi\,\partial_x u\right) \dd x
  \nonumber\\
  &=
  2\int_{\mathbb R}
  \left[
    \frac{1}{\hbar^2}(\hat H\pi)\,\partial_x \pi
    +
    (\hat H u)\,\partial_x u
  \right] \dd x.
  \label{eq:p_conservation_step}
\end{align}
Each term vanishes separately because $\hat H$ is self-adjoint and commutes with $\partial_x$:
\begin{align}
  \braket{\hat H f | \partial_x f}
  &=
  \braket{f | \hat H\partial_x f}
   =
  \braket{f | \partial_x(\hat H f)}
  \nonumber \\
  &=
  -\braket{\partial_x f | \hat H f}
   =
  -\braket{\hat H f | \partial_x f}.
  \label{eq:Hf-dxf-vanishes}
\end{align}
Hence
\begin{equation}
  \frac{d}{dt}\langle p\rangle = 0.
  \label{eq:p_conserved}
\end{equation}
The first moment of momentum is therefore the Noether charge associated with spatial translations, but it becomes visible only after the second quadrature has been restored.

The conceptual conclusion is now clear. Spatial translations are already present in the real second-order equation. What is absent for a single real field is a non-trivial first-order momentum observable constructed from the symmetric inner product alone. The second moment \(p^2\) is available directly as a positive quadratic form.  However, as Eq.~\eqref{equ:momev} makes explicit, both quadratures contribute to it, \(\langle p^2\rangle=\hbar^2\int[(\partial_x u)^2+(\partial_x v)^2]\,\dd x\), so that \(\hbar^2\int(\partial_x u)^2\,\dd x\) alone is \emph{not} \(\langle p^2\rangle\) -- but the first moment of momentum requires the antisymmetric structure supplied by the second quadrature. Thus \(\langle p\rangle\) becomes local only after the canonical partner \(\pi\) has been restored. This is why the real phase-space formulation reproduces the usual kinematics without introducing an external complex unit; the role normally played by \(i\) is carried by the local symplectic structure.

\section{Magnetic field coupling}
\label{sec:mag}
 We now return to the magnetic case briefly flagged in Sec.~\ref{ssec:mag}. There we noted that the reduced single-field equation assumes that \(\hat H\) is real in the chosen representation, whereas minimal coupling makes the Hamiltonian complex in the position representation. The direct algebraic extension of Chen's reconstruction will be given first. We then reinterpret the same mixing geometrically as the gauging of the internal \(SO(2)\) symmetry identified in Sec.~\ref{ssec:so2_continuity}.


\subsection{The magnetic extension of Chen's reconstruction\label{ssec:magnetic_chen}}

Before turning to the gauge-theoretic interpretation, it is useful to spell out the direct extension of Chen's argument. In the absence of magnetic fields, the Hamiltonian is real in the position representation, and the split \(\psi=u+iv\) gives the coupled equation \eqref{equ:coup12}. Magnetic coupling precisely changes this step since a Hamiltonian like Eq.~\eqref{eq:magnetic_hamiltonian_intro} contains an imaginary contribution. Indeed, writing
\begin{equation}
  \hat H_A=\hat H_R+i\hat H_I,   
\end{equation}
and separating the real and imaginary parts of
\(i\hbar\dot\psi=\hat H_A\psi\), one obtains
\begin{subequations}\label{grp2}
\begin{align}
  \hbar\dot u &= \hat H_Rv+\hat H_Iu,
  \label{eq:magnetic_real_u} \\
  -\hbar\dot v &= \hat H_Ru-\hat H_Iv .
  \label{eq:magnetic_real_v}
\end{align}
\end{subequations} 
Thus, the magnetic field mixes the two real quadratures already at the
first-order level. Assuming \(\hat H_R\) is invertible on the subspace of interest,
Eq.~\eqref{eq:magnetic_real_u} may be solved for the second quadrature,
\begin{equation}
  v  =  \hat H_R^{-1}  \left(    \hbar\dot u-\hat H_Iu  \right).
   \label{eq:v_mag}
\end{equation}
For a time-independent magnetic Hamiltonian, this also gives a closed second-order equation for the single real field \(u\). Differentiating
Eq.~\eqref{eq:magnetic_real_u} and using
Eq.~\eqref{eq:magnetic_real_v}, one obtains
\begin{equation}
  \hbar^2 \ddot u  =  - \hat H_R^2 u
  + \hat H_R \hat H_I \hat H_R^{-1}
    \bigl( \hbar \dot u - \hat H_I u \bigr)
  + \hbar \hat H_I \dot u .
  \label{eq:magnetic_chen_second_order}
\end{equation}
This is the magnetic counterpart of Chen's reduced second-order equation. For \(\hat H_I=0\) and \(\hat H_R=\hat H\), it reduces to the familiar expression $\hbar^2\ddot u+\hat H^2u=0$.

Thus, magnetic coupling does not destroy the reduced single-field description, but it changes its structure: the eliminated quadrature now reappears through terms first order in time and through the ordered product \(\hat H_R\hat H_I\hat H_R^{-1}\).

In the real phase-space formulation, the same reconstruction is expressed by the canonical momentum
\begin{equation}
  \pi=\hbar v =
  \hbar\hat H_R^{-1}
  \left(
    \hbar\dot u-\hat H_Iu
  \right).
  \label{eq:pi_mag}
\end{equation}
This is the magnetic counterpart of \(\pi=\hbar^2\hat H^{-1}\dot u\). The vector potential does not merely add a
new potential term to the reduced equation; it shifts the velocity-like datum by the rotational term \(\hat H_Iu\). This is why Chen's original equation fails as a real single-field equation.

This  result has also a geometric interpretation: The operator \(\hat H_I\) is precisely the part of the magnetic Hamiltonian that rotates the internal \((u,v)\)-plane. Magnetic coupling is therefore most naturally understood in the real phase-space formulation as a gauging of the internal \(SO(2)\) symmetry identified above.

\subsection{Local $SO(2)$ rotations and covariant derivatives}
 
The algebraic mixing displayed in the previous subsection has a simple
geometric interpretation. In the real phase-space formulation, the local
internal plane is described by
\begin{equation}
  \Phi(\mathbf r,t)
  =
  \begin{pmatrix}
    u(\mathbf r,t)\\
    v(\mathbf r,t)
  \end{pmatrix},
  \qquad
  v:=\frac{\pi}{\hbar}.
\end{equation}
The Born density is the Euclidean norm in this plane,
\begin{equation}
  \rho(\mathbf r,t)=\Phi^T\Phi=u^2+v^2 .
\end{equation}
In the absence of a vector potential, this internal plane is invariant under
rigid rotations
\begin{equation}
  \Phi \mapsto R(\chi)\Phi,
  \qquad
  R(\chi)=
  \begin{pmatrix}
    \cos\chi & -\sin\chi \\
    \sin\chi & \cos\chi
  \end{pmatrix}.
  \label{eq:magnetic_rigid_SO2}
\end{equation}
This is the real form of the global phase symmetry discussed in
Sec.~\ref{ssec:so2_continuity}. Magnetic coupling is obtained by promoting this rigid rotation to a local one, \(\chi=\chi(\mathbf r)\) (Fig.~\ref{fig:phase_plane}(c)). Under such local rotations, ordinary derivatives of \(u\) and \(v\) no longer transform covariantly. Differentiating Eq.~\eqref{eq:magnetic_rigid_SO2} gives
\begin{align}
  \partial_i u'
  &= \cos\chi\,\partial_i u - \sin\chi\,\partial_i v
     - (\partial_i\chi)\,v',
  \\
  \partial_i v'
  &= \cos\chi\,\partial_i v + \sin\chi\,\partial_i u
     + (\partial_i\chi)\,u'.
\end{align}
The extra terms proportional to $\partial_i\chi$ are precisely the obstruction to local invariance. To compensate for them, we introduce a vector field $\mathbf A(\mathbf r)$ and define
\begin{equation}
  D_i u := \partial_i u + \frac{q}{\hbar}A_i v,
  \qquad
  D_i v := \partial_i v - \frac{q}{\hbar}A_i u.
  \label{eq:covariant_derivatives_real}
\end{equation}
If simultaneously
\begin{equation}
  A_i' = A_i + \frac{\hbar}{q}\,\partial_i\chi,
  \label{eq:gauge_transform_real}
\end{equation}
then one checks directly that
\begin{eqnarray}
  D_i u'  &= & \cos\chi\,D_i u - \sin\chi\,D_i v,
  \nonumber \\
  D_i v'  & = & \cos\chi\,D_i v + \sin\chi\,D_i u.
  \label{eq:covariant_transformation_real}
\end{eqnarray}
The magnetic vector potential, therefore, appears as the connection required to compare neighboring internal phase-space planes when the $SO(2)$ angle is allowed to vary from point to point.
This construction is entirely real, and   the antisymmetric mixing of $u$ and $v$ in Eq.~\eqref{eq:covariant_derivatives_real} is dictated by the symplectic structure of phase-space. 

One restriction of scope should be stated explicitly. Throughout this section, the rotation angle is static, $\chi=\chi(\mathbf r)$, and the vector potential time-independent: we treat spatial gauge covariance in the magnetostatic setting. For a time-dependent angle $\chi(\mathbf r,t)$, the covariant completion also involves the time direction, and the scalar potential shifts as $V\mapsto V-\hbar\,\partial_t\chi$; its spatially constant, linear-in-time special case $\chi=-ct/\hbar$ is precisely the additive energy shift discussed at the end of Sec.~\ref{ssec:so2_continuity}.

\subsection{Gauged Hamiltonian density and current}
The preceding subsection introduced the vector potential as the connection for local rotations in the internal \((u,v)\)-plane. We now show that this geometric construction reproduces the usual Hamiltonian and probability current of minimal coupling. The point is not to repeat the Chen-style reconstruction of Sec.~\ref{ssec:magnetic_chen}, but to identify the local Hamiltonian structure whose reduction leads to it.

The local, gauge-covariant Hamiltonian functional $\mathcal H_A[u,v]$ is now
\begin{equation}
\begin{aligned}
  \mathcal H_A
  &=
  \frac12
  \int_{\mathbb R^3} \dd^3 r\,
  \biggl[
    \frac{\hbar^2}{2m}
    \sum_i
    \bigl(
      (D_i u)^2 + (D_i v)^2
    \bigr)
    \\
    &\qquad\qquad
    +\, V(\mathbf r)\bigl(u^2+v^2\bigr)
  \biggr] .
\end{aligned}
\label{eq:Hmag}
\end{equation}
Because of Eq.~\eqref{eq:covariant_transformation_real}, this Hamiltonian is invariant
under the local $SO(2)$ transformation. Expanding Eq.~\eqref{eq:Hmag} in ordinary derivatives gives
\begin{equation}
\begin{aligned}
\mathcal H_A[u,v]
&= \frac12\int_{\mathbb R^3} \dd^3r\,\Bigg[ \frac{\hbar^2}{2m} \bigl((\nabla u)^2+(\nabla v)^2\bigr)\\
&\quad - \frac{q\hbar}{m}\,\mathbf A\cdot \bigl(u\nabla v-v\nabla u\bigr) \\
&\quad + \left(\frac{q^2A^2}{2m}+V(\mathbf r) \right)(u^2+v^2) \Bigg].
\end{aligned}
\label{eq:Hmag_expanded_real}
\end{equation}
Two remarks are worth making. First, the term linear in $\mathbf A$ is the usual minimal-coupling term, proportional to the canonical momentum density already identified in the translation analysis:
\begin{equation}
  \mathbf g_{\rm can}
  := \hbar\,(u\nabla v - v\nabla u)
  = u\nabla\pi - \pi\nabla u.
  \label{eq:gcan_mag}
\end{equation}
Thus, the vector potential does indeed couple to momentum density.  However, local $SO(2)$ invariance does not permit one to stop at this linear term alone; it forces the full covariant completion, including the diamagnetic term proportional to $A^2\rho$.

Second, Eq.~\eqref{eq:Hmag_expanded_real} may be rewritten in operator form as
\begin{equation}
  \mathcal H_A[u,v]
  = \frac12\langle u|\hat H_R|u\rangle
    + \frac12\langle v|\hat H_R|v\rangle
    - \langle u|\hat H_I|v\rangle,
  \label{eq:Hmag_split_real}
\end{equation}
with
\begin{align}
  \hat H_R &=
  -\frac{\hbar^2}{2m}\nabla^2
  + \frac{q^2A^2}{2m}
  + V(\mathbf r),
  \\
  \hat H_I &=
  \frac{\hbar q}{2m}\Big(\mathbf A\cdot\nabla + \nabla\cdot\mathbf A\Big).
  \label{eq:HI_def}
\end{align}
Here, the second operator is to be understood in the symmetrized sense,
\begin{equation}
  \Big(\mathbf A\cdot\nabla + \nabla\cdot\mathbf A\Big)f
  := \mathbf A\cdot\nabla f + \nabla\cdot(\mathbf A f),
\end{equation}
so that $\hat H_I$ is skew-adjoint with the usual boundary conditions.

The gauge-covariant probability current is obtained from the response of the
Hamiltonian to the external vector potential. Varying \(\mathcal H_A\) with
respect to \(\mathbf A\), while keeping \(u\) and \(v\) fixed, gives
\begin{equation}
  \delta \mathcal H_A
  = -\frac{q}{2}\int \dd^3r\,\mathbf j\cdot \delta\mathbf A,
\end{equation}
with
\begin{eqnarray}
  \mathbf j
 & = & \frac{\hbar}{m}\bigl(u\nabla v - v\nabla u\bigr)
    - \frac{q}{m}\mathbf A\,(u^2+v^2)\nonumber \\
 &  = & \frac{1}{m}\bigl(u\nabla\pi - \pi\nabla u\bigr)
    - \frac{q}{m}\mathbf A\,\rho.
  \label{eq:j_magnetic_real}
\end{eqnarray}
The first term is the canonical current; the second is the gauge correction that turns
it into the kinematic one.  This is the real-variable version of minimal coupling. 

\subsection{Gyroscopic dynamics, reduced action, and non-local norm}

The magnetic Chen equation \eqref{eq:magnetic_chen_second_order} derived in Sec.~\ref{ssec:magnetic_chen} can be written more compactly as
\begin{equation}
  \ddot u-\hat\Gamma\dot u+\hat\Omega^2u=0,
  \label{eq:gyro_form}
 \end{equation}
with
\begin{equation}
  \hat\Gamma
  =
  \frac{1}{\hbar}
  \left(
    \hat H_I+\hat H_R\hat H_I\hat H_R^{-1}
  \right),
  \label{eq:magnetic_gamma}  
\end{equation}
and
\begin{equation}
  \hat\Omega^2
  =
  \frac{1}{\hbar^2}
  \left(
    \hat H_R^2
    +
    \hat H_R\hat H_I\hat H_R^{-1}\hat H_I
  \right).
  \label{eq:magnetic_omega}
\end{equation}
The first-order time-derivative term $\hat\Gamma$ is the distinctive new ingredient.  In the mechanical analogy, it is gyroscopic: it rotates the local phase-space plane but does not, by itself, contribute to the conserved norm. More precisely, $\hat\Gamma$ is in general not skew-adjoint in the ordinary $L^2$ inner product when $\hat H_R$ and $\hat H_I$ fail to commute; its gyroscopic character holds with respect to the reduced mass metric $\langle f|\hat H_R^{-1}|g\rangle$ that also governs the reduced action \eqref{eq:mag_reduced_action}: using $\hat H_I^T=-\hat H_I$, one checks directly that $\hat\Gamma^{T}\hat H_R^{-1}=-\hat H_R^{-1}\hat\Gamma$.

The same elimination can be performed at the action level. Eliminating \(v\), or equivalently \(\pi=\hbar v\), from the first-order real phase-space action yields 
\begin{equation}
\begin{aligned}
  S_A[u]
  &=
  \frac12\int_{\mathbb R}\dd t\,
  \Bigg[
    \hbar^2
    \big\langle
      \dot u-\hbar^{-1}\hat H_Iu
      \big|
      \hat H_R^{-1}
      \big|
      \dot u-\hbar^{-1}\hat H_Iu
    \big\rangle
    \\
  &\hspace{3.5cm}
    -
    \langle u|\hat H_R|u\rangle
  \Bigg].
\end{aligned}
\label{eq:mag_reduced_action}
\end{equation}
whose canonical momentum is precisely Eq.~\eqref{eq:pi_mag}. Again, the non-local operator $\hat H_R^{-1}$ appears when the local second quadrature is eliminated in favor of the reduced variables $(u,\dot u)$.

The same pattern appears in the norm. In the local real phase-space
variables one has
\begin{equation}
  \mathcal P_{\rm mag}
  =
  \int_{\mathbb R^3} d^3r
  \left(
    u^2+\frac{\pi^2}{\hbar^2}
  \right).
\end{equation}
In the reduced variables, however, the same norm becomes a non-local functional. Using Eq.~\eqref{eq:v_mag}, one obtains
\begin{equation}
  \mathcal P_{\rm mag}  =  \int_{\mathbb R^3} d^3r  \left[   u^2+    \left(      \hat H_R^{-1}      \left(        \hbar\dot u-\hat H_Iu      \right)    \right)^2  \right].
\end{equation}
Thus, the magnetic field does not force us back to complex numbers. It shows, rather, that the local object being gauged is the real phase-space plane; non-locality appears only after the second quadrature has again been eliminated. In Appendix \ref{sec:landau_real} we use the formalism developed here to discuss Landau levels in real veriables.

\section{Spin-$\tfrac12$ in the real-variable formulation}
\label{sec:spin_half_real}

Spin provides a useful further test case for the real phase-space description. Unlike the examples discussed so far, the relevant twofold structure is not produced by spatial motion or by a magnetic gauge choice, but is already present as an internal degree of freedom. In the standard formulation this structure is encoded by a two-component complex spinor and the Pauli matrices. In the real phase-space formulation, the same information is represented by two real channels together with their canonical partners.

The point of the following discussion is not to rederive spin quantum mechanics from scratch. Rather, it is to show where the familiar Bloch-vector geometry and the non-commuting spin components are located in real variables. In particular, spin makes especially clear that the coordinate \(u\) alone
does not contain all operationally relevant information: the companion quadrature carried by \(\pi\) is needed as soon as one compares spin measurements along different axes.

\subsection{Two real channels and the internal phase space}
\label{subsec:spin_realification}

The magnetic  problem discussed above  and the Landau   level solution of App. \ref{sec:landau_real}   involve a single real field.  A spin-$\tfrac12$ particle adds an internal two-state degree of freedom.  To exhibit its structure most clearly, we suppress the spatial coordinate and discuss a single internal state; for a spatial wave packet, the same formulas hold after integrating over space.

Choose a reference axis, denoted $z$, and resolve the reduced real state into two channels, $u(t)=(u_1(t),u_2(t))^T$. When the internal Hamiltonian is diagonal in this basis -- for example, for a Zeeman field parallel to $z$ -- each channel satisfies the same real second-order oscillator equation. Exactly as in the scalar case, the intensity of a channel is therefore not given by $u_a^2$ alone, but by the invariant quadratic combination of the coordinate with its second datum. It is therefore convenient to package the latter into the canonical momentum $\pi(t)=(\pi_1(t),\pi_2(t))^T$ and to use the real phase-space vector
\begin{equation}
  \Phi:=\begin{pmatrix}u\\ \pi/\hbar\end{pmatrix}\in\mathbb R^4.
  \label{eq:Phi_spin_def}
\end{equation}
Normalized pure states satisfy
\begin{equation}
  \|\Phi\|^2 = u\cdot u + \pi\cdot\pi/\hbar^2 =1.
  \label{eq:spin_norm_real}
\end{equation}
In the reference basis, the two outcome probabilities are simply the channel intensities.
\begin{equation}
\begin{gathered}
p_{\uparrow_z}
= u_1^2+\frac{\pi_1^2}{\hbar^2},
\qquad
p_{\downarrow_z}
= u_2^2+\frac{\pi_2^2}{\hbar^2},
\\[2pt]
\text{with:}\quad p_{\uparrow_z}+p_{\downarrow_z}=1 .
\end{gathered}
\label{eq:spin_z_probabilities}
\end{equation}
Thus, a state polarized along $+\hat z$ is represented by a concentration of the norm in the first channel, while a state polarized along $-\hat z$ is represented by a concentration in the second.


\subsection{Real spin observables and the Bloch vector}
\label{subsec:real_pauli}

A measurement along another axis must be represented by a real symmetric bilinear form on the same internal phase-space.  Introduce the real $2\times 2$ matrices

\begin{equation}
\begin{gathered}
X =
\begin{pmatrix}0&1\\[2pt]1&0\end{pmatrix},
\qquad
Y =
\begin{pmatrix}0&-1\\[2pt]1&0\end{pmatrix},
\\
Z =
\begin{pmatrix}1&0\\[2pt]0&-1\end{pmatrix}.
\end{gathered}
 \end{equation}
and define the three real symmetric $4\times 4$ matrices

\begin{equation}
\begin{gathered}
\Sigma_x =
\begin{pmatrix}
X & 0\\[2pt]
0 & X
\end{pmatrix},
\qquad
\Sigma_y =
\begin{pmatrix}
0 & -Y\\[2pt]
Y & 0
\end{pmatrix},
\\
\Sigma_z =
\begin{pmatrix}
Z & 0\\[2pt]
0 & Z
\end{pmatrix}.
\end{gathered}
\label{eq:Sigma_matrices}
\end{equation}
They satisfy
\begin{equation}
  \Sigma_i^T=\Sigma_i,\quad
  \Sigma_i^2=I_4,\quad
  \Sigma_i\Sigma_j+\Sigma_j\Sigma_i=2\delta_{ij}I_4.
  \label{eq:Sigma_anticommutation}
\end{equation}
For any unit vector $\mathbf n\in\mathbb R^3$, define
\begin{equation}
  \Sigma_{\mathbf n}:=n_x\Sigma_x+n_y\Sigma_y+n_z\Sigma_z,
  \qquad
  \hat S_{\mathbf n}\ \widehat{=}\ \frac{\hbar}{2}\Sigma_{\mathbf n}.
  \label{eq:Sigma_n}
\end{equation}
The spin expectation values are then the three real quadratic forms
\begin{equation}
  s_i(\Phi):=\Phi^T\Sigma_i\Phi,
  \qquad
  \mathbf s(\Phi):=(s_x,s_y,s_z).
  \label{eq:spin_expectation_def}
\end{equation}
Explicitly,
\begin{eqnarray}
  s_x&=&2\left(u_1u_2+\frac{\pi_1\pi_2}{\hbar^2}\right),\nonumber\\
  s_y&=&\frac{2}{\hbar}\left(u_1\pi_2-u_2\pi_1\right),\nonumber\\
  s_z&=&\left(u_1^2+\frac{\pi_1^2}{\hbar^2}\right)-\left(u_2^2+\frac{\pi_2^2}{\hbar^2}\right).
  \label{eq:spin_expectations_real}
\end{eqnarray}
These formulas make the geometry transparent.  The component $s_z$ is the difference of the two channel intensities \eqref{eq:spin_z_probabilities}.  The component $s_x$ measures the in-phase correlation between the two channels.  The component $s_y$ is a genuine phase-space quantity: it is proportional to the oriented area $u_1\pi_2-u_2\pi_1$ in the internal $(u,\pi)$-plane. Thus, the second initial datum is already indispensable for spin along the $y$-axis.

For a normalised pure state, one finds
\begin{equation}
  s_x^2+s_y^2+s_z^2=1.
  \label{eq:bloch_sphere_real}
\end{equation}
Hence, the usual Bloch sphere appears here as a sphere of quadratic forms on the real phase-space; for instance, $\Phi_{+z}=(1,0,0,0)^T$, $\Phi_{+x}=(1,1,0,0)^T/\sqrt2$, and $\Phi_{+y}=(1,0,0,1)^T/\sqrt2$ carry the Bloch vectors $\mathbf s=(0,0,1)$, $(1,0,0)$, and $(0,1,0)$, respectively.

This makes explicit what the second quadrature does. A state polarised along $+\hat x$ has equal channel intensities together with a positive in-phase correlation $u_1u_2+\pi_1\pi_2/\hbar^2$; a state polarised along $+\hat y$ has the same channel intensities but is distinguished by the orientated phase-space area $u_1\pi_2-u_2\pi_1$. The coordinate $u$ thus suffices for one preferred measurement basis; the second datum, encoded in $\pi$, becomes unavoidable as soon as non-commuting spin components are compared.

\subsection{Spin rotations as orthogonal transformations (generators)}
\label{subsec:spin_rotations_real}
Spin rotations act on the real state space as norm-preserving linear maps.  Introduce the real antisymmetric generators
\begin{eqnarray}
  K_x&:=&\begin{pmatrix}0&X\\[2pt]-X&0\end{pmatrix},\quad
  K_y:=\begin{pmatrix}Y&0\\[2pt]0&Y\end{pmatrix},\\
  \text{and,}&& K_z:=\begin{pmatrix}0&Z\\[2pt]-Z&0\end{pmatrix}.
  \nonumber \label{eq:K_generators}
\end{eqnarray}
They satisfy
\begin{equation}
\begin{gathered}
K_i^T=-K_i,
\\
[K_i,K_j]=2\varepsilon_{ijk}K_k,
\qquad
[K_i,\Sigma_j]=2\varepsilon_{ijk}\Sigma_k .
\end{gathered}
\label{eq:K_su2}
\end{equation}
A rotation through an angle $\theta$ about the axis $\mathbf n$ is therefore represented by 
\begin{equation}
    O(\mathbf n,\theta):=\exp\!\bigl(\tfrac{\theta}{2}K_{\mathbf n}\bigr)\in SO(4),
\end{equation}
 with 
 \begin{equation}
     K_{\mathbf n}:=n_xK_x+n_yK_y+n_zK_z.
 \end{equation}
It acts on the real state as $\Phi\mapsto O(\mathbf n,\theta)\Phi$ and transforms the observables covariantly: 
\begin{equation}
O(\mathbf n,\theta)\,\Sigma_{\mathbf m}\,O(\mathbf n,\theta)^T=\Sigma_{R(\mathbf n,\theta)\mathbf m},
\end{equation}
where $R(\mathbf n,\theta)\in SO(3)$ is the ordinary spatial rotation. The physical rotation of the spin vector is thus encoded entirely in orthogonal rotations of the real four-dimensional phase-space.

For later use, we record one further property. In the ordered basis $(u_1,u_2,\pi_1/\hbar,\pi_2/\hbar)$, multiplication by $i$ under the identification $\psi_a=u_a+i\pi_a/\hbar$ is represented by the real block matrix
\begin{equation}
  J_4:=\begin{pmatrix}0&-I_2\\ I_2&0\end{pmatrix},
  \qquad
  J_4^2=-I_4 ,
  \label{eq:J4_def}
\end{equation}
the four-dimensional form of the complex structure of Sec.~\ref{sec:real-phase-space-reconstruction}. All three generators commute with it, $[K_i,J_4]=0$: rotations are complex-linear (unitary) maps. The time-reversal map of Sec.~\ref{subsec:time_reversal_kramers_real} will instead \emph{anti}commute with $J_4$, and this distinction between complex-linear and complex-antilinear orthogonal maps is precisely what drives Kramers degeneracy.

Projective spin measurements are described by the real symmetric projectors $P^{(\mathbf n)}_{\pm}=\tfrac12\bigl(I_4\pm\Sigma_{\mathbf n}\bigr)$; the Born rule $p(\pm|\mathbf n)=\tfrac12\bigl(1\pm\mathbf n\cdot\mathbf s\bigr)$, the L\"uders update, and the $\cos^2(\theta/2)$ transition law for sequential Stern--Gerlach measurements all follow from the real algebra \eqref{eq:Sigma_anticommutation} alone. Since this computation is standard, it is deferred to Appendix~\ref{app:seqspin}.
The interplay of spin and orbital angular momentum is discussed in Appendix~\ref{app:zeeman}.

\section{Time-reversal symmetry and Kramers degeneracy in the real phase-space formulation} \label{subsec:time_reversal_kramers_real}

The discussion in Sec.~\ref{ho}, following Callender \cite{callender2023}, already shows how the reduced real description explains the complex conjugation in spinless time-reversal: temporal reflection reverses the second Cauchy datum and hence the reconstructed quadrature. The present section uses this spinless case as the baseline for the spin-\(\frac12\) situation. In the standard formulation, this is the familiar distinction between time-reversal operations which square to \(+\mathbb 1\) and those which square to \(-\mathbb 1\), the latter leading to Kramers degeneracy \cite{Roberts2012,SigwarthMiniatura2022}. In the real phase-space formulation, the same distinction appears as a difference between two real orthogonal time-reversal maps.

\subsection{Spinless time-reversal}
 
For spinless systems, the present discussion reduces to the point already encountered in Sec.~\ref{ho}. In the reduced single-field formulation, time-reversal acts by reversing the second Cauchy
datum,
\begin{equation}
  (u,\dot u)\mapsto (u,-\dot u).
\end{equation}
Equivalently, in the real phase-space formulation, we denote the spinless
time-reversal map by
\begin{equation}
  \mathfrak T_0:(u,\pi)\mapsto (u,-\pi).
\end{equation}
Since \(v=\pi/\hbar\), this is precisely the real form of complex
conjugation,
\begin{equation}
  \psi=u+iv\mapsto u-iv=\psi^* .
\end{equation}
Thus
\begin{equation}
  \mathfrak T_0^2=+\mathbb 1 .
\end{equation}
This restates Callender's explanatory point in canonical language. In the pure Chen formulation, complex conjugation accompanies time-reversal because temporal reflection reverses the second Cauchy datum: the reconstructed imaginary part \(v=\hbar\hat H^{-1}\dot u\) changes sign when \(\dot u\) changes sign. The real phase-space formulation sharpens this observation. What is reversed is not merely a velocity-like auxiliary datum, but the canonical partner \(\pi\). Thus, complex conjugation is the complex notation for the real phase-space reflection \((u,\pi)\mapsto(u,-\pi)\) (Fig.~\ref{fig:phase_plane}(b)). 

The spinless case supplies the \(+\mathbb 1\) baseline: time-reversal reverses the orientation of motion in the real phase-space plane, but it does not, by itself, imply a degenerate partner. The latter requires the additional spin-\(\frac12\) structure discussed next.
 
\subsection{Spin-\(\frac12\) time-reversal}
For a single spin-$\tfrac12$ degree of freedom, the real state at a fixed position is
\begin{equation}
  \Phi:=
  \begin{pmatrix}
    u_1\\[2pt]
    u_2\\[2pt]
    \pi_1/\hbar\\[2pt]
    \pi_2/\hbar
  \end{pmatrix}
  \in\mathbb R^4.
\end{equation}
time-reversal must preserve the norm, reverse the sign of the canonical motion, and invert the three spin components defined in
Eq.~\eqref{eq:spin_expectations_real}. The required map is
\begin{equation}
  \mathfrak T_{1/2}:
  \begin{pmatrix}
    u_1\\[2pt]
    u_2\\[2pt]
    \pi_1/\hbar\\[2pt]
    \pi_2/\hbar
  \end{pmatrix}
  \mapsto
  \begin{pmatrix}
    -u_2\\[2pt]
    \phantom{-}u_1\\[2pt]
    \phantom{-}\pi_2/\hbar\\[2pt]
    -\pi_1/\hbar
  \end{pmatrix}.
  \label{eq:tau_real_TR}
\end{equation}
Equivalently,
\begin{equation}
\begin{gathered}
\mathfrak T_{1/2}\Phi=\tau\Phi,
\\[2pt]
\tau:=
\begin{pmatrix}
Y & 0\\[2pt]
0 & -Y
\end{pmatrix},
\qquad
Y:=
\begin{pmatrix}
0 & -1\\[2pt]
1 & \phantom{-}0
\end{pmatrix}.
\end{gathered}
\label{eq:spin_half_time_reversal_real}
\end{equation}
where $Y$ is the antisymmetric channel map already encountered in the real spin algebra. It is fixed entirely by the symplectic geometry of the two-dimensional channel plane: $Y$ rotates that plane by a quarter-turn while preserving its oriented area. time-reversal for spin-$\tfrac12$ combines this quarter-turn in the coordinate channels with the opposite quarter-turn in the conjugate channels.

A direct substitution into Eq.~\eqref{eq:spin_expectations_real} gives
\begin{equation}
  \mathbf s(\mathfrak T_{1/2}\Phi)=-\mathbf s(\Phi).
\end{equation}
Moreover,
\begin{equation}
  \tau^T=-\tau,
  \qquad
  \tau^T\tau=I_4,
  \qquad
  \tau^2=-I_4.
  \label{eq:tau_properties}
\end{equation}
The first two relations state that $\tau$ is orthogonal and antisymmetric; the last shows that applying time-reversal twice returns the negative of the state. Since all probabilities and expectation values are quadratic in $\Phi$, this overall sign has no observable effect, and Eq.~\eqref{eq:tau_properties} reproduces the familiar half-integer-spin rule.

In terms of the reduced single-field data, if $u=(u_1,u_2)^T$ and $\pi=(\pi_1,\pi_2)^T$, then we have
\begin{equation}
  \mathfrak T_{1/2}:(u,\dot u) \mapsto  (Y u,-Y\dot u),
\end{equation}
or, equivalently,
\begin{equation}
  \mathfrak T_{1/2}:(u,\pi) \mapsto  (Y u,-Y\pi).
  \label{eq:T_spinhalf_reduced}
\end{equation}

\subsection{Kramers degeneracy in real form}
Let $\mathcal H$ denote the real Hamiltonian operator acting on the four-component spin phase-space. Time-reversal invariance means that
\begin{equation}
  [\tau,\mathcal H]=0.
  \label{eq:tau_H_commute}
\end{equation}
Suppose now that $\Phi$ is an eigenstate of $\mathcal H$ with eigenvalue $E$:
\begin{equation}
  \mathcal H\Phi=E\Phi.
\end{equation}
Then, by Eq.~\eqref{eq:tau_H_commute},
\begin{equation}
  \mathcal H(\tau\Phi)=\tau\,\mathcal H\Phi=E\,\tau\Phi,
\end{equation}
so $\tau\Phi$ is another eigenstate with the same energy. Moreover,
\begin{equation}
  \Phi^T(\tau\Phi)=0,
  \label{eq:orthogonality_kramers}
\end{equation}
because $\tau^T=-\tau$ implies
\begin{equation}
    \Phi^T\tau\Phi
=(\Phi^T\tau\Phi)^T
=\Phi^T\tau^T\Phi
=-\Phi^T\tau\Phi.
\end{equation}
Hence, every non-zero eigenstate has a real-orthogonal partner of the same energy.

Real orthogonality alone, however, does not yet establish a genuine degeneracy, and the argument must be completed. In the real description, every state $\Phi$ is accompanied by $J_4\Phi$, with $J_4$ being the complex structure of Eq.~\eqref{eq:J4_def}; $J_4\Phi$ is likewise real-orthogonal to $\Phi$ (as $J_4^T=-J_4$) and yet represents the \emph{same} physical state, the complex ray of $\Phi$ being the real two-plane $\operatorname{span}\{\Phi,J_4\Phi\}$. Moreover, any $\mathcal H$ that descends from a complex Hamiltonian commutes with $J_4$, so real eigenspaces automatically have even real dimension -- a fact with no physical content. What completes the argument is the property that distinguishes $\tau$ from the rotation generators $K_i$ of Sec.~\ref{subsec:spin_rotations_real}: $\tau$ \emph{anticommutes} with the complex structure,
\begin{equation}
  \tau J_4=-J_4\tau ,
  \label{eq:tau_antilinear}
\end{equation}
as one verifies directly from Eqs.~\eqref{eq:spin_half_time_reversal_real} and \eqref{eq:J4_def}. This is the real form of antiunitarity. Two short arguments then close the gap. First, suppose $\tau\Phi$ lies in the ray of $\Phi$, i.e., $\tau\Phi=(\alpha+\beta J_4)\Phi$ with $\alpha,\beta\in\mathbb R$. Applying $\tau$ once more and using Eq.~\eqref{eq:tau_antilinear} gives $\tau^2\Phi=(\alpha^2+\beta^2)\Phi$, contradicting $\tau^2=-I_4$; hence $\tau\Phi$ is a genuinely distinct state, not a disguised phase rotation of $\Phi$. Second, $J_4\tau$ is antisymmetric by Eq.~\eqref{eq:tau_antilinear} together with $J_4^T=-J_4$ and $\tau^T=-\tau$, so $\Phi^T(J_4\tau\Phi)=0$; combining this with $\Phi^TJ_4\Phi=0$, with Eq.~\eqref{eq:orthogonality_kramers}, and with the orthogonality of $\tau$ and $J_4$, one cheques that the four vectors
\begin{equation}
  \Phi,\qquad J_4\Phi,\qquad \tau\Phi,\qquad J_4\tau\Phi
  \label{eq:kramers_quartet}
\end{equation}
are mutually orthogonal eigenvectors of $\mathcal H$ with the same eigenvalue. Every level of a time-reversal-invariant $\mathcal H$ therefore has a real multiplicity of at least four, i.e., a complex multiplicity of at least two. This, and not the even real dimension by itself, is Kramers degeneracy in the real formulation. We note in passing that the pair $\{J_4,\tau\}$, with $J_4^2=\tau^2=-I_4$ and $J_4\tau=-\tau J_4$, equips the real space with a quaternionic structure; its appearance for time-reversal-invariant half-integer spin is the real-variable form of Dyson's threefold way \cite{dyson1962threefold}.

Equivalently, in the reduced phase-space, if $(u,\pi)$ is a stationary mode, then
\begin{equation}
  (u,\pi)
  \qquad\text{and}\qquad
  (Y u,-Y\pi)
  \label{eq:kramers_pair_real}
\end{equation}
form a Kramers pair. In particular, no non-zero state can be invariant under $\mathfrak T_{1/2}$, because $\tau^2=-I_4$ excludes real eigenvectors of $\tau$ with eigenvalue $+1$.

The comparison with the spinless case is instructive. For spinless particles, time-reversal is an involution on the reduced data and merely reverses the direction of motion in the \((u,\pi)\) plane. For spin-\(\frac12\), the corresponding real map squares to \(-1\) and therefore acts as an orthogonal quarter-turn on the internal four-dimensional phase-space. Kramers degeneracy is the spectral consequence of this extra structure.

This also clarifies the role of the magnetic examples discussed above. External magnetic fields break time-reversal symmetry: under time-reversal, \(\mathbf B\mapsto-\mathbf B\). Consequently, the Zeeman and Landau Hamiltonians do not commute with \(\mathfrak T_{1/2}\), and the associated magnetic splittings may be viewed as the lifting of Kramers doublets. This
is the real-formulation version of the standard statement that magnetic fields break time-reversal symmetry and remove Kramers degeneracy.

\section{Recent no-go results and other uses of real structures in quantum theory} \label{sec:nogo_rqt_comparison}

We close by distinguishing the present real-variable reconstruction from two nearby but different uses of real structures in quantum theory. The first is operational real Hilbert-space quantum theory, which is the target of recent network no-go theorems. The second is the Majorana-Dirac real formulation of relativistic spinor theory described in App. \ref{sec:majorana_dirac_comparison}. These comparisons are independent but complementary; both help delimit what the present construction does and what it does not claim.

Recent network results have argued that real Hilbert-space quantum theory can be experimentally distinguished from standard complex quantum theory once one considers several independent sources \cite{renou2021,Weilenmann2025Partial}. These predictions have also been tested experimentally. The interpretation of this result, however, has
recently been challenged by Hoffreumon and Woods \cite{HoffreumonWoods2026}. They argue that the claimed falsification of real quantum theory depends on how source independence is formulated: if independence is defined operationally rather than by a product-state condition, the no-go conclusion does not follow. This debate is ongoing and the Refs.~\cite{bruss2025,hw2025,HoffreumonWoods2026,MaioliCuradoGazeau2026} are, at the time of writing, preprints. We take no position on its outcome here; what matters for the present paper is only the structural distinction drawn below.
 
This debate is relevant here for a substantive reason. It shows that the empirical content of a real Hilbert-space theory is not fixed merely by replacing complex matrices with real ones. One must also specify which real structures are allowed to carry the role normally played by complex composition. The present construction belongs to a different category: it is not an autonomous real Hilbert-space theory with independently postulated composition and source-independence rules, but a real-variable reconstruction of ordinary Schrödinger quantum mechanics.

\subsection{Real Hilbert-space quantum theory and source independence}\label{subsec:network_assumptions}

In the sense relevant to Refs.~\cite{renou2021,Weilenmann2025Partial}, real quantum theory is not merely ordinary quantum mechanics written in real coordinates. It is a distinct operational framework in which local systems are assigned real Hilbert spaces and composite systems are formed by the standard real tensor product. The assumptions may be summarised as follows:
\begin{enumerate}
\item \textbf{Local real state spaces.} States, effects, and reversible transformations are represented on real Hilbert spaces by the corresponding real operators \cite{renou2021,stueckelberg1960real,Weilenmann2025Partial}.
\item \textbf{Standard composition.} Independent systems $A$ and $B$ are combined through $\mathcal H_A\otimes_{\mathbb R}\mathcal H_B$, with product states as the basic representatives of independent preparations. 
\item \textbf{Product-state source independence.} Independent sources are represented
by states that factorize across the corresponding tensor factors, up to the classical shared randomness admitted in the model.
\end{enumerate}
Under product-state versions of these assumptions, Renou et al. \cite{renou2021} showed that certain network correlations achievable in complex quantum theory are impossible in real quantum theory. Weilenmann, Gisin, and Sekatski \cite{Weilenmann2025Partial} sharpened this conclusion by showing that even partial source independence suffices to rule out real quantum theory experimentally.

Hoffreumon and Woods \cite{HoffreumonWoods2026} have challenged the operational status of the product-state assumption. Their point is not that the locality of measurements is suspect; locality has a direct operational meaning. The subtle issue is source independence. If independence is defined operationally, as the absence of observable cross-source correlations under local measurements, then real quantum theory contains non-product states that are nevertheless operationally independent. On this basis, they argue that every finite network correlation compatible with complex quantum theory also has a real quantum theory representation, provided that source independence is imposed operationally.  

\subsection{The present real-variable reformulation}
\label{subsec:present_real_reformulation}

The present construction starts from a different kinematic point. It is not a real Hilbert-space theory with independently postulated real state spaces and the standard real tensor-product rule. It begins with ordinary complex Schrödinger dynamics and rewrites it in real variables. There are two levels to this rewriting. 

At the lifted level, one keeps the canonical pair \((u,\pi)\), or equivalently $\psi = u + i\frac{\pi}{\hbar}$. This is just standard complex quantum mechanics in real canonical coordinates; no operational predictions are changed. At the reduced level, one eliminates the second quadrature and works with the Cauchy data \((u,\dot u)\). The missing canonical partner is then recovered by the Hamiltonian-dependent map  $\pi = \hbar^2 \hat H^{-1}\dot u$.    

Thus, the reduced single-field description is not an autonomous real Hilbert-space theory of the kind considered in the network debate  \cite{HoffreumonWoods2026,renou2021,Weilenmann2025Partial}. It is a reduced description of standard quantum mechanics, whose probability and metric structures are inherited from the lifted phase-space formulation. The difference becomes important for composites because the reduced reconstruction map is global and does not respect the local tensor-product structure assumed in the operational no-go framework.
 
\subsection{Composite systems and the non-factorizing reconstruction map} \label{subsec:nonfactorization}
The distinction becomes sharp for composite systems. Consider two non-interacting subsystems $A$ and $B$ with
\begin{equation}
  \hat H_{AB} = \hat H_A\otimes I_B + I_A\otimes \hat H_B.
  \label{eq:H_sum}
\end{equation}
In the reduced single-field variables, the canonical momentum of the composite is reconstructed by
\begin{equation}
  \pi_{AB}=\hbar^2\hat H_{AB}^{-1}\dot u_{AB}.
  \label{eq:piAB}
\end{equation}
Although $\hat H_{AB}$ is a sum of local operators, its inverse is not, in general, a sum or product of local inverses. In an eigenbasis,
\begin{equation}
  \hat H_A\ket{a}=E_a\ket{a},\qquad \hat H_B\ket{b}=F_b\ket{b},
\end{equation}
so that
\begin{equation}
  \hat H_{AB}\ket{ab}=(E_a+F_b)\ket{ab},
\end{equation}
and therefore
\begin{equation}
  \hat H_{AB}^{-1}\ket{ab}=\frac{1}{E_a+F_b}\ket{ab}.
  \label{eq:inverse_spectrum}
\end{equation}
This should be compared with, for example, $\hat H_A^{-1}\otimes I + I\otimes \hat H_B^{-1}$, which would act as $(E_a^{-1}+F_b^{-1})\ket{ab}$ and is therefore inequivalent to \eqref{eq:inverse_spectrum} in general. The reduced reconstruction map for the composite system is therefore intrinsically global, even in the absence of any interaction term.

The same point may be written without reference to a spectral basis. For positive $\hat H_A$ and $\hat H_B$,
\begin{equation}
  \hat H_{AB}^{-1} = \int_{0}^{\infty} e^{-t\hat H_{AB}}\,\dd t
  = \int_{0}^{\infty} \bigl(e^{-t\hat H_A}\otimes e^{-t\hat H_B}\bigr)\,\dd t.
  \label{eq:resolvent_integral}
\end{equation}
At each fixed $t$, the semigroup factorises, but the integral over $t$ does not reduce to a local tensor-product expression for $\hat H_{AB}^{-1}$.

\subsection{Consequences for the no-go theorems}
\label{subsec:network_consequences}
The comparison with the network debate can now be stated more precisely. If real Hilbert-space quantum theory is supplemented by product-state versions of source independence, the network results of Renou {\em et al.} and Weilenmann {\em et al.} yield separations from standard complex quantum theory. If, however, source independence is imposed only operationally, Hoffreumon and Woods argue that these separations disappear for finite network and sequential protocols compatible with complex quantum theory. In either case, the discussion concerns an autonomous real Hilbert-space theory with its own composition and independence postulates.

The reduced single-field data $(u,\dot u)$ used here does not, by itself, carry such a local tensor-product structure. The quantity that restores Born probabilities, namely $\pi=\hbar^2\hat H^{-1}\dot u$, depends on the inverse of the \emph{global} Hamiltonian. For composites, this inverse does not factorise, even when the Hamiltonian itself is non-interacting.

It is, therefore, useful to distinguish two levels within the present real-variable reformulation.
\begin{enumerate}
\item If one keeps the full canonical pair $(u,\pi)$ for each subsystem, then one has simply rewritten standard complex quantum mechanics in real coordinates, and all network predictions are exactly those of the usual theory.
\item If one passes to the reduced single-field description $(u,\dot u)$, then the price of eliminating the second quadrature is precisely the non-factorising reconstruction map 
  $\pi = \hbar^2\hat H^{-1}\dot u$. The resulting reduced variables are not the kinematic objects assumed in the operational no-go framework.
\end{enumerate}
This is the sense in which the network debate does not directly bear on the present construction. The product-state no-go results rule out a particular operational real Hilbert-space theory; the Hoffreumon-Woods analysis argues that a different operational reading of source independence restores empirical equivalence with complex quantum theory. The present formulation is neither of these. With the full canonical pair retained, it is standard complex quantum mechanics written in real phase-space coordinates. With only the reduced field retained, the missing quadrature is recovered through a Hamiltonian-dependent, generally non-local reconstruction map.

This conclusion is consistent with the observation of McKague, Mosca, and Gisin \cite{McKague2009} that simulating complex quantum theory within a real Hilbert-space description requires an additional global resource: a ``reference frame of complexness.'' It is also consistent with more recent real embeddings of complex quantum theory in which the composition rule for composite systems is modified rather than taken to be the standard real Kronecker product \cite{bruss2025,hw2025}. Recent Kähler-space realifications make the same com{\-}po{\-}sition-rule point in especially explicit form: they retain the complex structure \(J\) on the real space and replace the standard real tensor product \(\otimes_{\mathbb R}\) by a \(J\)-compatible product \(\otimes_K\), so that realification commutes with the usual complex tensor product \cite{MaioliCuradoGazeau2026}.

The lifted formulation of the present paper is no exception to this composition-rule point. The real phase space of a composite system is the realification of $\mathbb C^{N_A}\otimes_{\mathbb C}\mathbb C^{N_B}$ and has real dimension $2N_AN_B$, whereas the real tensor product of the subsystem phase spaces has dimension $4N_AN_B$. Retaining the canonical pair $(u,\pi)$ for each subsystem therefore does not, by itself, supply the composite phase space; the pairing must be taken compatibly with the complex structure $J$, which is precisely the modified product $\otimes_K$ just described. In the reduced variables, the same fact resurfaces as the non-factorising reconstruction map of Sec.~\ref{subsec:nonfactorization}.

Hoffreumon and Woods' later analysis of the network no-go debate adds another option: if source independence is understood operationally, the missing structure can be carried by non-tomographic real correlations that remain invisible to local measurements \cite{HoffreumonWoods2026}. In the present construction, by contrast, the corresponding structure reappears as the real phase-space pair and, after reduction, as the non-factorising reconstruction map.

The common lesson is that the structural role of the complex phase cannot simply be deleted. It must be relocated: into a global reference frame, into a modified composition rule, into operationally invisible real correlations, or into the symplectic phase-space structure of the real Schrödinger reconstruction. In App. \ref{sec:majorana_dirac_comparison} we compare  operational real Hilbert-space quantum theory with the Majorana formulation of relativistic spin-\(\frac12\) theory \cite{Aste2010,Majorana1937,Pal2011}.

\section{Summary and conclusion}

We have revisited the old question of whether the imaginary unit is an essential ingredient of quantum mechanics. Schrödinger's early real-valued equation and Chen's later generalization show that the explicit complex form of the Schrödinger equation is not unavoidable: one can eliminate the imaginary part and obtain a second-order equation for a single real field. This observation is mathematically correct, but it is not yet the full story.

The reduced single-field equation carries the same dynamics, only at a price. The missing imaginary part has to be reconstructed through \(\hat H^{-1}\), and this reconstruction makes several familiar quantum structures opaque. Probability is globally conserved, but its local current is no longer manifest in the reduced variables. Momentum expectation values can be computed, but momentum is not represented by an operator acting on the single real field alone. Likewise, the uncertainty relation remains valid, but its usual kinematical proof has to appeal to structure no longer locally present in the reduced description. These difficulties do not amount to a new physical prediction; they show instead that the one-field variables are an awkward representation of the quantum state.

The constructive part of the paper identified what has been hidden. The second-order real equation has a natural Hamiltonian lift, and the canonical partner of \(u\) is
\begin{equation}
  \pi=\hbar^2\hat H^{-1}\dot u .
\end{equation}
Once the pair \((u,\pi)\) is used, the theory remains entirely real-valued while recovering the local Born density, the probability current, and the generator interpretation of momentum. More strongly, the lifted real phase-space formulation is strictly equivalent to the standard complex Schrödinger equation under the identification
\begin{equation}
  \psi=u+i\pi/\hbar.
\end{equation}
This is not merely the familiar split of an already-given complex wave function into real and imaginary parts. The route taken here starts with the reduced second-order equation and recovers the missing entry as the canonical momentum fixed by the Hamiltonian lift.

This equivalence should not be read as the rediscovery of a real symplectic formulation of quantum mechanics. That formulation is already part of the geometric understanding of the theory. Its role here is more specific. It provides the reference structure against which the reduced Schrödinger-Chen equation can be assessed. The point of the Hamiltonian lift is to show how this known phase-space structure is recovered when one starts from the one-field equation rather than from the full complex Hilbert space.

The later sections were included in the same diagnostic spirit. They do not ask again whether the lifted real phase-space formulation is equivalent to the complex theory. Instead, they use standard quantum structures as probes of the reduction itself. They ask where probability, generators, uncertainty, gauge covariance, spin, angular momentum, and time-reversal are stored once the symbol \(i\) has been removed. 

A uniform pattern emerged. At the level of the reduced equation, the evolution can be represented without displaying the full phase plane. At the level of local quantum structure, however, the omitted phase plane reappears. The Born density is its Euclidean norm; the probability current and momentum use its antisymmetric pairing; the uncertainty relation draws on its local symplectic geometry; magnetic coupling gauges its internal \(SO(2)\) rotations; spin and orbital angular momentum are encoded in real invariant planes; and time-reversal acts as a reflection or rotation of those planes. The examples, therefore, do not test whether the real phase-space formulation is equivalent to the complex theory. They show, case by case, which task the complex notation was performing. 

The same lesson clarifies the scope of the result and its relation to recent no-go theorems for operational real quantum theory. Those results concern real Hilbert-space theories with the standard real tensor-product rule. The construction studied here is different: it is either the usual theory written in real canonical variables or a reduced one-field description, whose missing quadrature is recovered by a Hamiltonian-dependent, generally non-local map.
 
The outcome is therefore neither a victory for a naive one-real-field ontology nor a defeat for real variables. What fails, for the parametrizations considered here, is the idea that a single real configuration field, together with its ordinary time derivative, can support a Born rule that is simultaneously local and independent of the dynamics: Chen's physical real part keeps the field kinematically autonomous at the price of a non-local rule, while Pauli's auxiliary field keeps the rule local at the price of entangling it with the Hamiltonian (Sec.~\ref{ssec:pauliagain}). What succeeds is the more precise claim: complex notation can be traded for real variables, provided the underlying phase-plane structure is retained. In this sense, the imaginary unit is not best understood as a mysterious extra ingredient. It is a compact notation for a real geometric structure. The symbol \(i\) may be removed from the equations, but the symplectic and complex structure for which it stands remains part of the theory.

\appendix
\section*{Appendices}

The appendices collect technical assumptions and worked-out examples that support the main argument without being needed for its basic line. Appendix~\ref{app:hinv} states the functional-analytic conventions used for inverse Hamiltonians, zero modes, admissible Cauchy data, and sign branches.  Appendices~\ref{sec:ho_ladder_coherent_real}--\ref{app:zeeman}  provide more detailed calculations for familiar quantum systems and symmetries: oscillator coherent packets, Landau levels, sequential spin measurements, and orbital angular momentum with Zeeman splitting. These examples illustrate how standard complex structures reappear in the real phase-space formulation, but they are not required for the central reconstruction developed in the main text. In addition, we compare withe the real approaches by Bohm and Majorana.

\section{Standing assumptions on $\hat H^{-1}$ and admissible Cauchy data}
\label{app:hinv}

The reconstruction of the missing quadrature uses the inverse Hamiltonian, $\hat H^{-1}$ in the electrostatic case and $\hat H_R^{-1}$ in the magnetic case. For ease of reference, we collect here the standing assumptions under which the formulas of the main text are to be understood.

\emph{(i) Self-adjointness.} $\hat H$ is assumed to be self-adjoint on a dense domain of $L^2$ and real symmetric in the position representation. In the magnetic case, the same is assumed for $\hat H_R$, while $\hat H_I$ is real skew-adjoint (Sec.~\ref{sec:mag}); operator products such as $\hat H_R\hat H_I\hat H_R^{-1}$ in Eqs.~\eqref{eq:magnetic_gamma} and \eqref{eq:magnetic_omega} are understood on their natural domains. Self-adjointness of the lift operators also enters the uniqueness discussion of Sec.~\ref{sec:realH}.

\emph{(ii) Zero modes.} If $\hat H$ has a non-trivial kernel, $\hat H^{-1}$ is defined on the orthogonal complement of $\ker\hat H$, or equivalently, it is replaced by the Moore-Penrose pseudo-inverse \cite{benisrael2003generalized}; the Zero mode sector must then be specified separately (see the footnote in Sec.~\ref{sec:alt}). This convention also excludes the spurious solution $u_0(t)=a+bt$ with $\hat Hu_0=0$ of the reduced second-order equation.

\emph{(iii) Zero in the continuous spectrum; admissible data.} For the free particle, $0$ lies in the continuous spectrum, so $\hat H^{-1}$ is unbounded and only densely defined. The reconstruction $\pi=\hbar^2\hat H^{-1}\dot u$ then requires $\dot u\in\operatorname{Ran}\hat H$. This is not an additional hypothesis but the statement that the Cauchy data descend from a Schr\"odinger solution: by Eq.~\eqref{equ:coup1}, $\dot u=\hat Hv/\hbar$ with normalisable $v$. Data with $\dot u\in\operatorname{Ran}\hat H$ (and $\dot u\in D(\hat H^{-1})$ where required) are called \emph{admissible}. For the free particle, admissibility makes the source of the Poisson equation a total divergence, so that its monopole moment vanishes identically, $\int\dot u\,\dd^3r=0$ (Sec.~\ref{sec:non-loc-rev}).

\emph{(iv) Indefinite spectra; sign branch.} If the spectrum of $\hat H$ contains both signs, the norm criterion of Sec.~\ref{sec:realH} fixes the lift only up to a sign choice on each spectral subspace, as discussed after Eq.~\eqref{eq:KsqHsq}; the physical branch is selected by matching the first-order dynamics \eqref{equ:coup12} -- equivalently, by the spectral sign of $\hat H$ itself, not merely of $\hat H^2$. The mode-frequency bookkeeping $\omega_{n\ell}=|E_{n\ell}|/\hbar$ used in Sec.~\ref{sec:zeeman_real} implements the same selection.

\emph{(v) Time dependence.} For time-dependent potentials, the canonical equations \eqref{eq:canonical_equations} hold with $\hat H(t)$; eliminating $\pi$ reproduces Chen's Eq.~\eqref{equ:chenfull}, including the $\dot V$ term, and the reconstruction map becomes time-dependent through $\hat H(t)^{-1}$.

\emph{(vi) Reading of the main statements.} The propositions and operator identities of the main text are rigorous, without further qualification, for finite-dimensional real Hilbert spaces and, more generally, for Hamiltonians with purely discrete spectra, on the invariant core spanned by finitely many eigenfunctions; there, all operator products appearing above are defined, and the manipulations are justified term by term. For Schrödinger operators with continuous spectra, the statements are to be read on the natural domains: the flow \eqref{eq:canonical_equations} on $D(\hat H)$, the Born norm \eqref{eq:P} for $u,\pi\in L^2$, the Hamiltonian functional \eqref{eq:H_phase_space} on the form domain $u,\pi\in D(|\hat H|^{1/2})$, and the reconstruction on the admissible data of item (iii); operator products such as $\hat H_R\hat H_I\hat H_R^{-1}$ are understood as in item (i).

\section{Ladder operators and coherent packets for the one-dimensional harmonic oscillator}
\label{sec:ho_ladder_coherent_real}

The equivalence established in Sec.~\ref{ssec:equivalence} guarantees that the harmonic oscillator can be described in the real phase-space variables. The point of this example is therefore not to test the equivalence but to show how one of the most familiar uses of complex notation appears in terms of the real-valued language. For the oscillator, the spatial ladder algebra remains essentially unchanged: the Hermite-functions are real, and the usual raising and lowering operators are real differential operators. What changes is the description of time-dependent states. In particular, coherent packets are not introduced here as eigenstates of a complex annihilation operator but as phase-locked orbits of real normal modes. This makes explicit which part of the usual complex formalism is carried by the second Cauchy datum, or equivalently by the canonical momentum. We consider the standard one-dimensional oscillator,
\begin{equation}
  \hat H  =  -\frac{\hbar^2}{2M}\partial_x^2  +\frac12 M\omega^2 x^2,  \qquad
  \ell:=\sqrt{\frac{\hbar}{M\omega}}.
  \label{eq:HO_Hamiltonian}
\end{equation}

\subsection{Spatial ladder algebra}
\label{subsec:HO_ladder}

The spatial spectral theory is the textbook one and is unchanged by the real formulation. The normalised Hermite-functions
\begin{equation}
  \varphi_n(x)
  =
  \frac{1}{\pi^{1/4}\sqrt{2^n n!\,\ell}}\,
  H_n\!\left(\frac{x}{\ell}\right)
  \exp\!\left(-\frac{x^2}{2\ell^2}\right)
  \label{eq:HO_hermite_functions}
\end{equation}
are real eigenfunctions, $\hat H\varphi_n=E_n\varphi_n$ with $E_n=\hbar\omega\left(n+\tfrac12\right)$, and the ladder operators $\hat a=\tfrac{1}{\sqrt2}\bigl(x/\ell+\ell\,\partial_x\bigr)$ and $\hat a^\dagger=\tfrac{1}{\sqrt2}\bigl(x/\ell-\ell\,\partial_x\bigr)$ -- with $[\hat a,\hat a^\dagger]=1$, $\hat H=\hbar\omega\bigl(\hat a^\dagger\hat a+\tfrac12\bigr)$, $\hat a\,\varphi_n=\sqrt n\,\varphi_{n-1}$, and $\hat a^\dagger\varphi_n=\sqrt{n+1}\,\varphi_{n+1}$ -- are real differential operators. Thus, the spatial spectral theory carries over unchanged; what changes is the kinematics of the time-dependent state.

\subsection{Real mode dynamics}

Expand the real field in the Hermite basis,
\begin{equation}
  u(x,t)=\sum_{n=0}^{\infty}q_n(t)\,\varphi_n(x).
  \label{eq:HO_q_expansion}
\end{equation}
The equation $\ddot u + \hbar^{-2}\hat H^2 u = 0$ then reduces, mode-by-mode, to
\begin{equation}
  \ddot q_n+\omega^2\left(n+\frac12\right)^2 q_n=0.
  \label{eq:HO_qn_oscillator}
\end{equation}
Hence
\begin{equation}
  q_n(t)=A_n\cos\!\left[\omega\left(n+\frac12\right)t\right]
       +B_n\sin\!\left[\omega\left(n+\frac12\right)t\right].
  \label{eq:HO_qn_solution}
\end{equation}
The coefficients $(A_n,B_n)$ are simply the mode-representation of the second-order initial data $(u,\dot u)$.

The compatible canonical momentum may be expanded as
\begin{equation}
  \pi(x,t)=\hbar\sum_{n=0}^{\infty}p_n(t)\,\varphi_n(x),
  \label{eq:HO_pi_expansion}
\end{equation}
with
\begin{equation}
  p_n(t)=B_n\cos\!\left[\omega\left(n+\frac12\right)t\right]
       -A_n\sin\!\left[\omega\left(n+\frac12\right)t\right].
  \label{eq:HO_pn_solution}
\end{equation}
Therefore
\begin{equation}
  \int_{\mathbb R}\left(u^2+\frac{\pi^2}{\hbar^2}\right) \dd x
  =\sum_{n=0}^{\infty}\left(q_n^2+p_n^2\right),
  \label{eq:HO_Born_norm_modes}
\end{equation}
so each oscillator level contributes the usual Euclidean norm in its own internal plane.
\subsection{Coherent packets as phase-locked real mode orbits}
\label{subsec:HO_coherent}

Among all solutions of Eq.~\eqref{eq:HO_qn_oscillator}, one family is distinguished by two simple physical properties: its probability density remains a Gaussian of fixed width, and its centre follows the classical oscillator orbit. In the real phase-space formulation, it is natural to define this family as the coherent packet.

Introduce three real parameters: $\lambda\ge 0$, an orbital angle $\vartheta$, and an internal phase $\theta_0$.
The coherent packet is the phase-locked mode superposition
\begin{equation}
\begin{aligned}
u(x,t)
&=
e^{-\lambda/2}
\sum_{n=0}^{\infty}
\frac{\lambda^{n/2}}{\sqrt{n!}}\,
\varphi_n(x)
\\
&\quad\times
\cos\left[
\omega\left(n+\frac12\right)t
-n\vartheta-\theta_0
\right].
\end{aligned}
  \label{eq:coherent_expansion_u}
\end{equation}
The corresponding canonical momentum is
\begin{equation}
\begin{aligned}
\frac{\pi(x,t)}{\hbar}
&=
-e^{-\lambda/2}
\sum_{n=0}^{\infty}
\frac{\lambda^{n/2}}{\sqrt{n!}}\,
\varphi_n(x)
\\
&\quad\times
\sin\left[
\omega\left(n+\frac12\right)t
-n\vartheta-\theta_0
\right].
\end{aligned}
  \label{eq:coherent_expansion_w}
\end{equation}
These formulas are already sufficient to display the real content of the construction: a coherent packet is a Poisson-weighted ladder of oscillator levels whose phases are locked linearly in $n$.

It is convenient to trade $(\lambda,\vartheta)$ for the initial packet centre $x_0$ and initial classical momentum $p_0$.
\begin{equation}
  x_0=\sqrt{2\lambda}\,\ell\cos\vartheta,
  \qquad
  p_0=\sqrt{2\lambda}\,\frac{\hbar}{\ell}\sin\vartheta.
  \label{eq:coherent_x0_p0}
\end{equation}
A standard Hermite generating-function resummation then gives the closed form
\begin{equation}
  u(x,t)  = \frac{1}{\pi^{\frac14}\sqrt\ell}\,
  \exp\!\left(-\frac{(x-x_c(t))^2}{2\ell^2}\right)
  \cos\Theta(x,t),
  \label{eq:coherent_u_closed}
\end{equation}
\begin{equation}
  \frac{\pi(x,t)}{\hbar}
  =
  \frac{1}{\pi^{\frac14}\sqrt\ell}\,
  \exp\!\left(-\frac{(x-x_c(t))^2}{2\ell^2}\right)
  \sin\Theta(x,t),
  \label{eq:coherent_w_closed}
\end{equation}
where the packet centre and classical momentum are
\begin{eqnarray}
  x_c(t)&=&x_0\cos\omega t+\frac{p_0}{M\omega}\sin\omega t,
  \nonumber\\
  p_c(t)&=&p_0\cos\omega t-M\omega x_0\sin\omega t,
  \label{eq:classical_orbit_coherent}
\end{eqnarray}
and the phase is
\begin{equation}
  \Theta(x,t)=\frac{p_c(t)x}{\hbar}-\frac{x_c(t)p_c(t)}{2\hbar}-\frac{\omega t}{2}+\theta_0.
  \label{eq:coherent_phase}
\end{equation}
The density is therefore
\begin{eqnarray}
  \rho(x,t)&=&u(x,t)^2+\frac{\pi(x,t)^2}{\hbar^2}\nonumber\\
  &=&\frac{1}{\sqrt\pi\,\ell}\exp\!\left(-\frac{(x-x_c(t))^2}{\ell^2}\right).
  \label{eq:coherent_density_real}
\end{eqnarray}
It retains the ground-state width for all times and simply follows the classical orbit \eqref{eq:classical_orbit_coherent}.

Read as a genuine second-order solution, the coherent packet is fixed by its Cauchy data $u(x,0)$ and $\dot u(x,0)$; setting $t=0$ in Eqs.~\eqref{eq:coherent_expansion_u} and \eqref{eq:coherent_expansion_w} gives their mode content, and in closed form
\begin{equation}
  u(x,0)
  =
  \frac{e^{\left(-\frac{(x-x_0)^2}{2\ell^2}\right)}}{\pi^{\frac14}\sqrt\ell}\,  
  \cos\!\left(\frac{p_0 x}{\hbar}-\frac{x_0p_0}{2\hbar}+\theta_0\right).
  \label{eq:coherent_initial_u_closed}
\end{equation}
The second datum $\dot u(x,0)$ is not an accessory. It fixes the internal orientation of each normal mode and thereby determines whether the packet moves to the right, to the left, or merely oscillates in place.

The same calculation also shows that the packet remains a minimum-uncertainty state. Because the density  \eqref{eq:coherent_density_real} is Gaussian with a fixed width, one has
\begin{equation}
  \Delta x=\frac{\ell}{\sqrt2},
  \qquad
  \Delta p=\frac{\hbar}{\sqrt2\,\ell},
  \qquad
  \Delta x\,\Delta p=\frac{\hbar}{2}.
  \label{eq:coherent_min_uncertainty}
\end{equation}
Thus, the coherent packet is, in the real language, a non-dispersing Gaussian orbit that saturates the Heisenberg bound.

The lesson of this appendix is not that the oscillator provides a new proof of the real-complex equivalence. Rather, it shows how the usual coherent state is redistributed in the real phase-space description. The ladder operators still organize the spatial spectrum, but the coherent packet itself is represented as a phase-locked orbit of real normal modes. Its motion is fixed by the second Cauchy datum or, equivalently, by the canonical momentum. In this way, the familiar complex coherent-state notation is replaced by an explicitly real account of the same phase-space geometry.

\section{Landau levels in real variables} \label{sec:landau_real}
The uniform-field problem is the simplest spectral application of the magnetic reconstruction developed in Sec.~\ref{sec:mag}. It illustrates how the reduced single-field description and the real phase-space formulation complement one another. The spectrum can be obtained from real separated modes, while the reconstructed companion quadrature makes the invariant two-planes, gauge rotations, and degeneracies transparent. For Landau levels, this is especially useful because each separated mode naturally selects a two-dimensional oscillator plane.

We consider a particle of mass $M$ and charge $q$ moving in the $(x,y)$-plane in a
uniform magnetic field $\mathbf B = B\,\hat{\mathbf z}$. It is convenient to introduce
\begin{eqnarray}
  \omega_c &:= &  \frac{|q|B}{M}>0,\qquad
  s:=\operatorname{sgn}(qB)\in\{\pm 1\},\nonumber \\
  \ell_B& := & \sqrt{\frac{\hbar}{|q|B}}.
  \label{eq:landau_notation}
\end{eqnarray}
The Hamiltonian is
\begin{equation}
  \hat H = \frac{1}{2M}\bigl(-i\hbar\nabla-q\mathbf A\bigr)^2.
  \label{eq:landau_H}
\end{equation}
In a divergence-free gauge ($\nabla\cdot\mathbf A=0$), the split
$\hat H=\hat H_R+i\hat H_I$ takes the form
\begin{equation}
  \hat H_R = -\frac{\hbar^2}{2M}\nabla^2 + \frac{q^2A^2}{2M},
  \quad
  \hat H_I = \frac{\hbar q}{M}\,\mathbf A\cdot\nabla,
  \label{eq:landau_HR_HI}
\end{equation}
so that the coupled real equations are
\begin{equation}
  \hbar \dot u = \hat H_R v + \hat H_I u,\qquad
  -\hbar \dot v = \hat H_R u - \hat H_I v.
  \label{eq:landau_uv}
\end{equation}
This is the system already discussed briefly in Sec.~\ref{ssec:mag}. We use Eq.~\eqref{eq:landau_uv} only as a local bookkeeping device. The actual reduced description remains the single real field $u$ together with its second datum $\dot u$, with the missing quadrature reconstructed from Eq.~\eqref{eq:v_mag} after the mode problem has been solved.

The explicit mode algebra in the two standard gauges is textbook material and is therefore deferred to Appendix~\ref{app:landau}. There, real separated modes of Eq.~\eqref{eq:landau_uv} yield the Landau spectrum $E_n=\hbar\omega_c\left(n+\tfrac12\right)$ with its guiding-center degeneracy in the Landau gauge, Eq.~\eqref{eq:landau_levels_landau_gauge}, and the counter-rotating branches of Eq.~\eqref{eq:circular_energies_sigma}, organized by the level index \eqref{eq:N_index}, in the circular (symmetric) gauge. The remainder of this section discusses what these results mean in the reduced single-field description and in the real phase-space formulation.

\subsection{Interpretation in the real phase-space formulation}
\label{subsec:landau_interpretation_real}
The two gauges highlight complementary aspects of the real description.

In the Landau gauge, a mode with fixed \(k\) behaves as a translated harmonic oscillator in \(x\). The real field \(u\) carries the cosine component, while its second datum \(\dot u\) identifies the sine companion in the same \(k\)-plane. The guiding-center label \(k\) accounts for the macroscopic degeneracy.

In the circular gauge, for every \(m>0\), the pair \(\cos(m\phi)\) and \(\sin(m\phi)\) spans a real invariant plane. The magnetic term acts gyroscopically on this plane and splits the two possible rotation senses. The Landau-level degeneracy is encoded here in the infinitely many triples \((n_r,m,\sigma)\) yielding the same \(N\) (in complex notation, $m_\ell=\sigma m$; see Appendix~\ref{app:landau}).

This use of two real components should not be mistaken for a return to a two-field postulate. The reduced equation remains a second-order equation for the single real field \(u\). To specify a solution one must, of course, give the Cauchy data \((u,\dot u)\). In separated modes, however, it is often
clearer to display the reconstructed companion \(v\), because the invariant planes and gauge rotations are then transparent. The field \(v\) is not introduced as an independent physical field; it is obtained from the reduced Cauchy data by the magnetic-field reconstruction map. Thus the use of \((u,v)\) in this section is a coordinate choice for displaying the reconstructed real dynamics, not a return to the complex wave-function formalism.

The Landau and circular gauges are related by
\begin{equation}
  \mathbf A_L=\mathbf A_C+\nabla\chi,
  \qquad
  \chi(x,y)=\frac{Bxy}{2}.
  \label{eq:landau_circular_gauge_function}
\end{equation}
In the explicit two-quadrature representation, the corresponding gauge transformation acts locally as an \(SO(2)\) rotation,
\begin{equation}
  \begin{pmatrix}
    u'\\ v'
  \end{pmatrix}
  =
  \begin{pmatrix}
    \cos(q\chi/\hbar) & -\sin(q\chi/\hbar)\\
    \sin(q\chi/\hbar) & \cos(q\chi/\hbar)
  \end{pmatrix}
  \begin{pmatrix}
    u\\ v
  \end{pmatrix}.
   \label{eq:gauge_rotation_real}  
\end{equation}
In the reduced variables \((u,\dot u)\), this gauge covariance is hidden because the second quadrature has already been eliminated. The spectra obtained in the Landau and circular gauges (Eqs.~\eqref{eq:landau_levels_landau_gauge} and
\eqref{eq:landau_levels_circular_gauge}) are, therefore, as expected, the same; the two gauges merely display the same real phase-space dynamics in different coordinates.

\subsection{Lowest Landau level and the preferred rotation sense in the $\cos/\sin$ planes} \label{subsec:lll_real}

The lowest Landau level (LLL) is especially transparent in the real phase-space formulation because it most clearly exhibits how the magnetic field selects one preferred sense of rotation in each real angular plane.

In the Landau gauge, the LLL is especially simple. From Eq.~\eqref{eq:landau_levels_landau_gauge}, the LLL in the Landau gauge is simply the oscillator ground state $n=0$ with an arbitrary guiding-center label $k$:
\begin{equation}
  E_0=\frac{\hbar\omega_c}{2}.
\end{equation}
A convenient real representative is
\begin{equation}
\begin{aligned}
u_{0,k}(x,y,t)
&=
\frac{1}{\pi^{1/4}\sqrt{\ell_B}}\,
\exp\!\left[-\frac{(x-x_k)^2}{2\ell_B^2}\right]
\\
&\qquad\times
\cos\!\left(ky-\frac{E_0t}{\hbar}\right),
\end{aligned}
\label{eq:LLL_landau_u}
\end{equation}
with $x_k=\hbar k/(qB)$. Its reconstructed companion is
\begin{equation}
\begin{aligned}
v_{0,k}(x,y,t)
&=
\frac{1}{\pi^{1/4}\sqrt{\ell_B}}\,
\exp\!\left[-\frac{(x-x_k)^2}{2\ell_B^2}\right]
\\
&\qquad\times
\sin\!\left(ky-\frac{E_0t}{\hbar}\right),
\end{aligned}
\label{eq:LLL_landau_v}
\end{equation}
Thus, the LLL is the harmonic-oscillator ground branch, together with the continuous guiding-center degeneracy parameterized by $k$.

In the circular gauge, the same degeneracy is reorganized into angular-momentum sectors.  For each $m>0$, the real functions
\begin{equation}
    C_m(\phi):=\cos(m\phi),\qquad S_m(\phi):=\sin(m\phi)
\end{equation}
span a two-dimensional invariant plane, and the azimuthal generator acts on this plane
as
\begin{equation}
  \partial_\phi
  \begin{pmatrix}
    C_m \\ S_m
  \end{pmatrix}
  = mJ
  \begin{pmatrix}
    C_m \\ S_m
  \end{pmatrix},
  \qquad
  J:=\begin{pmatrix}0&-1\\[2pt]1&0\end{pmatrix}.
  \label{eq:phi_rotation_plane}
\end{equation}
Hence, a general real mode in this plane may be written as
\begin{equation}
  u(r,\phi,t)=f(r)\Big(a(t)\,C_m(\phi)+b(t)\,S_m(\phi)\Big),
  \label{eq:general_m_plane}
\end{equation}
with coefficient vector
\begin{equation}
    U(t):=\begin{pmatrix}a(t)\\ b(t)\end{pmatrix}.
\end{equation}
The magnetic term acts gyroscopically on $U(t)$ through the matrix $mJ$.

As discussed around Eq.~\eqref{eq:circular_energies_sigma}, each
\(m\)-plane has two counter-rotating normal modes, labeled by
\(\sigma=\pm1\) (cf.\ Fig.~\ref{fig:zeeman_mplane}). With \(s=\operatorname{sgn}(qB)\), their energies are
given by Eq.~\eqref{eq:circular_energies_sigma}. The LLL is obtained by
\(n_r=0\) and \(\sigma=s\), for which the energy collapses to
%
\begin{equation}
  E_0=\frac{\hbar\omega_c}{2},
  \label{eq:lll_energy}
\end{equation}
independent of $m$. Thus, for every $m=0,1,2,\dots$, precisely one rotation sense in the real $\bigl(C_m,S_m\bigr)$ plane belongs to the LLL; the opposite sense lies at a higher energy.

For $m\ge 0$, the normalized radial LLL profile is obtained from Eq.~\eqref{eq:fock_darwin_radial} with $n_r=0$:
\begin{equation}
  f_m^{\mathrm{LLL}}(r)
  = \frac{1}{\sqrt{2\pi\,2^m m!}\,\ell_B}
    \left(\frac{r}{\ell_B}\right)^m
    \exp\!\left(-\frac{r^2}{4\ell_B^2}\right).
  \label{eq:lll_radial}
\end{equation}
A convenient real representative of the LLL branch is then
\begin{align}
  u_m^{\mathrm{LLL}}(r,\phi,t)
  &= f_m^{\mathrm{LLL}}(r)\,
     \cos\!\left(s\,m\phi-\frac{E_0 t}{\hbar}\right),
  \label{eq:lll_u}
  \\
  v_m^{\mathrm{LLL}}(r,\phi,t)
  &= f_m^{\mathrm{LLL}}(r)\,
     \sin\!\left(s\,m\phi-\frac{E_0 t}{\hbar}\right).
  \label{eq:lll_v}
\end{align}
For $m=0$, this reduces to the rotationally invariant Gaussian.
For $m>0$, the pair
\begin{equation}
f_m^{\mathrm{LLL}}(r)\cos(m\phi),\qquad
f_m^{\mathrm{LLL}}(r)\sin(m\phi)
\end{equation}
spans a real two-dimensional angular plane, and the magnetic field selects the branch whose sense of rotation matches the sign of $qB$.

The real interpretation of the LLL is now direct. In the usual complex notation, the LLL in symmetric gauge is characterized by keeping only one sign of the azimuthal phase factor. In the present real-variable formulation, the same statement becomes: in each real $\bigl(\cos(m\phi),\sin(m\phi)\bigr)$ plane, only one of the two counter-rotating normal modes belongs to the lowest Landau level. This is the real counterpart of the familiar holomorphic restriction.

Equivalently, the full LLL is the orthogonal direct sum of the one-dimensional $m=0$ sector and, for each $m>0$, one preferred rotation branch in the two-dimensional real $m$-plane:
\begin{equation}
  \mathcal H_{\mathrm{LLL}}^{(\mathrm{real})}
  =
  \mathcal H_{0,0}^{(\mathrm{real})}
  \oplus
  \bigoplus_{m=1}^{\infty}
  \mathcal H_{0,m,\sigma=s}^{(\mathrm{real})}.
  \label{eq:lll_direct_sum}
\end{equation}
The enormous Landau degeneracy is therefore visible in the real phase-space formulation as the fact that all these preferred-rotation modes, for all $m\ge0$, share the same energy $E_0=\hbar\omega_c/2$.

The two gauges are related by a reparametrization of the same degeneracy. The continuous guiding-center index $k$ in the Landau gauge and the discrete index $m$ in the circular gauge are merely two different ways of parameterizing the same degeneracy of the lowest Landau level. Gauge transformation reorganizes the family Eq.~\eqref{eq:LLL_landau_u}--\eqref{eq:LLL_landau_v} into the angular decomposition Eq.~\eqref{eq:lll_u}--\eqref{eq:lll_v}. In the explicit two-quadrature representation, this is just the local $SO(2)$ rotation \eqref{eq:gauge_rotation_real}; in the reduced single-component description, the same statement is hidden because the companion quadrature has already been eliminated.

\section{Landau-level mode algebra in the two gauges}
\label{app:landau}

This appendix records the explicit real separated-mode algebra behind the results quoted in the previous Appendix \ref{sec:landau_real}. The starting point is the coupled real system \eqref{eq:landau_uv}.

\subsection{Landau gauge}
\label{subsec:landau_gauge_real}
Choose the Landau gauge
\begin{equation}
  \mathbf A_L=(0,Bx,0).
  \label{eq:landau_gauge}
\end{equation}
Then
\begin{eqnarray}
  \hat H_R
  &= & -\frac{\hbar^2}{2M}\left(\partial_x^2+\partial_y^2\right)
    + \frac{q^2B^2x^2}{2M},
  \nonumber \\
  \hat H_I & = & \frac{\hbar qB}{M}\,x\,\partial_y.
  \label{eq:landau_gauge_HR_HI}
\end{eqnarray}

Because $\hat H_I\propto x\partial_y$, the real plane spanned by $\cos(ky)$ and $\sin(ky)$ is invariant for every fixed $k$. We therefore seek a stationary single-real-field mode of the form
\begin{equation}
  u(x,y,t)=f(x)\cos(ky-\Omega t),
  \label{eq:landau_real_ansatz}
\end{equation}
with $f(x)$ real. Its second initial datum is
\begin{equation}
    \dot u(x,y,t)=\Omega f(x)\sin(ky-\Omega t),
\end{equation}
so the Cauchy data already span the relevant two dimensional oscillator plane. If one temporarily writes the reconstructed quadrature in the same plane as
\begin{equation}
  v(x,y,t)=f(x)\sin(ky-\Omega t),
  \label{eq:v_landau_mode}
\end{equation}
then Eq.~\eqref{eq:landau_uv} closes on this ansatz and gives
\begin{equation}
\begin{aligned}
E f(x)
&=
\Bigg[
-\frac{\hbar^2}{2M}\frac{\dd^2}{\dd x^2}
+\frac{\hbar^2 k^2}{2M}
\\
&\qquad
+\frac{q^2B^2x^2}{2M}
-\frac{\hbar qBk}{M}\,x
\Bigg] f(x),
\\
E&:=\hbar\Omega .
\end{aligned}
\label{eq:landau_shifted_pre}
\end{equation}
Completing the square with the guiding-centre coordinate
\begin{equation}
  x_k := \frac{\hbar k}{qB},
  \label{eq:xk_landau}
\end{equation}
one obtains
\begin{equation}
  \left[
    -\frac{\hbar^2}{2M}\frac{d^2}{ \dd x^2}
    + \frac12 M\omega_c^2 (x-x_k)^2
  \right]f(x)
  = E\,f(x).
  \label{eq:landau_shifted_oscillator}
\end{equation}
Thus, each fixed-$k$ sector reduces to a one-dimensional harmonic oscillator centred
at $x_k$. The normalised eigenfunctions are
\begin{equation}
\begin{aligned}
f_n(x-x_k)
&=
\frac{1}{\pi^{1/4}\sqrt{2^n n!\,\ell_B}}\,
H_n\!\left(\frac{x-x_k}{\ell_B}\right)
\\
&\qquad\times
\exp\!\left[-\frac{(x-x_k)^2}{2\ell_B^2}\right].
\end{aligned}
\label{eq:landau_fn}
\end{equation}
and the spectrum is
\begin{equation}
  E_n = \hbar\omega_c\left(n+\frac12\right),\qquad n=0,1,2,\dots
  \label{eq:landau_levels_landau_gauge}
\end{equation}
independent of $k$. The continuous label $k$ therefore encodes the macroscopic Landau degeneracy through the guiding-center position $x_k$.

In the reduced single-component language, Eq.~\eqref{eq:landau_real_ansatz} together with its time derivative already contains the full mode information. The auxiliary field \eqref{eq:v_landau_mode} is merely a convenient local representative of the missing quadrature and may be reconstructed afterward from Eq.~\eqref{eq:v_mag}. For the eigenmode \eqref{eq:landau_real_ansatz}, this reconstruction returns exactly
Eq.~\eqref{eq:v_landau_mode}. A convenient explicit basis of real Landau modes is therefore:
\begin{equation}
  u_{n,k}(x,y,t)=f_n(x-x_k)\cos\!\left(ky-\frac{E_n t}{\hbar}\right).
  \label{eq:u_landau_mode}
\end{equation}

\subsection{Circular (symmetric) gauge}
\label{subsec:circular_gauge_real}

Choose the circular gauge
\begin{equation}
  \mathbf A_C = \frac{B}{2}(-y,x,0)
              = \frac{Br}{2}\,\hat{\boldsymbol\phi}.
  \label{eq:circular_gauge}
\end{equation}
In polar coordinates $(r,\phi)$, one has
\begin{eqnarray}
  \hat H_R
 & = &  -\frac{\hbar^2}{2M}
    \left(
      \partial_r^2 + \frac{1}{r}\partial_r + \frac{1}{r^2}\partial_\phi^2
    \right)
    + \frac{M\omega_c^2 r^2}{8},
  \nonumber \\
  \hat H_I &= & \frac{\hbar qB}{2M}\,\partial_\phi.
  \label{eq:circular_HR_HI}
\end{eqnarray}
Here, the natural invariant real planes are spanned by $\cos(m\phi)$ and $\sin(m\phi)$, with $m=0,1,2,\dots$. We therefore take
\begin{equation}
  u(r,\phi,t)=f(r)\cos(m\phi-\Omega t),
  \label{eq:circular_real_ansatz}
\end{equation}
so that
\begin{equation}
    \dot u(r,\phi,t)=\Omega f(r)\sin(m\phi-\Omega t).
\end{equation}
As in the Landau gauge, the second datum identifies the companion direction in the same
real two-plane. Writing that companion explicitly as
\begin{equation}
      v(r,\phi,t)=f(r)\sin(m\phi-\Omega t),
\end{equation}
and substituting into Eq.~\eqref{eq:landau_uv}, one finds the radial equation (with \(E:=\hbar\Omega\)):
\begin{equation}
\begin{aligned}
&\Bigg[
-\frac{\hbar^2}{2M}
\left(
\frac{\dd^2}{\dd r^2}
+\frac{1}{r}\frac{\dd}{\dd r}
-\frac{m^2}{r^2}
\right)
\\
&\qquad
+\frac{M\omega_c^2 r^2}{8}
-\frac{\hbar qB}{2M}\,m
\Bigg] f(r)
=
E f(r).
\end{aligned}
\label{eq:fock_darwin_real_branch}
\end{equation}
This is the radial Fock-Darwin equation at zero scalar confinement. For \(n_r=0,1,2,\dots\) its regular solutions are
\begin{equation}
\begin{aligned}
f_{n_r,m}(r)
&=
C_{n_r,m}
\left(\frac{r}{\ell_B}\right)^m
\exp\!\left(-\frac{r^2}{4\ell_B^2}\right)
\\
&\qquad\times
L_{n_r}^{m}\!\left(\frac{r^2}{2\ell_B^2}\right).
\end{aligned}
\label{eq:fock_darwin_radial}
\end{equation}
where $L_{n_r}^{m}$ are associated Laguerre polynomials.

For each $m>0$ there are two possible senses of rotation in the real $\bigl(\cos(m\phi),\sin(m\phi)\bigr)$ plane. It is therefore convenient to label the two normal-mode branches by $\sigma=\pm 1$, corresponding to the two counter-rotating motions in that plane. In complex notation, these two branches are the eigenfunctions $e^{\pm im\phi}$ of $L_z$: the labels $(m,\sigma)$ encode the usual magnetic quantum number $m_\ell=\sigma m$, with $\sigma$ redundant at $m=0$. For \(m=0,1,2,\dots\), the energies are (with the understanding that $m=0$ has only one independent angular function):
\begin{equation}
E_{n_r,m,\sigma}
=
\hbar\omega_c
\left(
n_r+\frac{m+1}{2}
-\frac{\sigma s m}{2}
\right).
\label{eq:circular_energies_sigma}
\end{equation}
Equivalently, defining the Landau-level index
\begin{equation}
  N := n_r + \frac{m-\sigma s m}{2}
  \in \{0,1,2,\dots\},
  \label{eq:N_index}
\end{equation}
the spectrum becomes
\begin{equation}
  E_N = \hbar\omega_c\left(N+\frac12\right),
  \label{eq:landau_levels_circular_gauge}
\end{equation}
exactly as in the Landau gauge. Thus, the symmetric-gauge quantum numbers $(n_r,m,\sigma)$ merely provide a different parametrisation of the same Landau levels.


\section{Sequential spin measurements in real variables}
\label{app:seqspin}

This appendix derives the projective-measurement and sequential-measurement statistics quoted at the end of Sec.~\ref{subsec:spin_rotations_real}, using only the real algebra \eqref{eq:Sigma_anticommutation}.

\label{subsec:sequential_spin_measurements}
For a measurement along the axis $\mathbf n$, the two projectors are
\begin{eqnarray}
  P^{(\mathbf n)}_{\pm}:=\frac12\left(I_4\pm\Sigma_{\mathbf n}\right), && (P^{(\mathbf n)}_{\pm})^T=P^{(\mathbf n)}_{\pm},\nonumber\\
    (P^{(\mathbf n)}_{\pm})^2&=&P^{(\mathbf n)}_{\pm}.
  \label{eq:spin_projectors}
\end{eqnarray}
For a normalised state $\Phi$, the Born rule is therefore,
\begin{equation}
  p(\pm|\mathbf n)=\Phi^T P^{(\mathbf n)}_{\pm}\Phi
  =\frac12\Bigl(1\pm \Phi^T\Sigma_{\mathbf n}\Phi\Bigr)
  =\frac12\bigl(1\pm \mathbf n\cdot\mathbf s\bigr).
  \label{eq:born_spin_real}
\end{equation}
Upon obtaining outcome $s\in\{+1,-1\}$, the post-measurement state is
\begin{equation}
  \Phi\longmapsto \Phi^{(\mathbf n)}_s
  :=\frac{P^{(\mathbf n)}_s\Phi}{\sqrt{\Phi^T P^{(\mathbf n)}_s\Phi}}.
  \label{eq:luders_update_real}
\end{equation}

Now consider two successive Stern-Gerlach measurements, first along $\mathbf n$ with outcome $s$, then along $\mathbf m$ with outcome $t$.  The conditional probability for the second result is
\begin{equation}
  p(t|\mathbf m;\,s,\mathbf n)
  =\frac{\Phi^T P^{(\mathbf n)}_s P^{(\mathbf m)}_t P^{(\mathbf n)}_s\Phi}
         {\Phi^T P^{(\mathbf n)}_s\Phi}.
  \label{eq:conditional_seq}
\end{equation}
Using only the algebra \eqref{eq:Sigma_anticommutation}, one finds
\begin{equation}
  P^{(\mathbf n)}_s P^{(\mathbf m)}_t P^{(\mathbf n)}_s
  =\frac12\bigl(1+s\,t\,\mathbf n\!\cdot\!\mathbf m\bigr)P^{(\mathbf n)}_s.
  \label{eq:PPP_identity}
\end{equation}
Hence
\begin{equation}
  p(t|\mathbf m;\,s,\mathbf n)=\frac12\bigl(1+s\,t\,\mathbf n\!\cdot\!\mathbf m\bigr).
  \label{eq:transition_probability}
\end{equation}
If $\theta$ is the angle between the two axes, this becomes
\begin{eqnarray}
  p(t=s|\mathbf m;\,s,\mathbf n)=\cos^2\!\frac{\theta}{2},\nonumber\\
  p(t=-s|\mathbf m;\,s,\mathbf n)=\sin^2\!\frac{\theta}{2}.
  \label{eq:cos2_sin2}
\end{eqnarray}
The familiar sequential-measurement law is thus recovered without ever leaving the real-variable formalism.

As a simple three-step example, measuring along $\mathbf n$, then $\mathbf m$, and finally again along $\mathbf n$ gives
\begin{equation}
  p(\text{final }=s\mid\text{initial }=s)
  =\frac12\bigl(1+(\mathbf n\!\cdot\!\mathbf m)^2\bigr),
  \label{eq:nmn_return_probability}
\end{equation}
which displays the disturbance produced by an intermediate measurement in a non-commuting basis.

\section{Orbital angular momentum and the Zeeman effect in the real phase-space formulation} \label{sec:zeeman_real}
 \label{app:zeeman}
 This appendix discusses orbital angular momentum as the spatial analogue of the
spin example treated in the main text. In the standard complex formulation, \(L_z\)-eigenstates are labeled by a magnetic quantum number \(m\), and the sign of \(m\) is encoded in the phase factor \(e^{im\phi}\). In a real single-field description, this phase factor is not available as a one-dimensional complex label. The question is therefore where the sign of orbital angular momentum is stored.

The answer is parallel to the momentum and spin cases. Rotational symmetry is already present in the reduced second-order equation, but a definite sense of angular motion is not contained in \(u\) alone. For \(m>0\), the real spherical harmonics \(\cos(m\phi)\) and \(\sin(m\phi)\) span an invariant real two-plane; the second Cauchy datum selects a sense of rotation within this plane. A weak magnetic field then resolves the two counter-rotating motions, giving the usual orbital Zeeman splitting.

\subsection{Rotations of the real second-order equation}
\label{subsec:rotations_real}
For a central potential $V(r)$ and a vanishing vector potential, the reduced real equation is
\begin{equation}
  \ddot u + \hbar^{-2}\hat H_0^{2}u = 0,
  \qquad
  \hat H_0 = -\frac{\hbar^2}{2M}\nabla^2 + V(r).
  \label{eq:H0_central_real}
\end{equation}
Because $\hat H_0$ commutes with spatial rotations, Eq.~\eqref{eq:H0_central_real} is invariant under $u(\mathbf r,t)\mapsto u(R^{-1}\mathbf r,t)$.  The infinitesimal generators acting on real scalar fields are the real differential operators
\begin{eqnarray}
  D_x &:=& y\partial_z-z\partial_y,\qquad
  D_y := z\partial_x-x\partial_z,\nonumber\\
  D_z &:=& x\partial_y-y\partial_x = \partial_\phi .
  \label{eq:D_generators}
\end{eqnarray}
They are skew-adjoint and satisfy
\begin{equation}
  [D_i,D_j] = -\varepsilon_{ijk}D_k,
  \label{eq:so3_D}
\end{equation}
and generate the ordinary rotation group on the Cauchy data $(u,\dot u)$.

At this level, nothing like a self-adjoint one-component $L_z$ has yet appeared.  The reduced real equation knows only that rotations act orthogonally on the space of initial data.  A definite sense of azimuthal motion is encoded not in $u$ alone but in the correlation between $u$ and the second datum $\dot u$.

What survives as a genuine real symmetric operator on the configuration space is the quadratic Casimir
\begin{equation}
  L^2 := -\hbar^2\bigl(D_x^2+D_y^2+D_z^2\bigr)
       = -\hbar^2\Delta_{S^2}.
  \label{eq:L2_real}
\end{equation}
Accordingly, the natural angular basis of the real second-order theory is organized by eigenfunctions of $L^2$, while the information ordinarily attached to the sign of $m$ is shifted into the dynamics inside degenerate real two-planes.

\subsection{Real spherical harmonics and the $m$-planes}
\label{subsec:real_spherical_harmonics}
For $m=0$, one may take the usual real spherical harmonic $Y_{\ell 0}=N_{\ell 0}\,P_\ell(\cos\theta)$; for $m=1,\dots,\ell$, a convenient real basis is
\begin{eqnarray}
  Y_{\ell m}^{c}&:= &N_{\ell m}\,P_\ell^m(\cos\theta)\cos(m\phi),
  \nonumber \\
  Y_{\ell m}^{s} & := & N_{\ell m}\,P_\ell^m(\cos\theta)\sin(m\phi),
  \label{eq:real_harmonics_def}
\end{eqnarray}
replacing the complex pair $e^{\pm im\phi}$. These are eigenfunctions of the real symmetric Casimir $L^2$ with the usual eigenvalue $\hbar^2\ell(\ell+1)$, and the azimuthal generator maps each pair into itself, $D_zY_{\ell m}^{c}=-m\,Y_{\ell m}^{s}$ and $D_zY_{\ell m}^{s}=m\,Y_{\ell m}^{c}$. Hence, for each $m>0$, the span $\mathcal H_{\ell m}^{(\mathrm{real})}:=\mathrm{span}\{Y_{\ell m}^{c},Y_{\ell m}^{s}\}$ is a two-dimensional invariant plane for rotations about the $z$-axis, on which, in the ordered basis $(Y_{\ell m}^{c},Y_{\ell m}^{s})^T$,
\begin{equation}
  D_z\big|_{\mathcal H_{\ell m}^{(\mathrm{real})}}
  = mJ,
  \quad
  J:=\begin{pmatrix}0&-1\\[2pt]1&0\end{pmatrix},
  \quad J^2=-{\mathbb 1}.
  \label{eq:Dz_m_plane}
\end{equation}
The label $m$ is thus retained geometrically: it is the angular velocity with which the pair rotates under a physical $z$-rotation.

Now fix an eigenstate of the central Hamiltonian $\hat H_0$ with radial profile $R_{n\ell}(r)$. For $m>0$, a general real solution within this orbital sector is
\begin{eqnarray}
  u(\mathbf r,t)
  & = & R_{n\ell}(r)\bigl[a(t)\,Y_{\ell m}^{c}(\theta,\phi)+b(t)\,Y_{\ell m}^{s}(\theta,\phi)\bigr],
  \nonumber \\
  U(t)& := & \binom{a(t)}{b(t)},
  \label{eq:real_orbital_mode_ansatz}
\end{eqnarray}
and projecting Eq.~\eqref{eq:H0_central_real} onto the invariant plane gives
\begin{equation}
  \ddot U + \omega_{n\ell}^{2}U = 0,
  \qquad
  \omega_{n\ell}:=\frac{|E_{n\ell}|}{\hbar}.
  \label{eq:orbital_plane_oscillator}
\end{equation}
The two distinguished normal motions are the counter-rotating solutions
\begin{equation}
  U_{\pm}(t)=R(\pm\omega_{n\ell}t)U_0,
  \quad
  R(\alpha):=\cos\alpha\,I_2+\sin\alpha\,J,
  \label{eq:orbital_plane_rotating_modes}
\end{equation}
with initial angular velocity $\dot U_{\pm}(0)=\pm\omega_{n\ell}J U_0$.

This is the key real-space picture.  The pair of states that would ordinarily be labeled by $\pm m$ is replaced by a single real $m$-plane, together with a choice of initial angular velocity in that plane.

\subsection{Orbital angular momentum on the real phase-space}
\label{subsec:zeeman_hamiltonian_real}

To attach an orbital angular momentum to a reduced real state, one must, exactly as in the translation problem, restore the canonical momentum
\begin{equation}
  \pi = \hbar^2\hat H_0^{-1}\dot u
  \label{eq:pi_orbital_section}
\end{equation}
on the subspace where $\hat H_0$ is invertible.  Introduce the real phase-space field
\begin{equation}
    \Phi:=\binom{u}{\pi/\hbar}.
\end{equation} 
The natural rotational functional is then
\begin{equation}
\begin{aligned}
\langle L_z\rangle
&:=
-\hbar\,\Omega\bigl(\Phi,D_z\Phi\bigr)
\\[-0.2em]
&=
\int_{\mathbb R^3}\dd^3r\,
\bigl(u\,D_z\pi-\pi\,D_z u\bigr).
\end{aligned}
\label{eq:Lz_functional_real}
\end{equation}
Whenever $[\hat H_0,D_z]=0$, this quantity is conserved by the real Hamiltonian flow. It is the direct rotational analogue of the momentum functional Eq.~\eqref{eq:momentum_functional}.

For the rotating normal modes, the result is transparent.  Let
\begin{equation}
  u_{\pm}(\mathbf r,t)
  = f_{n\ell m}(r,\theta)\cos\!\left(m\phi \mp \omega_{n\ell}t+\alpha\right),
  \label{eq:orbital_real_rotating_mode}
\end{equation}
with $f_{n\ell m}(r,\theta)=R_{n\ell}(r)N_{\ell m}P_\ell^m(\cos\theta)$ normalized in the usual way.  On the branch connected to positive $E_{n\ell}$, the canonical momentum is
\begin{equation}
  \pi_{\pm}(\mathbf r,t)
  = \pm\hbar\,f_{n\ell m}(r,\theta)\sin\!\left(m\phi \mp \omega_{n\ell}t+\alpha\right).
  \label{eq:orbital_real_rotating_pi}
\end{equation}
Substituting into Eq.~\eqref{eq:Lz_functional_real} gives
\begin{equation}
  \langle L_z\rangle_{\pm} = \pm \hbar m
  \label{eq:orbital_Lz_pm}
\end{equation}
per unit norm. Thus, the sign of the orbital angular momentum is encoded in the real phase-space formulation by the sense in which the solution rotates inside the $(\cos m\phi, \sin m\phi)$ plane. For $m=0$, the plane collapses to a line, and no such distinction remains.

\subsection{Real form of the Zeeman splitting}
\label{subsec:real_zeeman_splitting}

\begin{figure}[!htbp]
 \centering
\begin{tikzpicture}[>=Stealth,scale=0.78]
  \draw[->] (-1.8,0) -- (1.8,0) node[below] {$a$};
  \draw[->] (0,-1.8) -- (0,1.8) node[left] {$b$};
  \draw[->,thick] (-15:1.3) arc (-15:225:1.3);
  \node[font=\footnotesize] at (0,2.15) {$m_\ell{=}{+}m$:\ $\Omega_+$};
  \draw[->,thick] (200:0.8) arc (200:-40:0.8);
  \node[font=\footnotesize] at (0,-2.15) {$m_\ell{=}{-}m$:\ $\Omega_-$};
  \draw (3.2,0) -- (4.4,0);
  \node[above,font=\footnotesize] at (3.8,0) {$E_{n\ell}$};
  \draw (5.4,0.9) -- (6.6,0.9);
  \draw (5.4,-0.9) -- (6.6,-0.9);
  \draw[dashed,gray] (4.4,0) -- (5.4,0.9);
  \draw[dashed,gray] (4.4,0) -- (5.4,-0.9);
  \node[above,font=\footnotesize] at (6.0,0.9) {$E_{n\ell}+\hbar\omega_L m$};
  \node[below,font=\footnotesize] at (6.0,-0.9) {$E_{n\ell}-\hbar\omega_L m$};
  \node[font=\footnotesize] at (3.8,-2.15) {$B=0$};
  \node[font=\footnotesize] at (6.0,-2.15) {$B>0$};
\end{tikzpicture}
\caption{Real form of the orbital Zeeman effect in an $(n,\ell,m)$ plane with $m>0$, drawn for an electron ($\omega_L=eB/2M>0$). Left: the two counter-rotating normal modes in the coefficient plane $(a,b)$ of $(\cos m\phi,\sin m\phi)$, Eq.~\eqref{eq:general_m_plane}. With $u=\operatorname{Re}\psi$, counterclockwise motion corresponds to $m_\ell=+m$ and rotates at $\Omega_+=\omega_{n\ell}+\omega_L m$, clockwise motion to $m_\ell=-m$ at $\Omega_-=\omega_{n\ell}-\omega_L m$, Eq.~\eqref{eq:zeeman_frequencies}. Right: the field resolves the degenerate real $m$-doublet into the branches $E_{n\ell}\pm\hbar\omega_L m$, Eq.~\eqref{eq:orbital_zeeman_shift}; the $m=0$ sector is unsplit.}
\label{fig:zeeman_mplane}
\end{figure}

We now place the same central problem in a weak uniform magnetic field
$\mathbf B=B\hat{\mathbf z}$. In the symmetric gauge $\mathbf A=\tfrac12\,\mathbf B\times\mathbf r$ (the circular gauge of Appendix~\ref{app:landau}), the magnetic field enters the reduced real equation through the gyroscopic term derived in Sec.~\ref{sec:landau_real}.  To linear order in $B$, one may neglect the diamagnetic piece and keep only
\begin{equation}
  \hat H_I = -\hbar\omega_L D_z,
  \qquad
  \omega_L := -\frac{qB}{2M}.
  \label{eq:orbital_zeeman_term}
\end{equation}
Because both $\hat H_0$ and $D_z$ preserve the $(n,\ell,m)$ plane, the general magnetic second-order equation \eqref{eq:gyro_form} reduces on that plane to
\begin{equation}
  \ddot U + 2\omega_L m\,J\dot U
  + \bigl(\omega_{n\ell}^{2}-\omega_L^{2}m^{2}\bigr)U = 0,
  \label{eq:gyroscopic_m_doublet}
\end{equation}
again with $U=(a,b)^T$.  The magnetic field, therefore, does not assign a new real eigenvalue to a one-component $L_z$ operator.  It lifts the degeneracy by acting as a gyroscopic coupling on the pre-existing real $m$-plane.

Seeking uniformly rotating solutions $U(t)=R(\Omega t)\,U_0$, with $R(\alpha)$ as defined in Eq.~\eqref{eq:orbital_plane_rotating_modes}, gives $\bigl(\Omega+\omega_L m\bigr)^2=\omega_{n\ell}^2$; therefore, the two rotation senses acquire the frequencies
\begin{equation}
  \Omega_{\pm}=\omega_{n\ell}\pm\omega_L m.
  \label{eq:zeeman_frequencies}
\end{equation}
Restoring the sign of the unperturbed level by selecting the branch that reproduces the spectral sign of $\hat H_0$, as discussed in Sec.~\ref{ho} and after Eq.~\eqref{eq:KsqHsq}, the corresponding energy branches are:
\begin{equation}
  E_{n\ell m}^{(\pm)} = E_{n\ell}\pm \hbar\omega_L m.
  \label{eq:orbital_zeeman_shift}
\end{equation}
For an electron ($q=-e$, so that $\omega_L=eB/2M>0$ and $\hbar\omega_L=\mu_B B$), this is the usual linear orbital shift $\Delta E = \mu_B B\,m$.  The $m=0$ sector has no linear splitting, while every $m>0$ real doublet separates into the two counter-rotating branches that, in the standard bookkeeping, are associated with magnetic quantum numbers $\pm m$. Figure~\ref{fig:zeeman_mplane} illustrates both the counter-rotating $m$-plane modes and the resulting level scheme.

\subsection{Including spin-$\tfrac12$}
\label{subsec:spin_zeeman_real}

Once the internal two-state degree of freedom of Sec.~\ref{sec:spin_half_real} is included, the magnetic field acts in two distinct places: the orbital motion is split by Eq.~\eqref{eq:orbital_zeeman_shift}, while the internal spin sector acquires the real symmetric Zeeman term $\hat H_Z^{(\mathrm{spin})}\ \widehat{=}\ g_s\mu_B B\,\Sigma_z/2$, with $\Sigma_z$ as in Sec.~\ref{subsec:real_pauli}, which shifts the two spin channels by $g_s\mu_B B\,m_s$ and $m_s=\pm\tfrac12$. In the weak-field regime, one therefore obtains
\begin{equation}
  \Delta E = \hbar\omega_L m + g_s\mu_B B\,m_s ,
  \label{eq:orbital_plus_spin_shift}
\end{equation}
i.e., $\Delta E=\mu_B B\,(m+g_s m_s)$ for an electron. If spin-orbit coupling is present, one diagonalizes the corresponding finite-dimensional real invariant subspace of total angular momentum and again recovers the standard Land\'e form $\Delta E=\mu_B\,g_j\,m_j\,B$.

\subsection{What the \(m\)-planes encode}
\label{subsec:zeeman_interpretation}

This section displays, in a compact atomic setting, the same lesson already encountered for Landau levels and for spin.  Rotational symmetry is fully present in the reduced real equation, but its bookkeeping is different.  The one-dimensional $L_z$ eigenstates of the conventional language are replaced by real invariant $m$-planes. The second initial datum chooses a rotation sense inside such a plane, and that choice is exactly what carries the sign of the orbital angular momentum.

The Zeeman effect then becomes especially transparent.  A magnetic field along the $z$-axis does not create angular momentum out of nothing, nor does it force one to return to a complex one-component description.  It simply removes the degeneracy between the two counter-rotating motions already present in each real $m>0$ plane. Orbital and spin splitting are treated on the same footing: in both cases, the field resolves an underlying real two-dimensional rotational structure.

\section{Relation to the de Broglie-Bohm theory and the quantum potential}
A useful comparison is provided by the Madelung and de Broglie-Bohm rewriting of the Schrödinger equation. This is again a real-variable description, but of a different kind from the reduced single-field formulation studied above. The de Broglie-Bohm theory does not eliminate one quadrature in favor of a second-order equation for \(u\). Instead, it rewrites the two local quadratures in polar variables, namely the density \(\rho\) and phase \(S\).

This comparison is instructive because it displays a complementary trade-off. In the local real phase-space variables, the passage from \((u,\pi/\hbar)\) to \((\sqrt{\rho},S/\hbar)\) is just a change of coordinates on the same internal plane. If one insists on the reduced variables \((u,\dot u)\), however, the same polar quantities become non-local functionals because the missing quadrature must first be reconstructed by \(\pi=\hbar^2\hat H^{-1}\dot u\).

We therefore start from the local canonical variables \((u,\pi)\), and only
afterward reduce to the single real field \(u\). At each point in space, the pair $\Phi=(u,\pi/\hbar)^T$ forms a two-component real vector whose Euclidean norm is precisely the Born density, Eq.~\eqref{eq:prob_density_local}. Away from nodes, where $\rho(\bm r,t)>0$, one may therefore introduce polar coordinates on this internal plane by
\begin{subequations}
\label{eq:bohm_polar_from_upi}
\begin{align}
 u(\bm r,t)
 &=
 \sqrt{\rho(\bm r,t)}\,\cos\theta(\bm r,t),
 \\
 \frac{\pi(\bm r,t)}{\hbar}
 &=
 \sqrt{\rho(\bm r,t)}\,\sin\theta(\bm r,t),
 \\
 S(\bm r,t)&=\hbar\,\theta(\bm r,t).
\end{align}
\end{subequations}
Thus, the polar variables of Bohm's or Madelung's formulation \cite{Bohm1952,madelung27} are nothing but non-linear coordinates on the same local phase-space plane. The change of variables is, in fact, canonical up to normalisation, $\dd u\wedge\dd\pi=\tfrac12\,\dd\rho\wedge\dd S$, so that density and phase form a conjugate pair on the same phase space; the starting point of the exact-uncertainty and configuration-space-ensemble approaches to reconstructing the Schrödinger equation \cite{hall2002exact}. In particular, the real current from Eq.~\eqref{eq:prob_current_local} becomes
\begin{equation}
\bm j
=
\frac{1}{m}\Big(u\nabla\pi-\pi\nabla u\Big)
=
\rho\,\frac{\nabla S}{m},
\label{eq:bohm_current_from_upi}
\end{equation}
so the de Broglie-Bohm velocity field (``guidance equation'') is simply
\begin{equation}
\bm v_B = \frac{\bm j}{\rho} = \frac{\nabla S}{m}.
\label{eq:bohm_velocity_from_upi}
\end{equation}
At the level of the local first-order theory, therefore, the present real phase-space formulation and the de Broglie-Bohm one encode the same two quadratures, but in different coordinates: Cartesian coordinates $(u,\pi/\hbar)$ and polar coordinates $(\sqrt{\rho},S/\hbar)$.

Substituting the polar parametrisation \eqref{eq:bohm_polar_from_upi} into the Schr\"odinger dynamics for $\hat H=-\frac{\hbar^2}{2m}\nabla^2+V$ yields the two equations that were the starting point of Bohm's 1952 work \cite{Bohm1952}: the Madelung continuity equation $\partial_t\rho+\nabla\cdot(\rho\,\nabla S/m)=0$ and the quantum Hamilton-Jacobi equation $\partial_t S+(\nabla S)^2/(2m)+V+Q=0$, with the quantum potential $Q=-(\hbar^2/2m)\,\nabla^2\sqrt{\rho}/\sqrt{\rho}$. Supplemented with the guidance law \eqref{eq:bohm_velocity_from_upi}, these equations assign trajectories to particles \cite{Bohm1952}. In this form, the density $\rho$ and the velocity field $\bm v_B$ are local; the price is that the evolution in $(\rho,S)$ is non-linear and becomes singular on the nodal set $\rho=0$.

The contrast with the present real reduction appears when one insists on describing the dynamics by the single field $u$ and its second initial datum $\dot u$. The second quadrature is then no longer an independent local field but must be reconstructed through the non-local map \eqref{eq:pi_canonical}; consequently, $\rho=u^2+\pi^2/\hbar^2=u^2+(\hbar\hat H^{-1}\dot u)^2$ and, away from nodes,
\begin{equation}
S=\hbar\,\arctan\left(\frac{\pi/\hbar}{u}\right)=\hbar\,\arctan\left(\frac{\hbar\hat H^{-1}\dot u}{u}\right).
\label{eq:bohm_phase_from_udot}
\end{equation}
Hence, the reduced real description keeps the equation of motion linear but transfers part of the structure into the non-local relation between $u$ and its canonically conjugate quadrature.

The comparison may therefore be stated succinctly. The de Broglie-Bohm theory maintains the amplitude and phase locally at the cost of rewriting the wave dynamics in non-linear polar variables and, for many-body systems, makes the dependence on the full configuration-space wave function explicit. The present second-order reformulation proceeds in the opposite direction: it preserves a linear real wave equation for a single field, but the Born density and internal phase become non-local functionals when expressed solely in terms of $(u,\dot u)$. The standard complex Schrödinger equation avoids both complications only because it keeps both quadratures on equal footing from the outset.

This polar-coordinate comparison concerns another real rewriting of the same complex Schrödinger dynamics. We finally turn to two different uses of real structures that might otherwise be confused with the present construction: operational real Hilbert-space quantum theory and the Majorana-Dirac real formulation.

\section{The Majorana-Dirac real formulation}
 \label{sec:majorana_dirac_comparison}
The preceding subsections concerned operational real Hilbert-space quantum theory and the recent no-go results. A different comparison is provided by the Majorana formulation of relativistic spin-\(\frac12\) theory \cite{Aste2010,Majorana1937,Pal2011}. It is useful to distinguish this real formulation from the non-relativistic reconstruction developed here.

A Majorana fermion is defined by the Lorentz-covariant Majorana condition
\begin{equation}
  \Psi^c=\Psi ,
\end{equation}
where \(\Psi^c\) denotes the charge-conjugated spinor. In a Majorana representation, this condition can be expressed by choosing the spinor components to be real-valued. The fact that the components are real-valued, however, is a feature of that representation, not the invariant definition of a Majorana fermion.

In a Majorana representation, one may choose the Dirac matrices such that
\begin{equation}
  (\tilde\gamma^\mu)^*=-\tilde\gamma^\mu,
  \label{eq:Majorana_basis_gamma}
\end{equation}
so that $i\tilde\gamma^\mu$ has real-valued entries. The free Dirac equation
\begin{equation}
  (i\tilde\gamma^\mu\partial_\mu-m)\tilde\Psi=0
  \label{eq:Majorana_real_first_order}
\end{equation}
is then a real first-order linear system, and in this basis, the Majorana condition simply becomes
\begin{equation}
  \tilde\Psi^*=\tilde\Psi.
  \label{eq:Majorana_basis_reality}
\end{equation}

This should be contrasted with the present non-relativistic construction. Here one begins with the decomposition $\psi=u+iv$, which yields a real first-order Hamiltonian flow for the pair $(u,v)$. Only after eliminating one quadrature does one arrive at a single real equation
\begin{equation}
  \ddot u+\hbar^{-2}\hat H^{\,2}u=0,
\end{equation}
together with the reconstruction $v=\hbar \hat H^{-1}\dot u$.

Thus, the single-component real Schrödinger reformulation is second order in time and, at the level of the Born norm, generically non-local in the reduced variables $(u,\dot u)$. By contrast, the Majorana equation remains first order and local; what replaces the explicit imaginary unit is not a non-local reconstruction map but the real Clifford algebra carried by the matrices $\tilde\gamma^\mu$. This is the strategy developed systematically in the geometric-algebra tradition, where the unit imaginary of the Dirac and Pauli theories is replaced throughout by real bi-vectors with direct geometric meaning \cite{hestenes1967real}; like the present reconstruction, it removes the symbol $i$ by \emph{retaining} real structure rather than eliminating it.

In this sense, the Majorana construction is much closer to the \emph{two-quadrature} real-variable formulation than to the \emph{single-component real} reduction. Indeed, a Dirac field can be decomposed into two Majorana fields:
\begin{equation}
  \Psi_D=\frac{1}{\sqrt2}\bigl(\Psi_1+i\Psi_2\bigr),
  \qquad
  \Psi_1^c=\Psi_1,\quad \Psi_2^c=\Psi_2,
  \label{eq:Dirac_as_two_Majoranas}
\end{equation}
with inverse relations
\begin{equation}
  \Psi_1=\frac{1}{\sqrt2}\bigl(\Psi_D+\Psi_D^c\bigr),
  \qquad
  \Psi_2=\frac{1}{i\sqrt2}\bigl(\Psi_D-\Psi_D^c\bigr).
  \label{eq:Majorana_from_Dirac}
\end{equation}
Hence, a charged complex Dirac field is equivalent to a doubled real description, just as in the non-relativistic theory, the pair $(u,v)$ provides the natural real phase-space description of a complex wave function.
 
The comparison becomes particularly sharp in the presence of electromagnetism. For a charged Dirac field, one has
\begin{equation}
  \left(
    i\gamma^\mu(\partial_\mu+i e A_\mu)-m
  \right)\Psi=0 .
\end{equation}
The ordinary local \(U(1)\) gauge transformation
\begin{equation}
  \Psi(x)\mapsto e^{-ie\eta(x)}\,\Psi(x)
\end{equation}
does not preserve the Majorana condition \(\Psi^c=\Psi\), unless the electric charge \(e\) vanishes. Equivalently, a truly Majorana field is necessarily electrically neutral \cite{Aste2010,Pal2011}. A charged Dirac field can therefore be written in terms of two Majorana fields, but the electromagnetic coupling rotates these two real fields into one another rather than preserving either one separately. This parallels the non-relativistic situation discussed above: for Hamiltonians that are real in the position representation, one can reduce the dynamics to a single real field. However, once ordinary magnetic coupling is present, the strictly one-field real-variable formulation is obstructed, and one is driven back to a doubled real structure with gyroscopic couplings.

The comparison, therefore, suggests a precise conclusion. The Majorana formulation does not show that quantum theory can generically be reduced to a single real wave function. Rather, it shows that in special relativistic systems of neutral spin-\(\frac12\) particles, the explicit imaginary unit may be absorbed into a real matrix representation of the Clifford algebra. In the present non-relativistic setting, the corresponding structure is carried instead by the pair of real quadratures or, after reduction, by the non-local phase-space metric. The broader lesson is the same in both cases: the explicit symbol \(i\) can be hidden, but the algebraic and symplectic structure it encodes must reappear somewhere.

\section*{Acknoledgement}  We thank Christian H\"olbling for his collaboration during the early phase of this project, in particular for his part in the critical analysis of the reduced single-field formulation -- the treatment of magnetic coupling and the status of the probability density -- from which the questions addressed in this paper emerged.

\section*{Declaration of competing interest}
The authors declare that they have no known competing financial interests or personal relationships that could have appeared to influence the work reported in this paper.
\section*{Data availability}
No new data were generated or analysed in this study.
\section*{Funding}
This research did not receive any specific grant from funding agencies in the public, commercial, or not-for-profit sectors.
\section*{Author contributions}
Oliver Passon conceived and initiated the project. He carried out the historical work on Schrödinger, Ehrenfest, Pauli, and Bohm, analyzed Chen's real-valued formulation, and identified the difficulties of the reduced description that this paper addresses, namely the treatment of magnetic coupling, the status of the probability density and of momentum, and the sensitivity to the choice of the energy zero. He also developed the analysis of time-reversal symmetry and Kramers degeneracy in the real formulation. Bernd Rosenow clarified the operator-level and state-level non-locality of the reconstruction map, developed the Hamiltonian lift and the real phase-space formulation together with the gauging of the internal SO(2) symmetry, and worked out the applications and examples, from wave packets and Landau levels to spin and composite systems, together with their bearing on the recent no-go debate. The framing of the paper, the interpretation of the results, and the revisions were developed in close collaboration. Both authors contributed to the manuscript, reviewed the complete text, and approved the final version.
\section*{Declaration of generative AI and AI-assisted technologies in the writing process}
During the preparation of this work, the authors used ChatGPT 5.4 and Fable 5 to assist with performing calculations, drafting text, and generating LaTeX code for the illustrative diagrams. After using these tools, the authors rigorously reviewed and verified all calculations, compiled the code locally, and edited the text as needed. The authors take full responsibility for the content and accuracy of the final publication.


\bibliographystyle{cas-model2-names}
\bibliography{references}

\end{document}